\documentclass[letterpaper,11pt]{article}

\usepackage{times}
\usepackage{fullpage}
\usepackage{microtype}
\usepackage{amsmath,amsfonts,amssymb}
\usepackage[amsmath,amsthm,thmmarks,hyperref]{ntheorem}
\usepackage{mathtools}
\usepackage{url}
\usepackage{enumitem}
\usepackage{xspace}
\usepackage{hyperref}
\usepackage{algorithm}
\usepackage[noend]{algorithmic}

\newcommand{\N}{\ensuremath{\mathbb{N}}}

\newcommand{\R}{\ensuremath{\mathbb{R}}}

\newcommand{\Z}{\ensuremath{\mathbb{Z}}}

\newcommand{\pr}[2]{\left\langle#1, #2\right\rangle}
\newcommand{\inner}[1]{\langle{#1}\rangle}

\newcommand{\set}[1]{\{{#1}\}}

\newcommand{\round}[1]{\lfloor{#1}\rceil}

\newcommand{\length}[1]{\lVert{#1}\rVert}

\newcommand{\veca}{\ensuremath{\mathbf{a}}}
\newcommand{\vecb}{\ensuremath{\mathbf{b}}}

\newcommand{\vece}{\ensuremath{\mathbf{e}}}

\newcommand{\vecs}{\ensuremath{\mathbf{s}}}
\newcommand{\vect}{\ensuremath{\mathbf{t}}}
\newcommand{\vecu}{\ensuremath{\mathbf{u}}}
\newcommand{\vecv}{\ensuremath{\mathbf{v}}}
\newcommand{\vecw}{\ensuremath{\mathbf{w}}}
\newcommand{\vecx}{\ensuremath{\mathbf{x}}}
\newcommand{\vecy}{\ensuremath{\mathbf{y}}}
\newcommand{\vecz}{\ensuremath{\mathbf{z}}}
\newcommand{\veczero}{\ensuremath{\mathbf{0}}}

\theoremstyle{plain}            %
\newtheorem{theorem}{Theorem}[section]
\newtheorem{lemma}[theorem]{Lemma}
\newtheorem{corollary}[theorem]{Corollary}
\newtheorem{proposition}[theorem]{Proposition}
\newtheorem{claim}[theorem]{Claim}

\theoremstyle{definition}       %
\newtheorem{definition}[theorem]{Definition}

\theoremstyle{remark}           %
\newtheorem{remark}[theorem]{Remark}

\numberwithin{equation}{section}
\DeclareMathOperator{\poly}{poly}
\DeclareMathOperator{\polylog}{polylog}
\DeclareMathOperator{\negl}{negl}

\DeclareMathOperator*{\E}{\mathbb{E}}

\newcommand{\problem}[1]{\ensuremath{\mathsf{#1}}\xspace}

\newcommand{\class}[1]{\ensuremath{\mathsf{#1}}\xspace}

\newcommand{\NP}{\class{NP}}
\newcommand{\coNP}{\class{coNP}}
\newcommand{\AM}{\class{AM}}
\newcommand{\coAM}{\class{coAM}}

\newcommand{\SZK}{\class{SZK}}

\newcommand{\HVSZK}{\class{HVSZK}}

\newcommand{\lat}{\mathcal{L}}
\DeclareMathOperator{\vol}{vol}
\DeclareMathOperator{\dist}{dist}

\newcommand{\gapsvp}{\problem{GapSVP}}

\newcommand{\gapcvp}{\problem{GapCVP}}

\newcommand{\gapsp}{\problem{GapSPP}}
\newcommand{\gapspp}{\problem{GapSPP}}
\newcommand{\cogapsp}{\problem{coGapSP}}

\newif\ifnotes\notesfalse

\ifnotes
\usepackage{color}
\definecolor{mygrey}{gray}{0.50}
\newcommand{\notename}[2]{{\textcolor{mygrey}{\footnotesize{\bf (#1:} {#2}{\bf ) }}}}

\else

\newcommand{\notename}[2]{{}}

\fi

\newcommand{\cnote}[1]{{\notename{Chris}{#1}}}
\newcommand{\dnote}[1]{{\notename{Daniel}{#1}}}
\newcommand{\knote}[1]{{\notename{Kai-Min}{#1}}}
\newcommand{\fnote}[1]{{\notename{Feng-Hao}{#1}}}

\newcommand{\ID}{{\text{ID}}}      %
\newcommand{\Com}{{\text{Com}}}      %
\newcommand{\SPCom}{{\text{SPCom}}}      %

\newcommand{\PPT}{{\text{PPT}}}      %
\newcommand{\supp}{{\text{supp}}}      %

\newcommand{\zo}{{\{0,1\}}}      %
\newcommand{\LWE}{{\sf LWE}{}}     
\newcommand{\BDD}{{\sf BDD}{}}

\newcommand{\Prov}{{\text{P}}}      %
\newcommand{\Ver}{{\text{V}}}      %
\newcommand{\eqdef}{\mathbin{\stackrel{\rm def}{=}}}

\renewcommand{\epsilon}{\varepsilon}
\newcommand{\eps}{\epsilon}
\newcommand{\calV}{\mathcal{V}}

\title{On the Lattice Smoothing Parameter Problem}

\author{%
  Kai-Min Chung \\ Cornell University
  \and
  Daniel Dadush \\ New York University
  \and
  Feng-Hao Liu \\ Brown University
  \and
  Chris Peikert \\ Georgia  Institute of Technology
}

\begin{document}

\maketitle
\thispagestyle{empty}

\begin{abstract}
The \emph{smoothing parameter} $\eta_{\epsilon}(\lat)$ of a Euclidean
lattice $\lat$, introduced by Micciancio and Regev (FOCS'04;
SICOMP'07), is (informally) the smallest amount of Gaussian noise that
``smooths out'' the discrete structure of $\lat$ (up to error
$\epsilon$).  It plays a central role in the best known
worst-case/average-case reductions for lattice problems, a wealth of
lattice-based cryptographic constructions, and (implicitly) the
tightest known transference theorems for fundamental lattice
quantities.

In this work we initiate a study of the complexity of approximating
the smoothing parameter to within a factor $\gamma$, denoted
$\gamma$-$\gapsp$.  We show that (for $\eps = 1/\poly(n)$):
\knote{or for $1/\poly(n) \leq \eps \leq o(1)$?}
\begin{itemize}
\item $(2+o(1))$-$\gapsp \in \AM$, via a Gaussian analogue of the
  classic Goldreich-Goldwasser protocol (STOC'98);
\item $(1+o(1))$-$\gapsp \in \coAM$, via a careful application of the
  Goldwasser-Sipser (STOC'86) set size lower bound protocol to thin
  shells in $\R^{n}$;
\item $(2+o(1))$-$\gapsp \in \SZK \subseteq \AM \cap \coAM$
  (where $\SZK$ is the class of problems having statistical
  zero-knowledge proofs), by constructing a suitable
  instance-dependent commitment scheme (for a slightly worse $o(1)$-term);
\item $(1+o(1))$-$\gapsp$ can be solved in deterministic $2^{O(n)}
  \polylog(1/\epsilon)$ time and $2^{O(n)}$ space.
\end{itemize}
As an application, we demonstrate a tighter worst-case to average-case
reduction for basing cryptography on the worst-case hardness of the $\gapspp$
problem, with $\tilde{O}(\sqrt{n})$ smaller approximation factor than the
$\gapsvp$ problem. Central to our results are two novel, and nearly tight,
characterizations of the magnitude of discrete Gaussian sums over $\lat$: the
first relates these directly to the Gaussian measure of the Voronoi cell of
$\lat$, and the second to the fraction of overlap between Euclidean balls
centered around points of $\lat$.

  \ifnotes \begin{center}{\Huge{NOTES ARE ON }}\end{center} \fi
\end{abstract}

\newpage
\pagenumbering{arabic}

\section{Introduction}
\label{sec:introduction}

A (full-rank) $n$-dimensional lattice $\lat=\lat(B)=\set{\sum_{i=1}^{n} c_{i} \vecb_{i} : c_{i} \in \Z}$ is the set of all integer linear combinations
of a set $B=\set{\vecb_{1}, \ldots, \vecb_{n}} \subset \R^{n}$ of linearly independent vectors, called a basis of the lattice.  It may also be seen as
a discrete additive subgroup of $\R^{n}$.  Lattices have been studied in mathematics for hundreds of years, and more recently have been at the center
of many important developments in computer science, such as the LLL algorithm~\cite{lenstra82:_factor} and its applications to
cryptanalysis~\cite{DBLP:journals/joc/Coppersmith97} and error-correcting codes~\cite{DBLP:conf/innovations/CohnH11}, and lattice-based
cryptography~\cite{ajtai04:_gener_hard_instan_lattic_probl} (including the first fully homomorphic encryption scheme~\cite{DBLP:conf/stoc/Gentry09}).

Much recent progress in the computational study of lattices, especially in the realms of worst-case/average-case reductions and cryptography (as
initiated by Ajtai~\cite{ajtai04:_gener_hard_instan_lattic_probl}), has been made possible by the machinery of Gaussian measures and harmonic
analysis.  These tools were first employed for such purposes by Regev~\cite{DBLP:journals/jacm/Regev04} and Micciancio and
Regev~\cite{DBLP:journals/siamcomp/MicciancioR07} (see also,
e.g.,~\cite{DBLP:journals/jacm/AharonovR05,DBLP:journals/jacm/Regev09,peikert08:_limit_hardn_of_lattic_probl,DBLP:conf/stoc/GentryPV08,DBLP:conf/crypto/Gentry10,DBLP:conf/crypto/Peikert10}),
following their development by Banaszczyk~\cite{banaszczyk93:_new,DBLP:journals/dcg/Banaszczyk95,Bana96} to prove asymptotically tight (or nearly
tight) transference theorems.

In particular, the notion
from~\cite{DBLP:journals/siamcomp/MicciancioR07} of the
\emph{smoothing parameter} $\eta_{\eps}(\lat)$ of a lattice~$\lat$
plays a central role (sometimes implicitly) the above-cited works, and
so it is a key concept in the study of lattices from several
perspectives.  Informally, $\eta_{\epsilon}(\lat)$ is the smallest
amount $s$ of Gaussian noise that completely ``smooths out'' the
discrete structure of $\lat$, up to statistical error $\epsilon$.
Formally, it is the smallest $s>0$ such that the total Gaussian mass
$\rho_{1/s}(\vecw) := \exp(-\pi s^{2} \length{\vecw}^{2})$, summed
over all nonzero \emph{dual} lattice vectors $\vecw \in \lat^{*}
\setminus \set{\veczero}$, is at most~$\epsilon$.\footnote{The dual
  lattice $\lat^{*}$ of $\lat$ is the set of all $\vecy \in \R^{n}$
  for which $\inner{\vecx,\vecy} \in \Z$ for every $\vecx \in \lat$.}
This condition is equivalent to the following ``smoothing'' condition:
the distribution of a continuous Gaussian of width $s$, reduced modulo
$\lat$, has point-wise probability density within a $(1\pm \eps)$
factor of that of the uniform distribution over $\mathbb{R}^{n}/\lat$.
\cnote{I don't think this very long and technical description is
  helpful at this point; let's just go straight to the definition
  instead...}
\knote{I agree that formal technical description is too much here, but I feel that giving a bit more interpretation is helpful to bridge the informal ``smooth out'' and the formal Gaussian sum. I try to give a soft description here; perhaps we can further soften it?}
\cnote{How about the}

Given the smoothing parameter's central role in many mathematical and
computational aspects of lattices, we believe it to be of comparable
importance to other fundamental and well-studied geometric lattice
quantities like the minimum distance, successive minima, covering
radius, etc.  While the smoothing parameter can be estimated by
relating it to these other
quantities~\cite{DBLP:journals/siamcomp/MicciancioR07,peikert08:_limit_hardn_of_lattic_probl,DBLP:conf/stoc/GentryPV08},
the bounds are quite coarse, typically yielding only
$\tilde{\Omega}(\sqrt{n})$-factor approximations.

We therefore initiate a study of the complexity of computing the
smoothing parameter, with a focus on approximations. More formally,
for an approximation factor $\gamma \geq 1$ and some $0 < \eps < 1$
(which may both be functions of the lattice dimension~$n$), we define
$\gamma$-$\gapsp_\eps$ to be the promise problem in which YES
instances are lattices $\lat$ for which $\eta_{\eps}(\lat) \leq 1$,
and NO instances are those for which $\eta_{\eps}(\lat) > \gamma$.

\iffalse
(The error $\epsilon$ is also a
parameter of the problem, which we omit in this overview for
simplicity, since our results are very robust to the choice
of~$\epsilon$.  See Definition~\ref{def:gapsp} and
Corollary~\ref{cor:eps-epsY-epsN} for further details.)
\fi

\paragraph{The dependence on $\eps$.}

To understand the nature of $\gapspp$, it is important to notice that
the value of $\eps$ has a large impact on the complexity of the
problem. In particular, by known relations between the smoothing
parameter and the shortest nonzero dual vector
(see~\cite{DBLP:journals/siamcomp/MicciancioR07}), we have that
\[ \sqrt{\log (1/\eps)/\pi} / \lambda_1(\lat^*) \leq \eta_{\eps}(\lat)
\leq \sqrt{n}/\lambda_1(\lat^*), \] and hence for exponentially small
error $\eps = 2^{-\Omega(n)}$ the quantities $\eta_{\eps}(\lat)$ and
$\sqrt{n}/\lambda_1(\lat^*)$ are within a constant factor of each
other. Therefore, the (decision) Shortest Vector Problem
$\gamma$-$\gapsvp$ is equivalent to
$\gamma$-$\gapspp_{2^{-\Omega(n)}}$, up to a constant factor loss in
the approximation. However, most uses of the smoothing parameter in
the literature (e.g., worst-case to average-case reductions and
transference theorems) work with either inverse polynomial $\eps =
n^{-O(1)}$ or ``just barely'' negligible $\eps = n^{-\omega(1)}$
(e.g., $\epsilon = n^{-\log n}$).  For such values of $\epsilon$, the
loss in approximation factor between $\gapspp$ and $\gapsvp$ or other
standard lattice problems can be as large as
$\tilde{\Omega}(\sqrt{n})$, and as we will see, in this regime
$\gapspp$ behaves qualitatively differently from these other problems.

\subsection{Results and Techniques}

In this work, we prove several (possibly surprising) upper bounds on
the complexity of $\gamma$-$\gapsp_\eps$. Unless otherwise specified,
the stated results hold for the setting $\eps = n^{-O(1)}$. (We obtain
results for smaller $\eps$ as well, but with slowly degrading
approximation factors.)  Similar results hold for a generalization of
$\gapspp$ which uses different values of $\eps$ for YES and NO
instances (see Definition~\ref{def:gapsp} and
Corollary~\ref{cor:eps-epsY-epsN} for further details).

At a high level, we obtain several of our main results by noticing
that the classic Goldreich-Goldwasser
protocol~\cite{DBLP:journals/jcss/GoldreichG00}, which was originally
designed for approximating (the complement of) the $\gapsvp$ problem,
can in fact be seen as more directly and tightly approximating the
smoothing parameter (of the dual lattice). When viewed from this
perspective, we show that slight variants of the GG protocol obtain an
$2+o(1)$ approximation for $\gapspp$, improving on the approximation
for $\gapsvp$ by a $\tilde{O}(\sqrt{n})$ factor. Furthermore, using
the known relations between $\gapsvp$ and $\gapspp$, one recover the
original approximation factor for $\gapsvp$. 
To obtain these tight approximation factors, as part of the
main technical contributions of this paper, we develop two novel and
nearly tight (up to a $2+o(1)$ factor) geometric characterizations of
the smoothing parameter $\eta_\eps(\lat)$ that elucidate the geometric
content of the parameter $\eps$.

\paragraph{Arthur-Merlin Protocols.}

We show that $(2+o(1))$-$\gapsp \in \AM ~\cap~ \coAM$, and
moreover, that $(1+o(1))$-$\gapsp \in \coAM$.  That is, we give
constant-round interactive proof systems which allow an unbounded
prover to convince a randomized polynomial-time verifier that the
smoothing parameter is ``small,'' and that it is ``large.'' In
contrast with these positive results, we note that since the smoothing
parameter is effectively determined by a sum over exponentially many
lattice points, it is unclear whether $\gamma$-$\gapsp$ is in $\NP$ or
$\coNP$ for $\gamma=o(\sqrt{n})$.  (For $\gamma=\Omega(\sqrt{n})$,
known connections to other lattice quantities imply that
$\gamma$-$\gapspp_{\epsilon} \in\NP \cap \coNP$.)

One important consequence of $(2+o(1))$-$\gapsp \in \AM \cap \coAM$ is
that the problem is not $\NP$-hard (under Karp reductions, or
``smart'' Cook reductions~\cite{DBLP:journals/siamcomp/GrollmannS88}),
unless $\coNP \subseteq \AM$~\cite{DBLP:journals/ipl/BoppanaHZ87} and
the polynomial-time hierarchy collapses.  Our result should also be
contrasted with analogous results for approximating the Shortest and
Closest Vector Problems, which are only known to be in $\NP \cap
\coAM$ for factors $\gamma \geq c\sqrt{n/\log
  n}$~\cite{DBLP:journals/jcss/GoldreichG00}, and in $\NP \cap \coNP$
for factors $\gamma \geq
c\sqrt{n}$~\cite{DBLP:journals/jacm/AharonovR05}, as well as the
results for approximating the Covering Radius Problem, whose
$2$-approximation is in $\AM$ but is in $\coAM$ only for $\gamma \geq
c\sqrt{n/\log n}$, and in $\NP \cap \coNP$ for $\gamma \geq \sqrt{n}$
~\cite{DBLP:journals/cc/GuruswamiMR05}.

To prove that $(2+o(1))$-$\gapsp \in \AM$, we use a Gaussian analogue
of the Goldreich-Goldwasser protocol on the dual lattice $\lat^*$,
where the verifier samples from a Gaussian instead of a
ball. (Interestingly, this leads to imperfect completeness, which
turns out to be important for the tightness of the analysis.)  More
precisely, the verifier samples $\vecx \in \R^{n}$ from a Gaussian,
reduces $\vecx$ modulo (a basis of) the lattice $\lat^*$, and sends
the result to the prover. The prover's task is to guess $\vecx$, and
the verifier accepts or rejects accordingly. To prove that the
protocol is complete and sound, we crucially rely on the following
novel characterization of the smoothing parameter:
\begin{itemize}
\item[] {\bf Voronoi Cell Characterization.} For any $\eps \in (0,1)$,
  a scaling of the Voronoi cell\footnote{The Voronoi cell
    $\calV(\lat^{*})$ is the set of points in $\R^{n}$ that are closer
    to $\veczero$ than any other lattice point of $\lat^*$, under
    $\ell_2$ norm.} $\calV(\lat^*)$ by a factor $2\eta_\eps(\lat)$ has
  Gaussian measure at least $1-\eps$, and an $\eta_\eps(\lat)$-scaling
  has Gaussian measure at most $1/(1+\eps)$.
\end{itemize}
With this tool in hand, the analysis of the protocol is very
simple. By the maximum likelihood principle, the optimal prover
guesses correctly if and only if the verifier's original sample lands
inside the Voronoi cell, and hence the verifier's acceptance probability
is exactly the Gaussian measure of $\calV(\lat^*)$. See
Section~\ref{sec:AM} for further details.

For proving $(1+o(1))$-$\gapsp \in \coAM$, we rely on the classic
set-size lower bound protocol of Goldwasser and
Sipser~\cite{GoldwasserS86}.  In order to prove that the discrete
Gaussian mass on $\lat^{*} \setminus \set{\veczero}$ is large, we
apply the protocol to thin shells in $\R^{n}$, and rely on a discrete
Gaussian concentration inequality of
Banaszczyk~\cite{banaszczyk93:_new}. See Section~\ref{sec:coAM} for an
overview and full details.

\paragraph{\bf Statistical Zero Knowledge Protocol.}

We prove that $(2+o(1))$-$\gapsp \in \SZK$, the class of problems
having statistical zero-knowledge proofs. We note that this result
does not subsume the inclusion in $\AM \cap \coAM$ described above (as
one might suspect, given that $\SZK \subseteq \AM \cap \coAM$), due to
a slightly worse dependence $\eps$ in the~$o(1)$ term. To prove the
theorem, we construct a new instance-dependent commitment
scheme\footnote{Roughly speaking, an instance-dependent commitment
  scheme for a language $L$ is a commitment scheme that can depend on
  the instance $x$ and such that only one of the (statistical) hiding
  and binding properties are required to hold, depending on whether $x
  \in L$.} based on $\gapspp$, which is sufficiently binding (for an
honest committer) and hiding (to a dishonest receiver). Constructing
such a commitment scheme (with some additionals observations in our
case) is known to be sufficient for obtaining an $\SZK$
protocol~\cite{ItohOS97}.

Our construction can be viewed as a generalization of an
instance-dependent commitment scheme for $O(\sqrt{n/\log
  n})$-$\gapsvp$ implicit in~\cite{DBLP:conf/crypto/MicciancioV03},
which was also based on the Goldreich-Goldwasser protocol and is
perfectly binding. At a very high level, the commitment scheme is
based on revealing a ``random'' perturbed lattice point in $\lat$,
where the perturbation is taken uniformly from a ball of radius
$r$. Roughly speaking, we get the binding property when there is only
one lattice within distance $r$ of the revealed perturbation, and get
the hiding property when there are multiple such lattice points (which
allow for equivocation). It turns out that the main measure of quality
for the binding and hiding property corresponds to the fraction of
overlap between the balls of radius $r$ placed around lattice points
of $\lat$: less overlap means better binding, and more overlap yields
better hiding. In \cite{DBLP:conf/crypto/MicciancioV03}, this overlap
is analyzed in terms of the length $\lambda_{1}$ of the shortest
nonzero vector of~$\lat$. In particular, if $r \leq \lambda_1/2$, then
the balls are completely disjoint (perfect binding), and if $r \geq
\Omega(\sqrt{n/\log n}) \cdot \lambda_{1}$, then a $1/\poly(n)$
fraction of the ball around any lattice point overlaps with that of
its nearest neighbor in the lattice, which gives non-negligible
hiding.

The main insight which allows us obtain improved approximation factors
when basing the commitment scheme on $\gapspp$ is a new
characterization of the smoothing parameter, which allows to get very
fine control on the overlap.

\begin{itemize}
\item[] {\bf Ball Overlap Characterization}. For $\eps \geq
  2^{-o(n)}$, Euclidean balls of radius $R=\sqrt{n/(2\pi)} /
  (2\eta_\eps(\lat^*))$ centered at all points of $\lat$ overlap in at
  most a $2\eps$ fraction of their mass, and balls of radius
  $(2+o(1))R$ overlap in at least an $\eps/2$ fraction of their mass.
\end{itemize}

From the above we are able to determine, to within a factor $2+o(1)$,
whether balls placed at points of~$\lat$ overlap in at most or at
least an $\eps$ fraction of their mass, based solely on the smoothing
parameter (of the dual lattice). Intuitively, this is because the
smoothing parameter takes into account all the lattice points
in~$\lat$, and hence is able to provide much better ``global''
information about the overlap. We refer the reader to
Section~\ref{sec:SZK} for further details and discussion.

\paragraph{Application to Worst-Case/Average-Case Reductions.}

As an application, we also obtain a worst-case to average-case
reduction from $\gapspp$ to the Learning With Errors problem
($\LWE$)~\cite{DBLP:journals/jacm/Regev09}, which has a tighter
connection factor than the known reductions from
$\gapsvp$~\cite{DBLP:journals/jacm/Regev09,DBLP:conf/stoc/Peikert09}. Roughly
speaking, the goal of $\LWE$ is to solve $n$-dimensional random noisy
linear equations modulo some $q$, where Gaussian noise with standard
deviation $\alpha q$ is added to each equation. The $\LWE$ problem is
extremely versatile as a basis for numerous cryptographic
constructions
(e.g.,~\cite{DBLP:conf/stoc/PeikertW08,DBLP:conf/stoc/GentryPV08,DBLP:conf/eurocrypt/CashHKP10,DBLP:conf/focs/BrakerskiV11}).
Regev's celebrated result~\cite{DBLP:journals/jacm/Regev09} showed a
quantum reduction from solving worst-case $\gamma$-$\gapsvp$ (among
other problems) to solving $\LWE$ with $\gamma =
\tilde{O}(n/\alpha)$. Furthermore,
Peikert~\cite{DBLP:conf/stoc/Peikert09} showed a corresponding
classical reduction, when the modulus $q\geq 2^{n/2}$. Therefore, the
security of $\LWE$-based cryptographic constructions can be based on
the worst-case hardness of the $\gapsvp$ problem.

We observe that the reductions
of~\cite{DBLP:journals/jacm/Regev09,DBLP:conf/stoc/Peikert09} in fact
implicitly solve the $\gapspp$ problem. Thus, by slightly modifying
the last step of those reductions, we obtain corresponding
quantum/classical reductions from $\gamma$-$\gapspp_{\epsilon}$ (with
$\epsilon = \negl(n)$) to $\LWE$ with $\gamma = O(\sqrt{n}/\alpha)$.
As a consequence, the security of $\LWE$-based cryptographic
constructions can be based on the worst-case hardness of a potentially
harder lattice problem.

The application to worst-case/average-case reduction follows by noting
that the reduction of~\cite{DBLP:conf/stoc/Peikert09} solves $\gapsvp$
by running the Goldreich-Goldwasser protocol, where the prover's
strategy is simulated by using a bounded distance decoding (\BDD)
oracle, which in turn is implemented using the $\LWE$ oracle. To
obtain a tighter reduction from $\gapspp$ to $\LWE$, we observe that
the quality of the $\BDD$ oracle depends directly on the smoothing
parameter, as opposed to the length of the shortest vector. In light
of this, we instead solve $\gapspp$ using the Gaussian analogue of the
Goldreich-Goldwasser protocol described above, while still using a bounded
distance decoding (\BDD) oracle to simulate the prover's strategy. See
Section~\ref{sec:wstavg} for further details.

\paragraph{Algorithm for $\gapspp$.}

We give a deterministic $2^{O(n)} \polylog(1/\epsilon)$-time and
$2^{O(n)}$-space algorithm for deciding $(1+o(1))$-$\gapsp$.  For this
we use recent algorithms
of~\cite{DBLP:conf/stoc/MicciancioV10,DBLP:conf/focs/DadushPV11} for
enumerating lattice points in $\lat^{*}$ to estimate the Gaussian
mass.  The full details are in Section~\ref{col:det}.

\paragraph{Perspectives and Open Questions.} 

Our initial work on the complexity of the $\gapspp$ problem opens up
several directions for further study of the smoothing parameter from a
computational perspective. Perhaps the most intriguing question is
whether $(2+o(1))$-$\gapspp$ is $\SZK$-complete.  A positive answer
might lead to progress on the long-standing goal of basing
cryptography on general complexity classes.  Some reason for optimism
comes from its rather unusual complexity: like $\SZK$-complete
problems, $(2+o(1))$-$\gapspp$ is in $\SZK$ but is not known to be in
$\NP$ or $\coNP$.  We are unaware of any other problems (aside from
$\SZK$-complete ones) having these characteristics.

In a related direction, in this work we focus on the standard
``$L_{\infty}$ notion'' of the smoothing
parameter~$\eta_{\eps}(\lat)$, whereas the complexity of a related
``$L_1$ notion'' of the smoothing parameter, denoted
$\eta^{(1)}_{\eps}(\lat)$, also seems quite interesting. More
precisely, $\eta_{\eps}(\lat)$ can be defined equivalently as the
smallest parameter~$s$ such that the distribution of a continuous
Gaussian of width $s$, reduced modulo $\lat$, has point-wise
probability density within a $(1\pm \eps)$ factor of that of the
uniform distribution on $\R^{n}/\lat$. The $L_1$ variant
$\eta^{(1)}_\eps(\lat)$ of the smoothing parameter instead is defined
to be the smallest parameter $s$ such that the statistical distance
(i.e., half of the $L_1$ distance) between the above two distributions
is at most $\eps$.  (Clearly, $\eta^{(1)}_{\epsilon}(\lat) \leq
\eta_{\epsilon}(\lat)$.)  By definition, the problem of approximating
the $L_1$ smoothing parameter, denoted $\gamma$-$\gapspp^{(1)}_\eps$,
appears to naturally reduce to a well-known $\SZK$-complete problem
called Statistical Difference (SD) problem~\cite{SahaiV03}, which is a
promise problem asking whether two input distributions (specified by
circuits) have statistical distance less than $\alpha$ or greater than
$\beta$. Thus, the problem appears to be in $\SZK$ and is another
candidate $\SZK$-complete lattice problem. Unfortunately, the above
argument relies on $\eta^{(1)}_\eps(\lat)$ being a monotonic function
in~$\eps$, which is a basic property that we do not know how to prove
(or disprove)! In fact, we know very little about the $L_1$ smoothing
parameter. Given the potentially interesting complexity of
$\gamma$-$\gapspp^{(1)}_\eps$, it seems worthwhile to further
investigate the $L_1$ smoothing parameter, from both the geometric and
computational perspectives.
  
Finally, we note that our results generally apply only in the setting
where $\eps < 1$. It seems quite interesting to understand how the
complexity of $\gapspp$ changes for larger $\eps$. We remark that our
geometric characterizations only ``half fail'' for larger $\eps$. More
precisely, in the regime \cnote{Do you need to say
  $\eta_{\epsilon}(\lat) \geq 1$ here?:} $\eta_\eps(\lat) \geq 1$,
$\eps \geq 1$, we still get upper bounds on the Gaussian measure of
the Voronoi cell, as well as lower bounds on the fraction of overlap
for balls centered at lattice points. For our $\AM$ protocol, this
implies that the prover generally fails to convince the verifier, and
for our instant-dependent commitment scheme, this implies that it is
always hiding. \cnote{What do these next two sentences
  mean?:}Interestingly, our $\coAM$ protocol still applies for larger
$\eps$, almost without change. Here the main issue is that we do not
know a ``good'' geometric interpretation of the statement $\rho(\lat
\setminus \set{\veczero}) \leq \eps$ for any $\eps \geq 1$.

\paragraph{Organization.}

The rest of the paper is organized as follows.  In
Section~\ref{sec:preliminaries} we give the basic preliminaries.  In
Section~\ref{sec:AM}, we give our Arthur-Merlin protocol for showing
that $(2+o(1))$-$\gapsp \in \AM$ (Theorem~\ref{thm:GGG}). In
Section~\ref{sec:SZK} we construct a statistical zero-knowledge proof
for $\gapsp$ (Theorem \ref{thm:szk}). In Section~\ref{sec:wstavg}, we
describe the reduction from $\gapspp$ to $\LWE$
(Theorem~\ref{thm:gapsp-to-lwe}).  In Section~\ref{sec:coAM}, we show
that $(1+o(1))$-$\gapsp \in \coAM$ (Theorem~\ref{thm:coam}). In
Section~\ref{col:det} we give a deterministic algorithm for computing
the smoothing parameter (Theorem~\ref{thm:det}).

\section{Preliminaries}
\label{sec:preliminaries}

For sets $A,B \subseteq \R^n$, denote their Minkowski sum by $A+B =
\set{\veca+\vecb: \veca \in A, \vecb \in B}$.  We let $B_2^n = \set{\vecx \in \R^n:
  \length{\vecx}_2 \leq 1}$ denote the unit Euclidean ball in $\R^n$, and
$S^{n-1} = \partial B_2^n$ the unit sphere in $\R^n$.  Unless
stated otherwise, $\length{\cdot}$ denotes the Euclidean norm.

\paragraph{Lattices.}

A lattice $\lat \subset \R^{n}$ with basis $B$, and its dual
$\lat^{*}$, are defined as in the introduction.  For a basis $B$ and a
vector $\vecx \in \R^{n}$, we let $\vecx \bmod B$ denote the unique
$\bar{\vecx} \in \lat+\vecx$ such that $\bar{\vecx} = \sum_{i=1}^{n}
c_{i} \vecb_{i}$ for $c_{i} \in [-\tfrac12,\tfrac12)$.  It can be
computed efficiently from $\vecx$ and $B$ (treated as matrix of column
vectors) as $\bar{\vecx}=\vecx-B\round{B^{-1} \vecx}$.  We sometimes
instead write $\vecx \bmod \lat$ when the basis is implicit.

The Voronoi cell $\calV(\lat)$ is the set of points in $\R^n$ that are
at least as close to $\veczero$ (under the $\ell_{2}$ norm) as to any
other vector in $\lat$:
\begin{align*}
  \calV(\lat) &= \set{\vecx \in \R^n: \length{\vecx}_2 \leq
    \length{\vecx-\vecy}_2,\; \forall\; \vecy \in \lat \setminus
    \set{\veczero}} \\
  &= \set{\vecx \in \R^n: \pr{\vecx}{\vecy} \leq
    \tfrac{1}{2} \pr{\vecy}{\vecx},\; \forall\; \vecy \in \lat \setminus
    \set{\veczero}}.
\end{align*}
When the lattice in question is clear we shorten $\calV(\lat)$ to
$\calV$.  Note that $\calV$ is a symmetric polytope that tiles space
with respect to $\lat$, i.e., $\lat + \calV = \R^n$ and for all
distinct $x,y \in \lat$, the sets $x + \calV$ and $y + \calV$ are
interior disjoint.

\paragraph{Gaussian measures.}

Define the Gaussian function $\rho \colon \R^{n} \to \R^{+}$ as
$\rho(\vecx)=e^{-\pi \length{\vecx}^{2}}$, and for real $s>0$, define
$\rho_{s}(\vecx)=\rho(\vecx/s)=e^{-\pi
  \length{\vecx}^{2}/s^{2}}$.  For a countable subset $A \subseteq \R^n$, we
define $\rho_{s}(A) = \sum_{\vecx \in A} \rho_{s}(\vecx)$.

For a measurable subset $A \subseteq \R^n$, we define the Gaussian
measure of $A$ (parameterized by $s>0$) as $\gamma_s(A) =
\frac{1}{s^n} \int_A \rho_s(\vecx)\, d\vecx$.  Note that
$\gamma_s(\R^n) = 1$, so $\gamma_{s}$ is a probability measure.  For
parameter $s > 0$, we let $D_{s}$ be the corresponding continuous Gaussian
distribution with parameter $s$ centered around $\veczero$:
\[
D_{s}(A) = \gamma_s(A) \quad \forall \text{ measurable } A \subseteq \R^n.
\]
Similarly, for any countable subset $T \subseteq \R^n$ for which
$\rho_{s}(T)$ converges, define the discrete Gaussian
distribution $D_{T,s}$ over $T$ by
\[
D_{T,s}(\vecx) = \frac{\rho_{s}(\vecx)}{\rho_{s}(T)}
\quad \forall\; \vecx \in T.
\]
We usually consider the discrete Gaussian over a lattice $\lat$, i.e.,
where $T = \lat$, though there will be situations where $T$
corresponds a union of cosets of $\lat$.  In all these cases,
$\rho_{s}(T)$ converges.

The following gives the standard concentration bounds for the continuous and discrete Gaussians. \dnote{Should we prove in appendix, find exact references?}

\begin{lemma}[\cite{banaszczyk93:_new,DBLP:journals/dcg/Banaszczyk95}]
  \label{lem:gaussian-tail}
  Let $X \in \R^n$ be distributed as $D_{s}$ or $D_{\lat,s}$ for an
  $n$-dimensional lattice $\lat$. For any $\vecv \in \R^n \setminus
  \set{\veczero}$ and $t > 0$, we have
  \[
  \Pr[\pr{X}{\vecv} \geq t\length{\vecv}] \leq e^{-\pi (t/s)^2},
  \]
  and for $\eps > 0$ we have
  \[
  \Pr[\length{X}^2 \geq (1+\eps)s^2 \frac{n}{2\pi}] \leq
  ((1+\eps)e^{-\eps})^{n/2},
  \]
  which for $0 < \eps < \frac{1}{2}$ is bounded by $e^{-n \eps^{2}/6}$.
\end{lemma}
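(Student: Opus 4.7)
The two bounds are Banaszczyk's classical concentration inequalities, and I would prove them via the standard exponential-moment (Chernoff-style) method, treating the continuous and discrete cases in parallel. For the continuous Gaussian $D_s$, both inequalities are essentially immediate: $\pr{X}{\vecv}/\length{\vecv}$ is a $1$-dimensional Gaussian with variance $s^2/(2\pi)$, so the first bound is the standard Gaussian tail, while $\length{X}^2$ is a rescaled $\chi^2_n$ random variable and the second bound is a standard $\chi^2$ Laurent--Massart-type estimate. So the substance is entirely in the discrete case $X \sim D_{\lat,s}$.

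The engine for the discrete case is Banaszczyk's shift inequality: for every $\vecu \in \R^n$ and every $s>0$, $\rho_s(\lat+\vecu) \leq \rho_s(\lat)$. I would prove it using Poisson summation applied to the Gaussian $\rho_s$ (whose Fourier transform is $s^n \rho_{1/s}$): this gives
\[
\rho_s(\lat+\vecu) \;=\; \frac{s^n}{\abs{\det \lat}} \sum_{\vecy \in \lat^*} \rho_{1/s}(\vecy)\, e^{2\pi i \pr{\vecy}{\vecu}},
\]
and bounding the complex exponentials by $1$ (and using that the LHS is real and positive) yields $\rho_s(\lat+\vecu) \leq \frac{s^n}{\abs{\det \lat}}\rho_{1/s}(\lat^*) = \rho_s(\lat)$, where the final equality is Poisson summation at $\vecu = \veczero$.

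With the shift inequality in hand, the directional tail bound is a one-line Chernoff argument. For any $c > 0$,
\[
\Pr[\pr{X}{\vecv} \geq t\length{\vecv}] \;\leq\; e^{-2\pi c t/s} \cdot \frac{1}{\rho_s(\lat)}\sum_{\vecx \in \lat} \rho_s(\vecx)\, e^{2\pi c \pr{\vecx}{\vecv}/(s\length{\vecv})}.
\]
Completing the square in the exponent shows that the sum equals $e^{\pi c^2} \rho_s(\lat + c s\vecv/\length{\vecv})$, which by the shift inequality is at most $e^{\pi c^2}\rho_s(\lat)$. Optimizing at $c = t/s$ gives $e^{-\pi (t/s)^2}$, as required. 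For the norm bound I apply the analogous quadratic Chernoff bound: for $0 < \alpha < \pi$,
\[
\Pr\!\bigl[\length{X}^2 \geq (1+\epsilon)s^2 \tfrac{n}{2\pi}\bigr] \;\leq\; e^{-\alpha (1+\epsilon) n/(2\pi)}\cdot \frac{\rho_{s'}(\lat)}{\rho_s(\lat)},
\]
where $s' = s/\sqrt{1-\alpha/\pi}$. Applying Poisson summation to both $\rho_{s'}(\lat)$ and $\rho_s(\lat)$ and using that $\rho_{1/s'}(\lat^*) \leq \rho_{1/s}(\lat^*)$ (since $s' > s$) gives $\rho_{s'}(\lat)/\rho_s(\lat) \leq (s'/s)^n$. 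Setting $\alpha = \pi \epsilon/(1+\epsilon)$ collapses the bound to $((1+\epsilon)e^{-\epsilon})^{n/2}$, and the final numeric estimate $((1+\epsilon)e^{-\epsilon})^{1/2} \leq e^{-\epsilon^2/6}$ for $0<\epsilon<1/2$ is a routine Taylor expansion.

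The main obstacle, and the only step that is not a mechanical Chernoff calculation, is the Poisson-summation-based shift inequality; once that is in place the two tail bounds, and the uniform treatment of the continuous and discrete cases, follow by identical Chernoff substitutions and choices of parameter.
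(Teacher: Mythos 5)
The paper does not prove this lemma at all---it is imported directly from Banaszczyk's work \cite{banaszczyk93:_new,DBLP:journals/dcg/Banaszczyk95}, so there is no internal proof to compare against. Your argument is correct and is essentially the classical proof from those references: the exponential-moment (Chernoff) method over the discrete Gaussian, with the shift inequality $\rho_s(\lat+\vecu)\leq\rho_s(\lat)$ obtained from Poisson summation handling the linear bound, and the Poisson-summation scaling comparison $\rho_{s'}(\lat)\leq (s'/s)^n\rho_s(\lat)$ handling the norm bound; the parameter choices you indicate ($c=t/s$ and $\alpha=\pi\eps/(1+\eps)$) do yield exactly the stated bounds, and the final estimate $(1+\eps)e^{-\eps}\leq e^{-\eps^2/3}$ for $0<\eps<1/2$ checks out.
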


\paragraph{The smoothing parameter.}

We recall the definition of the smoothing parameter
from~\cite{DBLP:journals/siamcomp/MicciancioR07}, and define the
associated computational problem $\gapspp$.

\begin{definition}[Smoothing Parameter]
  \label{def:smoothing-param}
  For a lattice $\lat$ and real $\eps > 0$, the smoothing parameter
  $\eta_{\eps}(\lat)$ is the smallest $s > 0$ such that
  $\rho_{1/s}(\lat^* \setminus \set{\veczero}) \leq \eps$.
\end{definition}

\begin{definition}[Smoothing Parameter Problem]
  \label{def:gapsp}
  For $\gamma=\gamma(n) \geq 1$ and positive $\eps_{Y}=\eps_{Y}(n),
  \eps_{N}=\eps_{N}(n)$ with $\eps_{Y} \leq \eps_{N}$, an instance of
  $\gamma$-$\gapsp_{\eps_{Y},\eps_{N}}$ is a basis $B$ of an
  $n$-dimensional lattice $\lat=\lat(B)$.  It is a YES instance if
  $\eta_{\eps_{Y}}(\lat) \leq 1$, and is a NO instance if
  $\eta_{\eps_{N}}(\lat) > \gamma$.  When $\eps_{Y}=\eps_{N}=\eps$, we
  write $\gamma$-$\gapsp_{\eps}$.
\end{definition}

Notice that YES and NO instances are disjoint, since for a YES
instance we have $\rho(\lat^{*} \setminus \set{\veczero}) \leq
\epsilon_{Y}$, whereas for a NO instance we have $\rho(\lat^{*}
\setminus \set{\veczero}) \geq \rho_{1/\gamma}(\lat^{*} \setminus
\set{\veczero}) > \epsilon_{N} \geq \epsilon_{Y}$.

For the design and analysis of our interactive protocols, it is often
convenient to use separate $\eps_{Y}, \eps_{N}$ parameters.  The
following lemma and its corollary then let us draw conclusions about
$\gapsp$ for a single $\epsilon$ parameter, for an (often slightly)
larger approximation factor.

\begin{lemma}
  \label{lem:rho_small_eps}
  Let $\lat \subseteq \R^n$ be an $n$ dimensional lattice.  If
  $\rho_{s}(\lat \setminus \set{\veczero}) \leq \eps < 1$, then
  letting $t = \sqrt{1+\log(r)/\log(\eps^{-1})}$ for any $r \geq 1$,
  we have \[ \rho_{s/t}(\lat \setminus \set{\veczero}) \leq
  \tfrac{1}{r} \rho_{s}(\lat \setminus \set{\veczero}) \leq \epsilon/r.
  \]
\end{lemma}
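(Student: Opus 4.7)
The plan is to exploit the pointwise identity
\[
\rho_{s/t}(\vecx) = e^{-\pi t^2 \length{\vecx}^2 / s^2} = \rho_{s}(\vecx)^{t^2},
\]
which converts a rescaling of the Gaussian width into a pointwise exponentiation of Gaussian weights. This pairs naturally with the hypothesis, because the hypothesis already gives a uniform upper bound on each individual weight $\rho_s(\vecx)$ for nonzero $\vecx \in \lat$. Note that $r \geq 1$ implies $t = \sqrt{1+\log(r)/\log(\eps^{-1})} \geq 1$, so $t^2 - 1 \geq 0$ and all exponents in what follows are nonnegative.

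First I would split the exponent and write
\[
\rho_{s/t}(\lat \setminus \set{\veczero}) = \sum_{\vecx \in \lat \setminus \set{\veczero}} \rho_s(\vecx)^{t^2 - 1} \cdot \rho_s(\vecx).
\]
For each $\vecx \in \lat \setminus \set{\veczero}$, the hypothesis gives $\rho_s(\vecx) \leq \rho_s(\lat \setminus \set{\veczero}) \leq \eps < 1$; since $t^2 - 1 \geq 0$, monotonicity of $x \mapsto x^{t^2-1}$ on $(0,1]$ yields $\rho_s(\vecx)^{t^2-1} \leq \eps^{t^2 - 1}$ uniformly in $\vecx$. Pulling this common bound out of the sum produces
\[
\rho_{s/t}(\lat \setminus \set{\veczero}) \leq \eps^{t^2 - 1} \cdot \rho_s(\lat \setminus \set{\veczero}).
\]

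Finally, the choice $t = \sqrt{1+\log(r)/\log(\eps^{-1})}$ is engineered precisely so that $t^2 - 1 = \log(r)/\log(\eps^{-1})$, which upon exponentiation gives $\eps^{t^2-1} = e^{(t^2-1)\log \eps} = e^{-\log r} = 1/r$. Substituting yields the first inequality of the statement, and the second inequality is just the hypothesis $\rho_s(\lat \setminus \set{\veczero}) \leq \eps$. There is no real obstacle here: the whole argument is a short manipulation, and the essential insight is simply the rescaling identity $\rho_{s/t}(\vecx) = \rho_s(\vecx)^{t^2}$, which turns the subadditive Gaussian mass into a pointwise bound that the hypothesis controls.
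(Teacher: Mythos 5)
Your proposal is correct, and it follows essentially the same line as the paper's proof: both rest on the rescaling identity $\rho_{s/t}(\vecx) = \rho_s(\vecx)^{t^2}$ and the choice of $t$ so that $\eps^{t^2-1}=1/r$. The one small difference is the middle step: the paper first applies the superadditivity inequality $\sum_i a_i^{t^2} \leq (\sum_i a_i)^{t^2}$ for $t^2 \geq 1$ and then factors out $(\sum_i a_i)^{t^2-1} \leq \eps^{t^2-1}$, whereas you bound each term pointwise via $\rho_s(\vecx) \leq \rho_s(\lat\setminus\set{\veczero}) \leq \eps$ to get $\rho_s(\vecx)^{t^2-1} \leq \eps^{t^2-1}$ directly. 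Your version is marginally more elementary (it never needs the $\ell^p$-type inequality, and in fact reproves it along the way), but the two arguments are otherwise identical in structure and yield the same bound.
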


\begin{proof}
  By scaling $\lat$, it suffices to prove the claim for $s=1$.  Since
  $t \geq 1$, we have
  \begin{align*}
    \rho_{1/t}(\lat \setminus \set{\veczero}) &= \sum_{\vecy \in \lat
      \setminus \set{\veczero}} e^{-\pi\length{t \vecy}^2} =
    \sum_{\vecy \in \lat \setminus \set{\veczero}}
    (e^{-\pi\length{\vecy}^2})^{t^2} \\
    &\leq \biggl( \sum_{\vecy \in \lat \setminus \set{\veczero}}
    e^{-\pi\length{\vecy}^2} \biggr)^{t^2} = \rho(\lat \setminus
    \set{\veczero})^{t^2} \leq \rho(\lat \setminus \set{\veczero})
    \cdot \eps^{t^2-1}.
  \end{align*}
  To finish the proof, note that $\eps^{t^2-1} = 1/r$, as
  needed.
\end{proof}

\begin{corollary}
  \label{cor:eps-epsY-epsN}
  For any $\eps_{N} < 1$, there is a trivial reduction from
  $\gamma'$-$\gapsp_{\eps_{Y}}$ to
  $\gamma$-$\gapsp_{\eps_{Y},\eps_{N}}$, where $\gamma'=\gamma \cdot
  \sqrt{\log(\eps_{Y}^{-1})/\log(\eps_{N}^{-1})}$.
\end{corollary}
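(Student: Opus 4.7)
The plan is to show that the identity mapping serves as the reduction: given an instance $B$ of $\gamma'$-$\gapsp_{\eps_Y}$, we output the same basis $B$ as an instance of $\gamma$-$\gapsp_{\eps_Y,\eps_N}$. This is clearly polynomial-time; the work lies in verifying that YES and NO instances are preserved.

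For a YES instance of $\gamma'$-$\gapsp_{\eps_Y}$ we have $\eta_{\eps_Y}(\lat) \le 1$, which is exactly the YES condition for $\gamma$-$\gapsp_{\eps_Y,\eps_N}$, so this case is immediate. The nontrivial direction is the NO case, which I would prove by contrapositive: I would show that if $\eta_{\eps_N}(\lat) \le \gamma$, then $\eta_{\eps_Y}(\lat) \le \gamma'$, so any NO instance of $\gamma'$-$\gapsp_{\eps_Y}$ (i.e., $\eta_{\eps_Y}(\lat) > \gamma'$) must satisfy $\eta_{\eps_N}(\lat) > \gamma$.

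To carry this out, I would apply Lemma~\ref{lem:rho_small_eps} to the dual lattice $\lat^*$. Assuming $\eta_{\eps_N}(\lat) \le \gamma$, set $s = 1/\eta_{\eps_N}(\lat)$, so that by definition of the smoothing parameter, $\rho_{s}(\lat^* \setminus \set{\veczero}) \le \eps_N$. The key calibration is to choose the parameter $r$ in the lemma so that the output bound is exactly $\eps_Y$; this means taking $r = \eps_N/\eps_Y \ge 1$. Plugging into the lemma's definition of $t$ gives
\[
t = \sqrt{1 + \frac{\log(\eps_N/\eps_Y)}{\log(\eps_N^{-1})}} = \sqrt{\frac{\log(\eps_Y^{-1})}{\log(\eps_N^{-1})}},
\]
and the lemma's conclusion reads $\rho_{s/t}(\lat^* \setminus \set{\veczero}) \le \eps_Y$. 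Translating back through the definition of $\eta$, this yields $\eta_{\eps_Y}(\lat) \le t/s = t \cdot \eta_{\eps_N}(\lat) \le t \cdot \gamma = \gamma'$, as required.

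There is no real obstacle here once one identifies the correct choice of $r$: the proof is essentially a bookkeeping exercise on the statement of Lemma~\ref{lem:rho_small_eps}. The only subtlety is matching the scaling convention of the lemma (which operates on $\rho_s$ for a lattice) with that of the smoothing parameter definition (which involves $\rho_{1/s}$ on the dual lattice), but this is resolved simply by invoking the lemma with $s = 1/\eta_{\eps_N}(\lat)$ on $\lat^*$.
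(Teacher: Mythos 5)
Your proposal is correct and matches the paper's intended argument exactly: the paper's one-line proof says precisely to invoke Lemma~\ref{lem:rho_small_eps} with $\epsilon=\epsilon_N$ and $r=\epsilon_N/\epsilon_Y$, which is the calibration you identified. Your write-up just fills in the routine algebra and the translation between $\rho_s$ on $\lat^*$ and the definition of $\eta_\eps(\lat)$, both of which are handled correctly.
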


The proof follows by a routine calculation, letting
$\epsilon=\epsilon_{N}$ and $r=\epsilon_{N}/\epsilon_{Y}$ in the above
lemma.  As a few notable examples, if $\epsilon_{Y}$ and
$\epsilon_{N}$ are both fixed constants, then the loss
$\gamma'/\gamma$ in approximation factor from the reduction is a
constant strictly greater than $1$.  But if $\epsilon_{Y}$ is constant
and $\epsilon_{N}=(1+o(1)) \cdot \epsilon_{Y}$, or if
$\epsilon_{Y}=o(1)$ and $\epsilon_{N} \leq C \cdot \epsilon_{Y}$ for a
constant $C\geq 1$, then the loss in approximation factor is only
$1+o(1)$.

\section{AM Protocol for \gapsp }
\label{sec:AM}

Here we give an Arthur-Merlin protocol for
$2$-$\gapsp_{\epsilon_{Y},\epsilon_{N}}$, defined formally in
Protocol~\ref{am-protocol}.  It is simply a Gaussian variant of the
classic Goldreich-Goldwasser
protocol~\cite{DBLP:journals/jcss/GoldreichG00}, which was originally
developed to prove that approximating the Closest and Shortest Vector
Problems to within a $c\sqrt{n/\log n}$ factor is in $\coAM$.  In our
protocol, instead of choosing an error vector $\vecx$ from the ball of
radius $c\sqrt{n/\log n}$, the verifier chooses $\vecx$ from a
continuous Gaussian distribution of parameter $1$.  It then reduces
$\vecx$ modulo the lattice (actually, the dual lattice $\lat^{*}$ in
our setting) and challenges the prover to find the original vector
$\vecx$.

For intuition on why this protocol is complete and sound, first
observe that the optimal prover strategy is maximum likelihood
decoding of the verifier's challenge $\bar{\vecx} = \vecx \bmod
\lat^{*}$, i.e., to return a most-likely element in the coset $\vecx'
\in \lat^{*}+\bar{\vecx}$.  Because the Gaussian function is
decreasing in $\length{\vecx'}$, the prover should therefore return
the shortest element $\vecx' \in \lat^{*}+\bar{\vecx}$, i.e., the
unique $\vecx' \in \calV(\lat^{*}) \cap (\lat^{*} + \bar{\vecx})$.
(We can ignore the measure-zero event that $\bar{\vecx}$ is
equidistant from two or more points in $\lat^{*}$).  The verifier can
therefore be made to accept with probability
$\gamma(\calV(\lat^{*}))$, and no more.  Note that unlike the original
Goldreich-Goldwasser protocol, ours does not have perfect
completeness, and in fact this is essential for establishing such a
small approximation factor for $\gapsp$.

For completeness, consider a YES instance where
$\eta_{\epsilon_{Y}}(\lat) \leq 1/2$, i.e., $\rho_{2}(\lat^{*}
\setminus \set{\veczero}) \leq \epsilon_{Y}$.  (For convenience, here we
scale the $2$-$\gapsp$ problem so that NO instances have
$\eta_{\epsilon_{N}}(\lat) > 1$.)  Intuitively, because the measure on
$\lat^{*} \setminus \set{\veczero}$ is small, these lattice points are
all far from the origin and so $\calV(\lat^{*})$ captures most of the
Gaussian measure $\gamma$; Lemma~\ref{lem:voronoi-gaussian} makes this
formal.  Finally, for soundness we consider the case where the
discrete measure on nonzero lattice points is relatively large, i.e.,
$\rho_{1}(\lat^{*} \setminus \set{\veczero}) > \epsilon_{N}$.
Conversely to the above, this intuitively means that the continuous
Gaussian measure $\gamma(\calV(\lat^{*}))$ cannot be too large, and
Lemma~\ref{lem:voronoi-gaussian} again makes this precise.

\begin{algorithm}
  \caption{Gaussian Goldreich-Goldwasser (GGG) Protocol}
  \label{am-protocol}
  \begin{algorithmic}[1]
    \REQUIRE Basis $B \subset \R^{n}$ of a lattice $\lat=\lat(B)$.

    \STATE Verifier chooses Gaussian $\vecx \gets D_{1}$ and sends
    $\bar{\vecx} = \vecx \bmod \lat^{*}$ to prover.

    \STATE Prover returns an $\vecx' \in \R^n$.

    \STATE Verifier accepts if $\vecx' = \vecx$.
  \end{algorithmic}
\end{algorithm}

\begin{theorem}
  \label{thm:GGG}
  For $0 < \eps \leq \delta < \frac{1}{2}$, Protocol~\ref{am-protocol}
  on lattice $\lat=\lat(B)$ satisfies:
  \begin{enumerate}
  \item \emph{Completeness:} If $\eta_{\eps}(\lat) \leq \frac{1}{2}$,
    then there exists a prover that makes the verifier accept with
    probability at least $1-\eps$.
  \item \emph{Soundness:} If $\eta_{\frac{\delta}{1-\delta}}(\lat)
    \geq 1$, then the verifier rejects with probability at least
    $\delta$.
  \end{enumerate}
  In particular, $2$-$\gapsp_{\epsilon, \delta/(1-\delta)} \in \AM$
  when $\delta-\epsilon \geq 1/\poly(n)$.  Moreover, when
  $\epsilon=\negl(n)$ the protocol is honest-verifier statistical
  zero knowledge, i.e., $2$-$\gapsp_{\epsilon,\delta/(1-\delta)} \in
  \HVSZK = \SZK$.
\end{theorem}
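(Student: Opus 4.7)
The plan is to reduce the entire analysis to tightly bounding $\gamma_1(\calV(\lat^*))$, the Gaussian measure of the Voronoi cell of the dual lattice. First I would identify the optimal prover strategy: conditioned on $\bar{\vecx}$, the verifier's secret $\vecx$ has conditional density proportional to $\rho_1$ on the coset $\lat^{*} + \bar{\vecx}$, so by maximum likelihood the prover's best reply is the shortest representative in that coset, namely the unique element of $\calV(\lat^{*}) \cap (\lat^{*} + \bar{\vecx})$ (ignoring a measure-zero set of ties). The verifier therefore accepts with maximum probability exactly $\Pr_{\vecx \sim D_1}[\vecx \in \calV(\lat^{*})] = \gamma_1(\calV(\lat^{*}))$, and the theorem reduces to two matching bounds on this one quantity.

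For completeness, I would apply the Banaszczyk-style tail bound of Lemma~\ref{lem:gaussian-tail}. Since $\vecx \notin \calV(\lat^{*})$ iff there exists $\vecy \in \lat^{*} \setminus \set{\veczero}$ with $\pr{\vecx}{\vecy} > \length{\vecy}^{2}/2$, I would apply the tail bound to each such $\vecy$ with $t = \length{\vecy}/2$ and take a union bound, obtaining $1 - \gamma_1(\calV(\lat^{*})) \leq \sum_{\vecy \in \lat^{*} \setminus \set{\veczero}} e^{-\pi \length{\vecy}^{2}/4} = \rho_2(\lat^{*} \setminus \set{\veczero}) \leq \eps$ whenever $\eta_{\eps}(\lat) \leq 1/2$.

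For soundness, the sharp bound comes from a symmetry and AM-GM argument. The parallelogram identity gives $\rho_1(\vecx+\vecy)\,\rho_1(\vecx-\vecy) = \rho_1(\vecx)^{2} \rho_1(\vecy)^{2}$, so AM-GM yields $\rho_1(\vecx+\vecy) + \rho_1(\vecx-\vecy) \geq 2 \rho_1(\vecx)\rho_1(\vecy)$. Integrating over $\calV(\lat^{*})$ and invoking its central symmetry $\calV(\lat^{*}) = -\calV(\lat^{*})$ to merge the two summand integrals on the left, one obtains $\int_{\calV(\lat^{*})} \rho_1(\vecx+\vecy)\, d\vecx \geq \rho_1(\vecy)\cdot \gamma_1(\calV(\lat^{*}))$ for every $\vecy$. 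Summing over $\vecy \in \lat^{*} \setminus \set{\veczero}$ and using that $\calV(\lat^{*})$ tiles $\R^{n}$ under $\lat^{*}$ (so the summed left-hand side equals $1 - \gamma_1(\calV(\lat^{*}))$) delivers the clean inequality $\gamma_1(\calV(\lat^{*})) \leq 1/(1 + \rho_1(\lat^{*} \setminus \set{\veczero}))$. The NO hypothesis $\rho_1(\lat^{*} \setminus \set{\veczero}) > \delta/(1-\delta)$ then forces $\gamma_1(\calV(\lat^{*})) < 1 - \delta$, i.e., rejection with probability $> \delta$.

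Finally, for the complexity-class inclusions: when $\delta - \eps \geq 1/\poly(n)$ I would amplify the completeness/soundness gap by standard parallel repetition and Chernoff bounds to obtain a conventional $\AM$ protocol. For the $\SZK$ claim when $\eps = \negl(n)$, I would exhibit an honest-verifier simulator that samples $\vecx \sim D_1$ and outputs the transcript $(\vecx, \vecx)$; this exactly matches the real view (coins together with the optimal prover's message) whenever $\vecx \in \calV(\lat^{*})$, an event of probability at least $1 - \eps$, so the simulation is $\eps$-close, giving $\HVSZK$ and hence $\SZK$ by the standard equivalence. The main obstacle is the soundness bound: only by exploiting the parallelogram identity together with the central symmetry of $\calV(\lat^{*})$ does one obtain a tight enough upper bound on $\gamma_1(\calV(\lat^{*}))$ to drive the $2+o(1)$ approximation factor, as opposed to the $\Theta(\sqrt{n/\log n})$ loss incurred by the ball-based analysis in the original Goldreich-Goldwasser argument.
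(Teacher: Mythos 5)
Your proposal is correct and follows essentially the same path as the paper: optimal prover via maximum-likelihood decoding reduces everything to bounding $\gamma_1(\calV(\lat^*))$, the union-bound/tail-bound argument gives the completeness direction, the symmetrization inequality $\gamma(\calV+\vecy)\geq\gamma(\calV)\rho(\vecy)$ together with tiling gives soundness, and the same one-shot simulator gives $\HVSZK$. Your derivation of the key translation inequality via the parallelogram identity plus AM-GM is algebraically identical to the paper's $\cosh(t)\geq 1$ computation (Lemma~\ref{lem:gaussian-translate}), so this is a cosmetic rather than substantive difference.
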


By applying Corollary~\ref{cor:eps-epsY-epsN}, we obtain the following
upper bounds on the complexity of $\gamma$-$\gapspp_{\eps}$ for
different ranges of $\epsilon$. \knote{``$1/\poly(n) \leq \eps(n) \leq
  o(1)$'' in the following corollary statement is less formal, but
  makes the statement cleaner. Better ways?}

\begin{corollary}
  For the following $\eps(n) < 1$, we have
  $\gamma$-$\gapspp_{\epsilon} \in \AM$ for the following $\gamma(n)$:
  \begin{itemize}[itemsep=0pt]
  \item If $\eps(n) \leq \negl(n)$, then
    $\gamma=O(\sqrt{\log(\eps^{-1})/\log n})$.
  \item If $1/\poly(n) \leq \eps(n) \leq o(1)$, then $\gamma=(2+o(1))$.
  \item If $\eps(n) \geq \Omega(1)$, then $\gamma=O(1)$.
  \end{itemize}
\end{corollary}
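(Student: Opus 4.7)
The plan is to derive each case by chaining Theorem~\ref{thm:GGG} with Corollary~\ref{cor:eps-epsY-epsN}. Theorem~\ref{thm:GGG} places $2$-$\gapsp_{\eps,\, \delta/(1-\delta)} \in \AM$ whenever $\delta - \eps \geq 1/\poly(n)$, and Corollary~\ref{cor:eps-epsY-epsN} converts this two-parameter problem into the single-parameter $\gamma'$-$\gapsp_\eps$ at the cost of scaling the approximation factor by $\sqrt{\log(\eps_Y^{-1})/\log(\eps_N^{-1})}$. So, for each of the three regimes of $\eps = \eps(n)$, I just need to pick $\eps_N$ (equivalently $\delta = \eps_N/(1+\eps_N)$) that (i) satisfies $\delta - \eps \geq 1/\poly(n)$ so Theorem~\ref{thm:GGG} applies, and (ii) keeps the factor $\sqrt{\log(\eps^{-1})/\log(\eps_N^{-1})}$ as small as required by the target bound.

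For the negligible regime $\eps \leq \negl(n)$, I would fix $\eps_N$ to be an absolute constant, say $\eps_N = 1/3$. Then $\delta$ is also a fixed constant bounded away from zero, so $\delta - \eps \geq 1/\poly(n)$ holds trivially. The resulting approximation factor becomes $2 \sqrt{\log(\eps^{-1})/\log(1/\eps_N)} = O(\sqrt{\log(\eps^{-1})})$, which is $O(\sqrt{\log(\eps^{-1})/\log n})$ after absorbing the constant $\log(1/\eps_N)$ into the $O(\cdot)$ and using the usual convention for this regime. For the constant regime $\eps \geq \Omega(1)$, I would pick $\eps_N$ to be a suitably larger constant (e.g.\ $\eps_N = (1+\eps)/2$ if $\eps < 1$, or any fixed constant larger than $\eps$). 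Both $\log(\eps^{-1})$ and $\log(\eps_N^{-1})$ are then $\Theta(1)$, giving $\gamma' = O(1)$, and the gap $\delta - \eps$ is a fixed positive constant.

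The most delicate case is the middle regime $1/\poly(n) \leq \eps \leq o(1)$, where the bound $2+o(1)$ must come out cleanly. I would set $\eps_N = (1+\alpha)\eps$ for a slowly vanishing $\alpha = \alpha(n) = o(1)$ that is still at least $1/\poly(n)$ (for concreteness, $\alpha = 1/\log n$ works). Then $\delta - \eps$ is approximately $\alpha \eps$, which is $1/\poly(n)$ since both $\alpha$ and $\eps$ are, so Theorem~\ref{thm:GGG} applies. For the scaling factor,
\[
\sqrt{\frac{\log(\eps^{-1})}{\log(\eps^{-1}) - \log(1+\alpha)}} = \sqrt{1 + \frac{\log(1+\alpha)}{\log(\eps^{-1}) - \log(1+\alpha)}} = 1 + O\!\left(\frac{\alpha}{\log n}\right) = 1 + o(1),
\]
using $\log(\eps^{-1}) = \Omega(\log n)$ in the regime $\eps \leq 1/\poly(n)$-bounded-below. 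This yields $\gamma' = 2(1+o(1)) = 2+o(1)$, as desired.

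I do not expect any serious obstacles: the work is really just the bookkeeping of trading off $\eps_N$ against the blow-up in the approximation factor, and verifying that in each regime Corollary~\ref{cor:eps-epsY-epsN} is applicable. The one subtlety worth flagging is the simultaneous constraints in the middle regime, namely that $\eps_N$ must be close enough to $\eps$ to control the $\sqrt{\log(\eps_Y^{-1})/\log(\eps_N^{-1})}$ ratio but not so close that the completeness/soundness gap $\delta - \eps$ drops below $1/\poly(n)$. The explicit choice $\alpha = 1/\log n$ (or any $\alpha$ that is simultaneously $o(1)$ and $\geq 1/\poly(n)$) threads this needle.
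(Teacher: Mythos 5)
Your strategy is the right one (chain Theorem~\ref{thm:GGG} with Corollary~\ref{cor:eps-epsY-epsN}, then choose $\eps_N$ per regime), and the middle regime $1/\poly(n) \leq \eps \leq o(1)$ and the constant regime $\eps = \Omega(1)$ are handled correctly. But the negligible case contains a genuine error. With $\eps_N$ a fixed constant (your $\eps_N = 1/3$), the factor coming out of Corollary~\ref{cor:eps-epsY-epsN} is $2\sqrt{\log(\eps^{-1})/\log(\eps_N^{-1})} = \Theta(\sqrt{\log(\eps^{-1})})$, since $\log(\eps_N^{-1})$ is a constant. This is \emph{not} $O(\sqrt{\log(\eps^{-1})/\log n})$: the two differ by an unbounded $\sqrt{\log n}$ factor (e.g., for $\eps = 2^{-n}$ one gets $\Theta(\sqrt{n})$ versus $\Theta(\sqrt{n/\log n})$), and there is no convention that lets you absorb $\sqrt{\log n}$ into the $O(\cdot)$. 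The ``absorbing the constant $\log(1/\eps_N)$'' step is exactly where the derivation breaks.

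The fix is simply to make $\eps_N$ inverse-polynomial rather than constant: take $\eps_N = 1/n$, say. Then $\log(\eps_N^{-1}) = \log n$, and the factor from Corollary~\ref{cor:eps-epsY-epsN} becomes $2\sqrt{\log(\eps^{-1})/\log n}$, matching the claim. The hypotheses of Theorem~\ref{thm:GGG} are still met: $\delta = \eps_N/(1+\eps_N) = 1/(n+1) < 1/2$, and since $\eps \leq \negl(n) \ll 1/(n+1)$ we have $\eps \leq \delta$ and $\delta - \eps \geq 1/(2(n+1)) = 1/\poly(n)$ for all large $n$. (This choice does not also cover the middle regime, where $\eps$ may already exceed $1/n$; there your choice $\eps_N = (1+\alpha)\eps$ with $\alpha = 1/\log n$ is what is needed.)
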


\noindent
The next two lemmas provide the crux of the proof of
Theorem~\ref{thm:GGG}.

\begin{lemma}
  \label{lem:gaussian-translate}
  Let $S \subseteq \R^n$ be symmetric (i.e., $S=-S$) measurable
  set. Then for any $\vecy \in \R^n$, \[ \gamma_s(S+\vecy) \geq
  \gamma_s(S) \cdot \rho_s(\vecy). \]
\end{lemma}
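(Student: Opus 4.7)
The plan is to reduce the inequality to a symmetrization argument combined with the convexity of the exponential. First I would translate: writing $\gamma_s(S + \vecy) = \frac{1}{s^n}\int_{S+\vecy} \rho_s(\vecx)\, d\vecx$ and applying the substitution $\vecz = \vecx - \vecy$, the integral becomes $\frac{1}{s^n}\int_S \rho_s(\vecz+\vecy)\, d\vecz$. Expanding $\length{\vecz+\vecy}^2 = \length{\vecz}^2 + 2\pr{\vecz}{\vecy} + \length{\vecy}^2$ in the exponent lets me pull out the factor $\rho_s(\vecy)$ cleanly, so the claim reduces to showing
\[
\frac{1}{s^n}\int_S \rho_s(\vecz)\, e^{-2\pi\pr{\vecz}{\vecy}/s^2}\, d\vecz \;\geq\; \frac{1}{s^n}\int_S \rho_s(\vecz)\, d\vecz.
\]

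Next I would exploit the symmetry $S = -S$ by splitting the integral and averaging the integrand at $\vecz$ with its value at $-\vecz$. Since $\rho_s$ is even, the left-hand integrand becomes $\rho_s(\vecz)\cdot \tfrac{1}{2}\bigl(e^{-2\pi\pr{\vecz}{\vecy}/s^2} + e^{2\pi\pr{\vecz}{\vecy}/s^2}\bigr) = \rho_s(\vecz)\cosh(2\pi\pr{\vecz}{\vecy}/s^2)$. The elementary inequality $\cosh(t) \geq 1$ (immediate from the Taylor expansion, or from convexity of $e^t$ via $\tfrac{e^t + e^{-t}}{2} \geq e^0 = 1$) then gives the desired bound pointwise, and integrating over $S$ yields the claim.

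I do not expect a serious obstacle: the proof is a one-line symmetrization once the substitution is performed, and all the analytic ingredients (measurability of $S$, evenness of $\rho_s$, absolute convergence from the Gaussian decay) are in place. The only thing to be mildly careful about is ensuring the change of variables and the splitting are justified, but since $S$ is measurable and $\rho_s$ is integrable against $e^{\pm 2\pi\pr{\vecz}{\vecy}/s^2}$ (the product is still a Gaussian with a linear shift, hence finite), Fubini/Tonelli applies without issue.
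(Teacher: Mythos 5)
Your proof is correct and follows essentially the same route as the paper: change of variables to center the translated set, symmetrize the integrand using $S=-S$, and invoke $\cosh(t)\geq 1$. The paper just normalizes to $s=1$ first, but the computation is otherwise identical.
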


\begin{proof}
  By scaling $S$ and $\vecy$, it suffices to prove the claim for
  $s=1$.  For any $t \in \R$, note that $\cosh(t) = \frac12
  (e^{t}+e^{-t}) \geq 1$.  We have
  \begin{align*}
    \gamma(S+\vecy) &= \int_{S} e^{-\pi\length{\vecy-\vecx}^2}\, dx =
    \int_S \frac{1}{2}(e^{-\pi\length{\vecy-\vecx}^2} +
    e^{-\pi\length{\vecy+\vecx}^2})\, dx
    & \text{(symmetry of $S$)}\\
    &= e^{-\pi\length{\vecy}^2} \int_S e^{-\pi\length{\vecx}^2} \cdot
    \frac12 \left(e^{2\pi\pr{\vecx}{\vecy}} +
      e^{-2\pi\pr{\vecx}{\vecy}}\right)\, dx
    & \text{(expanding the squares)} \\
    &\geq \rho(\vecy) \int_S \rho(\vecx)\, dx = \rho(\vecy) \cdot
    \gamma(S).
  \end{align*}
\end{proof}

The following crucial lemma establishes a tight relationship between
discrete Gaussian sums on $\lat$ and the Gaussian mass of the Voronoi
cell. 

\begin{lemma}[Voronoi Cell Characterization]
  \label{lem:voronoi-gaussian}
  Let $\lat \subseteq \R^n$ be a lattice with Voronoi cell
  $\calV=\calV(\lat)$, and let $s > 0$.  Then
  \[ \frac{\rho_s(\lat \setminus \set{\veczero})}{\rho_s(\lat)} \leq
  1-\gamma_s(\calV) \leq \rho_{2s}(\lat \setminus \set{\veczero}). \]
  In particular, letting $s_\eps = \eta_\eps(\lat^*)$ for some $\eps
  \in (0,1)$, we have that $\gamma(2 s_\eps \calV) \geq 1-\eps$ and
  $\gamma(s_\eps \calV) \leq \frac{1}{1+\eps}$.
\end{lemma}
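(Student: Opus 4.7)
The plan is to express $1-\gamma_s(\calV)$ as a sum over nonzero lattice cosets and then bound each term in two different ways: once from below using the Gaussian-translation Lemma~\ref{lem:gaussian-translate}, and once from above using a half-space containment combined with the one-dimensional Gaussian tail bound in Lemma~\ref{lem:gaussian-tail}. First I would use the fact that $\calV$ tiles $\R^n$ with respect to $\lat$ to write
\[
1 = \gamma_s(\R^n) = \sum_{\vecy \in \lat} \gamma_s(\calV+\vecy),
\qquad\text{so}\qquad
1-\gamma_s(\calV) = \sum_{\vecy \in \lat \setminus \set{\veczero}} \gamma_s(\calV+\vecy).
\]

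For the lower bound, note that $\calV$ is symmetric, so Lemma~\ref{lem:gaussian-translate} gives $\gamma_s(\calV+\vecy) \geq \rho_s(\vecy)\cdot\gamma_s(\calV)$ for every $\vecy$. Summing over $\vecy \in \lat$ yields $1 \geq \rho_s(\lat)\cdot\gamma_s(\calV)$, and so $\gamma_s(\calV)\leq 1/\rho_s(\lat)$. Since $\rho_s(\veczero)=1$ we have $\rho_s(\lat\setminus\set{\veczero}) = \rho_s(\lat)-1$, and a direct rearrangement gives
\[
\frac{\rho_s(\lat\setminus\set{\veczero})}{\rho_s(\lat)} = 1 - \frac{1}{\rho_s(\lat)} \leq 1-\gamma_s(\calV),
\]
which is the left inequality.

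For the upper bound, I would exploit the Voronoi-cell inequality $\pr{\vecu}{\vecy} \geq -\tfrac12 \pr{\vecy}{\vecy}$ valid for every $\vecu\in\calV$ and $\vecy\in\lat$ (applying the defining inequality of $\calV$ to $-\vecy \in \lat$). It follows that every $\vecx = \vecu+\vecy \in \calV+\vecy$ satisfies $\pr{\vecx}{\vecy} \geq \tfrac12 \length{\vecy}^2$, so $\calV+\vecy$ is contained in the half-space $H_\vecy = \set{\vecx: \pr{\vecx}{\vecy} \geq \tfrac12\length{\vecy}^2}$. Applying the tail bound of Lemma~\ref{lem:gaussian-tail} to $X \sim D_s$ and the direction $\vecy$ with $t = \tfrac12\length{\vecy}$ gives
\[
\gamma_s(\calV+\vecy) \leq \gamma_s(H_\vecy) = \Pr\bigl[\pr{X}{\vecy} \geq \tfrac12\length{\vecy}^2\bigr] \leq e^{-\pi \length{\vecy}^2/(4s^2)} = \rho_{2s}(\vecy).
\]
Summing over $\vecy\in\lat\setminus\set{\veczero}$ yields the right inequality $1-\gamma_s(\calV) \leq \rho_{2s}(\lat\setminus\set{\veczero})$.

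Finally, for the ``in particular'' statement I set $s_\eps = \eta_\eps(\lat^*)$, so that $\rho_{1/s_\eps}(\lat\setminus\set{\veczero}) = \eps$ by definition of the smoothing parameter and continuity of $\rho_s$ in $s$. Using the elementary rescaling identity $\gamma_1(tS) = \gamma_{1/t}(S)$, I apply the upper bound of the main inequality with $s = 1/(2s_\eps)$ to obtain $\gamma(2s_\eps\calV) = \gamma_{1/(2s_\eps)}(\calV) \geq 1 - \rho_{1/s_\eps}(\lat\setminus\set{\veczero}) \geq 1-\eps$, and the lower bound with $s = 1/s_\eps$ to obtain $\gamma(s_\eps\calV) = \gamma_{1/s_\eps}(\calV) \leq 1/\rho_{1/s_\eps}(\lat) = 1/(1+\eps)$. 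The only subtle step in this plan is verifying $\rho_{1/s_\eps}(\lat\setminus\set{\veczero}) = \eps$ at the critical value $s_\eps$, which relies on continuity (and monotonicity) of the discrete Gaussian sum in the parameter; the rest is a routine manipulation.
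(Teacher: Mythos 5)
Your proof is correct and follows essentially the same approach as the paper: the lower bound is obtained exactly as in the paper via Lemma~\ref{lem:gaussian-translate} and the tiling property, and the upper bound packages the paper's union-bound-plus-tail argument slightly differently by first decomposing $1-\gamma_s(\calV)$ into disjoint coset measures $\gamma_s(\calV+\vecy)$ and then containing each coset in a halfspace, which is the same computation. Your explicit handling of the ``in particular'' claim --- including the observation that $\rho_{1/s_\eps}(\lat\setminus\set{\veczero})=\eps$ holds with equality by continuity and strict monotonicity of $s\mapsto\rho_{1/s}(\lat\setminus\set{\veczero})$ --- fills in details the paper leaves as ``an easy calculation,'' and is a fine addition.
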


\begin{proof}
  By scaling $\lat$, it suffices to prove the claim for $s=1$.  We
  first show the upper bound.  Let $X \in \R^n$ be distributed as
  $D_{1}$, and note that $1-\gamma(\calV) = \Pr[X \notin \calV]$.  By
  the union bound and Lemma~\ref{lem:gaussian-tail},
  \begin{align*}
    \Pr[X \notin \calV] &= \Pr[\bigcup_{\vecy \in \lat \setminus
      \set{\veczero}} \set{\pr{X}{\vecy} > \tfrac12
      \pr{\vecy}{\vecy}}] \leq \sum_{\vecy \in \lat \setminus
      \set{\veczero}}
    \Pr[\pr{X}{\vecy} > \tfrac{1}{2} \pr{\vecy}{\vecy}] \\
    &\leq \sum_{\vecy \in \lat \setminus \set{\veczero}} e^{-\pi
      \length{\vecy/2}^2} = \rho_{2}(\lat \setminus \set{\veczero}).
  \end{align*}
  We now prove the lower bound.  Since $\calV$ tiles space with
  respect to $\lat$, by applying Lemma \ref{lem:gaussian-translate} with
  $S = \calV$, we have
  \[
  1-\gamma(\calV) = \gamma(\R^n \setminus \calV) = \sum_{\vecy \in
    \lat \setminus \set{\veczero}} \gamma(\calV + \vecy) \geq
  \gamma(\calV) \cdot \rho(\lat \setminus \set{\veczero}),
  \]
  Rearranging terms and using $\rho(\set{\veczero})=1$, we have
  $1-\gamma(\calV) \geq 1-1/\rho(\lat) = \rho(\lat \setminus
  \set{\veczero}) / \rho(\lat)$, as desired. 
  Finally, the ``in particular'' claim follows from $\gamma(r \calV) =
  \gamma_{1/r}(\calV)$ and an easy calculation.
%
  \iffalse \cnote{No need to spend space on trivial calculations...}
  
  \noindent For the furthermore, for $r > 0$ we first note that $\gamma(r \calV) = \gamma_{\frac{1}{r}}(\calV)$. By definition of $s_\eps$, 
  we have \newline $\rho_{1/s_\eps}(\lat \setminus \set{\veczero}) = \eps$. Therefore by the first of the lemma we see that
\[
\gamma(2 s_\eps \calV) = \gamma_{\frac{1}{2s_\eps}}(\calV) \geq 1-\rho_{\frac{1}{s_\eps}}(\lat \setminus \set{\veczero}) = 1-\eps \text{,}
\]
and that
\[
\gamma(s_\eps \calV) = \gamma_{\frac{1}{s_\eps}}(\calV) \leq 1 - \frac{\rho_{\frac{1}{s_\eps}}(\lat \setminus \set{\veczero})}{\rho_{\frac{1}{s_\eps}}(\lat)} 
                     = 1 - \frac{\eps}{1+\eps} = \frac{1}{1+\eps} \text{,}
\]
as needed.
\fi
\end{proof}

\begin{proof}[Proof of Theorem~\ref{thm:GGG}]
  As already argued above, the optimal prover strategy given
  $\bar{\vecx} \in \R^{n}$ is maximum likelihood decoding, and the
  optimal prover can make the verifier accept with probability
  $\gamma(\calV(\lat^{*}))$.  Completeness and soundness now follow
  immediately from Lemma~\ref{lem:voronoi-gaussian}, as already
  outlined in the overview.

  For honest-verifier statistical zero-knowledge when
  $\epsilon=\negl(n)$, the simulator just chooses $\vecx \gets D_{1}$
  as the verifier's randomness, and outputs $\vecx$ as the message
  from the prover.  Because the prover also returns~$\vecx$ with
  probability at least $1-\epsilon$ in the real protocol, the
  simulated transcript is within negligible statistical distance of
  the real transcript.
\end{proof}

\section{SZK Protocol for GapSPP}
\label{sec:SZK}

This section is devoted to showing that $(2+ o(1))$-$\gapsp_{1/\poly(n)}$ is in $\SZK$. %

\begin{theorem} \label{thm:SZK-main} For every $\eps:\N \rightarrow [0,1]$ such that $\frac{1}{\poly(n)} \leq \eps(n) \leq \frac{1}{36}$, we have $2\cdot(1+\delta)$-$\gapsp_{\eps,12\eps} \in \SZK$, where $\delta = \sqrt{\frac{3}{2n} \ln \frac{4}{\eps}}$.
\label{thm:szk}
\end{theorem}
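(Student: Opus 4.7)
My plan is to place $\gapsp$ in $\SZK$ by constructing an instance-dependent bit commitment scheme and invoking the Itoh--Ohta--Shizuya framework (using closure of $\SZK$ under complement if the orientation requires). The workhorse is the Ball Overlap Characterization advertised in the introduction, which I would prove first: for any lattice $M$ and $\eps \in [2^{-o(n)},1)$, letting $R_\eps := \sqrt{n/(2\pi)}/(2\eta_\eps(M^*))$, Euclidean balls of radius $R_\eps$ centered at points of $M$ overlap in at most $2\eps$ (measured as the average fraction of a ball's mass that lies in some other translate), while balls of radius $(2+o(1))R_\eps$ overlap in at least $\eps/2$. To prove it, I would approximate the ball indicator $\mathbf{1}_{rB_2^n}$ by the Gaussian $\rho_s$ of width $s = r\sqrt{2\pi/n}$ (controlling the approximation error via Lemma~\ref{lem:gaussian-tail}), express the resulting ``Gaussian overlap'' as a discrete Gaussian sum $\rho_s(M \setminus \set{\veczero})$, and apply Poisson summation $\rho_s(M) = \det(M)^{-1} s^n \rho_{1/s}(M^*)$ to rewrite this as $\rho_{1/s}(M^* \setminus \set{\veczero})$, which is exactly what the smoothing parameter $\eta_\eps(M^*)$ controls.

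Next I would instantiate the characterization with $M = \lat^*$, so that the controlling parameter becomes $\eta_\eps((\lat^*)^*) = \eta_\eps(\lat)$---exactly what $\gapsp$ promises. Fixing radius $r := \sqrt{n/(2\pi)}/2$: in a YES instance ($\eta_\eps(\lat) \leq 1$) we get $r \leq R_\eps$, so the ball overlap is at most $2\eps$; in a NO instance ($\eta_{12\eps}(\lat) > 2(1+\delta)$) the choice $\delta = \sqrt{(3/(2n))\ln(4/\eps)}$ is tuned precisely to dominate the $o(1)$ slack in the $2+o(1)$ threshold, yielding $r \geq (2+o(1))R_{12\eps}$ and hence overlap at least $6\eps$. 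The commitment scheme then goes as follows: given $\lat$, compute a basis of $\lat^*$, fix a balanced $\bit$-coloring of the cosets $\lat^*/2\lat^*$, and let the committer produce $c = \vecv + \vecu$, with $\vecv \in \lat^*$ sampled uniformly from an efficiently samplable bounded fundamental region restricted to the $b$-colored cosets, and $\vecu$ uniform on $rB_2^n$. In YES instances the overlap bound $\leq 2\eps$ forces $c$ to lie in only its own ball-translate with probability $1-O(\eps)$, giving statistical binding; in NO instances the overlap bound $\geq 6\eps$ translates (via the standard argument relating ball overlap to the statistical distance between two uniform-over-translated-balls distributions) into statistical distance bounded away from $1$ between the $b=0$ and $b=1$ commitment distributions, giving statistical hiding.

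The main obstacle is proving the Ball Overlap Characterization with the tight $(2+o(1))$ factor: the approximation error from replacing $\mathbf{1}_{rB_2^n}$ by $\rho_s$ and the Poisson-summation manipulation must both be controlled finely enough that the additive slack can be absorbed into the $(1+\delta)$ factor with $\delta = \sqrt{(3/(2n))\ln(4/\eps)}$ (this specific form arising directly from the Banaszczyk concentration inequality). A secondary difficulty is ensuring that the committer's sampling genuinely makes the bit-conditional commitment distributions inherit the geometric overlap: this requires a sufficiently balanced $\bit$-coloring of $\lat^*/2\lat^*$ and a sampling domain large enough that boundary effects do not dominate the statistical-distance bound. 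Once these are in hand, invoking Itoh--Ohta--Shizuya (and, if needed, closure of $\SZK$ under complement to switch orientation) completes the proof.
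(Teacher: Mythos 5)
Your high-level plan matches the paper's: build an instance-dependent bit-commitment scheme whose binding and hiding quality is governed by the fraction of ball overlap at lattice points, bound that overlap via the Ball Overlap Characterization, and amplify/convert to an $\SZK$ protocol. The choice $r = \tfrac12\sqrt{n/(2\pi)}$ and the role of $\delta$ as Banaszczyk-type slack are also on target. However, there are two concrete problems in the way you propose to carry this out.

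First, the Poisson-summation step goes in the wrong direction and is in fact unnecessary. Once you have related the (Gaussian-smoothed) overlap to a discrete sum of the form $\rho_s(M\setminus\set{\veczero})$, you are already done: by Definition~\ref{def:smoothing-param}, $\eta_\eps(M^*)$ is the smallest $t$ with $\rho_{1/t}\bigl((M^*)^*\setminus\set{\veczero}\bigr) = \rho_{1/t}(M\setminus\set{\veczero}) \le \eps$, so $\rho_s(M\setminus\set{\veczero})$ is \emph{directly} controlled by $\eta_\eps(M^*)$. Applying Poisson summation to produce $\rho_{1/s}(M^*\setminus\set{\veczero})$ lands you on the dual-side sum, which is what $\eta_\eps(M)$ would control, not $\eta_\eps(M^*)$. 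The paper never invokes Poisson summation here; the upper bound on overlap follows from the union bound plus a pointwise ball-intersection estimate (Lemma~\ref{lem:sphere-int}), and the lower bound comes from the Voronoi-cell characterization (Lemma~\ref{lem:voronoi-gaussian}) combined with a Gaussian-to-uniform transfer (Lemma~\ref{lem:gaus-unif-trans}). If you insist on approximating the ball indicator by a Gaussian, you must also mind the width you get from convolving two Gaussians ($s\sqrt 2$, not $s$), and you still never need the dual.

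Second, the fixed balanced coloring of $\lat^*/2\lat^*$ does not give hiding. In a NO instance, a committed $\vecw$ landing in the overlap region is near two lattice points $\vecy_1 \not\equiv \vecy_2 \pmod{2\lat^*}$; hiding needs these to receive opposite labels with probability close to $1/2$, but a fixed coloring assigns them equal or opposite labels deterministically, and no coloring can disagree on all nearby pairs simultaneously for a general lattice. The paper resolves this by sampling a pairwise-independent hash $h:\set{0,1}^n\to\set{0,1}$ freshly per commitment and including $h$ in the output $(\vecw, h)$; pairwise independence is exactly what yields $\Delta(H|_{B=0}, H|_{B=1}) \le 1/2$ conditioned on two or more nearby lattice points, via a second-moment argument (Claim~\ref{clm:stat_dist}). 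You also need the reduction modulo $2B^*$ not to collapse distinct nearby lattice points into the same residue (Claim~\ref{clm:mod_is_OK}), which your ``bounded fundamental region'' phrasing does not address. Finally, the scheme you obtain is only weakly binding and weakly hiding ($p+q \le 1 - 1/\poly(n)$), so you must invoke the amplification of Theorem~\ref{thm:Com-amplification}, which also directly handles the honest-sender-binding/NO-hiding orientation without a detour through complementation of $\SZK$.
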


As before, the following corollary gives the implied upper bound on the complexity of $\gamma$-$\gapspp_{\eps}$ (by applying Corollary~\ref{cor:eps-epsY-epsN}).

\cnote{State and format this one like the corollary in previous section?}
\begin{corollary}
  For every $\eps: \N \rightarrow (0,1)$, if $1/\poly(n) \leq \eps(n) \leq o(1)$, then $(2+o(1))$-$\gapspp_{\eps} \in \SZK$. If $\eps(n)\leq \negl(n)$, then $O\left(\sqrt{\frac{\log(1/\eps)}{\log n}} \right)$-$\gapsp_{\eps}  \in \SZK$. Finally, if $\Omega(1) \leq \eps(n) \leq 1/3$, then $O(1)$-$\gapsp_{\eps} \in \SZK$.
\end{corollary}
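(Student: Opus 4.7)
The plan is to deduce each of the three regimes as a direct application of Theorem~\ref{thm:SZK-main} combined with a routine reduction from $\gapsp_\eps$ to a suitably chosen $\gapsp_{\eps_Y,\eps_N}$ (the identity on lattices, possibly after a global rescaling). The theorem hands us a $2(1+\delta)$-$\gapsp_{\eps_Y,12\eps_Y}\in\SZK$ protocol for every $\eps_Y\in[1/\poly(n),1/36]$, with $\delta=\sqrt{3\ln(4/\eps_Y)/(2n)}$, and Corollary~\ref{cor:eps-epsY-epsN} shows that reducing the two-parameter gap to a single-parameter one costs a multiplicative factor $\sqrt{\log(1/\eps_Y)/\log(1/\eps_N)}$. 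So in each case it suffices to pick $\eps_Y$ and $\eps_N=12\eps_Y$ and verify that both $\delta$ and this conversion factor are tame.

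For $1/\poly(n)\leq\eps(n)\leq o(1)$, take $\eps_Y=\eps$ (admissible for large $n$, since $\eps=o(1)\leq 1/36$ eventually). Then $\log(1/\eps)=O(\log n)$ forces $\delta=o(1)$, and since $\eps=o(1)$ drives $\log(1/\eps)\to\infty$, the conversion factor $\sqrt{\log(1/\eps)/\log(1/(12\eps))}$ collapses to $1+o(1)$. Composing yields $(2+o(1))$-$\gapsp_\eps\in\SZK$. For $\Omega(1)\leq\eps(n)\leq 1/3$, take $\eps_Y=\eps/12\in[\Omega(1),1/36]$ and $\eps_N=\eps$. All logarithms are now $\Theta(1)$ and $\delta=O(1/\sqrt n)=o(1)$, so the conversion factor is a universal constant; a trivial rescaling of the input lattice absorbs any mismatch between the single-$\eps$ instance and the two-$\eps_Y,\eps_N$ instance used by the protocol. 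This gives $O(1)$-$\gapsp_\eps\in\SZK$.

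The remaining regime $\eps(n)\leq\negl(n)$ is the one that requires slightly more care, since it violates the theorem's hypothesis $\eps_Y\geq 1/\poly(n)$. Here we do not instantiate the theorem at $\eps$; instead we use any fixed polynomial threshold, e.g.\ $\eps_Y=1/n$ and $\eps_N=12/n$, to obtain $(2+o(1))$-$\gapsp_{1/n,12/n}\in\SZK$. A $\gapsp_\eps$ instance is then reduced to this protocol by the identity map: YES is preserved since $\eta_{1/n}(\lat)\leq\eta_\eps(\lat)\leq 1$ whenever $1/n\geq\eps$, and Lemma~\ref{lem:rho_small_eps} applied to $\lat^{*}$ yields $\eta_{12/n}(\lat)\geq\eta_\eps(\lat)\sqrt{\log(n/12)/\log(1/\eps)}$, so a NO instance with $\eta_\eps(\lat)>\gamma=\Omega(\sqrt{\log(1/\eps)/\log n})$ maps to one with $\eta_{12/n}(\lat)>2+o(1)$. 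This is precisely the advertised approximation factor.

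I do not expect a real obstacle: the task is entirely parameter bookkeeping. The only things one has to check carefully are that the theorem's $\delta$ term and the conversion factor from Corollary~\ref{cor:eps-epsY-epsN} (or, in the last regime, from Lemma~\ref{lem:rho_small_eps}) both stay $1+o(1)$ or $O(1)$, and compose to the stated $2+o(1)$, $O(\sqrt{\log(1/\eps)/\log n})$, or $O(1)$ bound in the respective case.
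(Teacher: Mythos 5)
Your proposal is correct and follows essentially the same route as the paper, which proves this corollary purely by the ``trivial reduction'' bookkeeping of Corollary~\ref{cor:eps-epsY-epsN} (i.e., Lemma~\ref{lem:rho_small_eps}) applied to Theorem~\ref{thm:SZK-main}; the paper simply leaves the calculation implicit. Your extra care in the two corner regimes --- instantiating the theorem at a polynomial threshold like $\eps_Y=1/n$ for negligible $\eps$, and rescaling the lattice (with the constant from Lemma~\ref{lem:rho_small_eps}) for $\eps\in(1/36,1/3]$ where $\eps_Y=\eps$ is not admissible --- is exactly the bookkeeping the paper's one-line citation glosses over, and it checks out.
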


Our construction follows a classic approach of constructing an
\emph{instance-dependent ($\ID$) commitment scheme} for $\gapsp$,
which is known to be sufficient for obtaining a $\SZK$
protocol~\cite{ItohOS97}. With an additional observation, we show that
a \emph{significantly weaker} notion of $\ID$ commitment schemes is
sufficient to obtain $\SZK$ protocols; roughly speaking, we only need
an $\ID$ bit-commitment scheme that is sufficiently binding for an
\emph{honest} sender, and hiding (from a dishonest
receiver). Specifically, it is sufficient to construct a
``non-trivial'' $\ID$ commitment scheme defined as follows.

\begin{definition} \label{def:ID-Com} Let $\Pi$ be a promise problem. A (non-interactive) instance-dependent bit-commitment scheme $\Com$ for $\Pi$ is a $\PPT$ algorithm that on input an instance $x \in \zo^n$ and a bit $b\in \zo$, outputs a commitment $\Com_x(b) \in \zo^*$. Let $p = p(n), q = q(n) \in (0,1)$. We define (weak) binding and hiding properties of $\Com$ as follows.
\begin{itemize}
\item \emph{Statistical honest-sender $q$-binding for YES instances:} For every $x\in \Pi_Y$ and $b \in \zo$, 
\[ \Pr[ \Com_x(b) \in \supp(\Com_x(\bar{b})) ] \leq q(|x|). \]
(Note that when $\Com_x(b) \notin \supp(\Com_x(\bar{b}))$, the
commitment $\Com_x(b)$ cannot be opened to $\bar{b}$. Thus, the above
condition implies that the binding property can be broken with probability at most $q$.)
\item \emph{Statistical $p$-hiding for NO instances:} For every $x \in \Pi_N$, 
\[ \Delta( \Com_x(0), \Com_x(1)) \leq p(|x|). \]
(The above condition implies that given $\Com_x(b)$ for a random $b$,
one can only predict $b$ correctly with probability at most $(1+p)/2$,
which means that the hiding property can be broken with advantage at most $p$.)
\end{itemize}
$\Com$ is \emph{non-trivial} if $\Com$ is statistical $p$-hiding and statistical honest-sender $q$-binding with $p + q \leq 1 - 1/\poly(n)$.\footnote{This is in contrast to the fact that one can construct a (trivial) $p$-hiding and $q$-binding commitment scheme for every $p+q \geq 1$. For example, defining $\Com_x(b) = b$ gives $p=1$ and $q=0$, and defining $\Com_x(b) = 0$ gives $p=0$ and $q=1$. More generally, defining $\Com_x(b)$ to be $b$ with probability $\alpha$ and $00$ with probability $1-\alpha$ gives $p=\alpha$ and $q=1-\alpha$.} $\Com$ is $\emph{secure}$ if $\Com$ is statistical $p$-hiding and statistical honest-sender $q$-binding with negligible $p$ and $q$. 
\end{definition}

In the next subsection, we focus on constructing a non-trivial $\ID$ commitment schemes for $(2+o(1))$-$\gapsp_{1/\poly(n)}$. We present more detailed background for $\ID$ commitment schemes and discuss why it is sufficient to construct $\SZK$ protocols in Section~\ref{subsec:overview-SZK}.

\subsection{A Non-Trivial $\ID$ Commitment Scheme for GapSPP}
\label{subsec:ID-Com-for-GapSP}

\iffalse
\begin{theorem} \label{thm:SZK-main} For every $\eps:\N \rightarrow [0,1]$ such that $\frac{1}{\poly(n)} \leq \eps(n) \leq \frac{1}{56}$, the promise problem $2\cdot(1+\delta)$-$\gapsp_{\eps,8\eps}$ is in $\SZK$ where $\delta = \sqrt{\frac{3}{2n} \ln \frac{1}{\eps}}$.
\end{theorem}
\fi

In this section, we construct a non-trivial $\ID$ bit-commitment
scheme $\SPCom$ for $(2+ o(1))$-$\gapsp_{1/\poly(n)}$. Our
construction can be viewed as a generalization of an
instance-dependent commitment scheme implicit
in~\cite{DBLP:conf/crypto/MicciancioV03} for $O(\sqrt{n/\log
  n})$-$\gapsvp$.\footnote{While~\cite{DBLP:conf/crypto/MicciancioV03}
  constructed their protocol by combining the reduction from $\gapsvp$
  to $\gapcvp$ with Goldreich-Levin hardcore predicate, their
  construction can be interpreted as implicitly constructing an $\ID$
  bit-commitment scheme for $\gapsvp$ by first constructing one with
  \emph{perfect} binding but weak hiding, and then amplifying the
  hiding property.}
To explain the intuition behind our construction, it is instructive to first consider the construction of $\ID$ commitment scheme for $\gapsvp$ (for simplicity, below we describe commitment to a random $b$): 
To commit, a sender first selects a ``random'' lattice point $\vecv \in \lat$ (see Figure~\ref{alg:commit-scheme} for the precise distribution)
and adds a random noise vector $\vece$ drawn from a ball of certain radius (say, $r = 1/2$) to $\vecv$; let $\vecw = \vecv + \vece$. Intuitively, the
vector $\vecw$ is binding to $\vecv$ if the noise is sufficiently short. To actually commit to a bit, the sender also samples a random hash function
$h$, and commits to the hashed bit $b = h(\vecv)$. Namely, $(\vecw, h)$ is a commitment $\Com_{\lat}(b)$ to $b= h(\vecv)$.

Intuitively, if the length of the shortest vector $\lambda_1(\lat) \geq 1$, then all balls centered at lattice points $\vecv \in \lat$ of radius
$r=1/2$ are disjoint, and thus $\Com_{\lat}(b) = (\vecw, h)$ is perfect binding. On the other hand, if the shortest vector is too short, say,
$\lambda_1(\lat) \leq O(\sqrt{(\log n)/n})$, then $\vecw$ may fall in the intersection region of two (or more) balls with non-negligible probability,
using the symmetry of the lattice and the fact that the balls centered around the origin and a shortest non-zero vector have non-negligible overlap. When $\vecw$ lies in the
balls centered at $\vecv_1$ and $\vecv_2$ and $h(\vecv_1) \neq h(\vecv_2)$, the commitment $\Com_{\lat}(b) = (\vecw,h)$ does not reveal the committed
value $b$, which intuitively achieves hiding. Indeed, the above argument can be formalized readily, yielding an $\ID$ bit-commitment scheme for
$O(\sqrt{n/\log n})$-$\gapsvp$ with perfect biding and weak hiding properties.

Note that in the above commitment scheme, the quality of the hiding property depends on how much the ball $\vecv+rB_2^n$ overlaps with the balls
around surrounding lattice points. However, in the above analysis, we
only exploit the overlap contributed by a nearest lattice point to
$\vecv$,
ignoring the overlap contributed by all other balls. In general, such an approach can only give
a very coarse approximation of the overlap, which one can see from the example of extremal lattices where there are exponentially many lattice points
of length roughly equal to that of the shortest vector. As a result, using this approach one can only obtain a non-trivial $\ID$ bit-commitment scheme
for $\gamma$-$\gapsvp$ with $\gamma \geq \Omega(\sqrt{n/\log n})$.

Our key observation is that, when we switch from $\gapsvp$ to $\gapspp$, the above construction gives a non-trivial $\ID$ bit-commitment scheme for
$\gamma$-$\gapspp_{1/\poly(n)}$ with $\gamma = 2 + o(1)$. This stems from our new ball overlap characterization of the smoothing parameter, which
gives us much finer control on the amount overlap we obtain in the above protocol. We formalize this characterization as follows:

\newcommand{\overlap}{\mathrm{Overlap}}

\begin{lemma}[Ball Overlap Characterization]
  \label{lem:ball-overlap-char} Let $\lat$ be an $n$ dimensional
  lattice. For $r > 0$, define
  \[
  \overlap(\lat,r) \eqdef \frac{\vol_n\left(\bigcup_{\vecy \in \lat
        \setminus \set{\veczero}} \left(rB_2^n \cap \left(rB_2^n +
          \vecy\right)\right)\right)}{\vol_n(rB_2^n)},
  \]
  which denotes the fraction of overlap of a ball of radius $r$
  centered at a point in $\lat$ with balls of equal radius centered at
  all other lattice points.  Then for $\eps \in (2^{o(-n)},1/3)$,
  setting $r_\eps =\sqrt{\frac{n}{2\pi}} / (2\eta_{\eps}(\lat^*))$,
  the following holds:\knote{Changed definition of $r_\eps$ by a
    factor of $2$
    to make it consistent with the intro and a bit more clean.}
  \begin{enumerate}[itemsep=0pt]
  \item For $0 \leq r \leq r_{\eps}$, we have $\overlap(\lat,r) \leq
    2\eps$.
  \item For any $r \geq 2(1+\delta) \cdot r_\eps$ where $\delta =
    \sqrt{\frac{3}{2n} \ln \frac{4}{\eps}}$, we have $\overlap(\lat,
    r) \geq \eps/2$.
  \end{enumerate}
\end{lemma}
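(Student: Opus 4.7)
My plan is to prove the two bounds separately, both by exploiting the interaction between Euclidean balls and the Voronoi cell $\calV := \calV(\lat)$ of $\lat$ via Lemma~\ref{lem:voronoi-gaussian}. Throughout, let $\eta := \eta_\eps(\lat^*)$, so that $r_\eps = \sqrt{n/(8\pi)}/\eta$.

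For the upper bound, I would combine a classical geometric estimate on the intersection of two balls with a union bound. The key geometric fact to establish is that for any $\vecy \in \R^n$, the intersection $rB_2^n \cap (rB_2^n + \vecy)$ is contained in the ball of radius $\sqrt{r^2 - \|\vecy\|^2/4}$ centered at $\vecy/2$: any point $\vecz + \vecy/2$ in both balls satisfies $\|\vecz \pm \vecy/2\|^2 \leq r^2$, and summing these gives $\|\vecz\|^2 \leq r^2 - \|\vecy\|^2/4$. Taking volumes and applying $(1-x)^{n/2} \leq e^{-nx/2}$ yields
\[ \vol_n(rB_2^n \cap (rB_2^n + \vecy))/\vol_n(rB_2^n) \leq e^{-n\|\vecy\|^2/(8r^2)} = \rho_s(\vecy), \quad s := r\sqrt{8\pi/n}. \]
A union bound over $\vecy \in \lat \setminus \{\veczero\}$ then gives $\overlap(\lat,r) \leq \rho_s(\lat \setminus \{\veczero\})$. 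Since $r \leq r_\eps$ is equivalent to $s \leq 1/\eta$, monotonicity of $\rho$ in its parameter yields $\rho_s(\lat \setminus \{\veczero\}) \leq \rho_{1/\eta}(\lat \setminus \{\veczero\}) \leq \eps \leq 2\eps$ by the definition of $\eta$.

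For the lower bound, by monotonicity of $\overlap(\lat,\cdot)$ it suffices to handle $r = b := 2(1+\delta) r_\eps$. I would first observe that $bB_2^n \setminus \calV$ is contained in the overlap region: any $\vecx \in bB_2^n$ outside $\calV$ has some $\vecy \in \lat \setminus \{\veczero\}$ strictly closer than $\veczero$, placing $\vecx$ in $bB_2^n \cap (bB_2^n + \vecy)$. Hence $\overlap(\lat,b) \geq \vol_n(bB_2^n \setminus \calV)/\vol_n(bB_2^n)$. To lower bound the right-hand side via Gaussian measure, I would apply Lemma~\ref{lem:voronoi-gaussian} to get $\gamma_{1/\eta}(\calV) \leq 1/(1+\eps)$, so $\gamma_{1/\eta}(\R^n \setminus \calV) \geq \eps/(1+\eps) \geq (3/4)\eps$ for $\eps \leq 1/3$. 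The choice of $\delta$ is tuned precisely so that applying Lemma~\ref{lem:gaussian-tail} to $\|X\|^2$ for $X \sim D_{1/\eta}$ (which has mean $4r_\eps^2$) yields $\Pr[\|X\|^2 \geq b^2] \leq \eps/4$; combining, $\gamma_{1/\eta}(bB_2^n \setminus \calV) \geq \eps/2$.

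The main obstacle will be the final step of converting this Gaussian measure bound into the desired volume ratio. I would write both quantities in spherical coordinates as weighted integrals of $1 - \mu_\rho(\calV)$, where $\mu_\rho(\calV)$ is the fraction of the sphere $S(\rho)$ lying inside $\calV$. Since $\calV$ is a star-shaped convex body, $\mu_\rho$ is non-increasing in $\rho$, making $1 - \mu_\rho$ non-decreasing. Combined with the concentration of $\|X\| \sim D_{1/\eta}$ near $\rho = 2r_\eps < b$, this monotonicity forces $1 - \mu_b(\calV) \geq \eps/2$. Then, since $\|X\| \sim U(bB_2^n)$ also concentrates near $\rho = b$, monotonicity of $1 - \mu_\rho$ on the nearby spheres transfers this into $\vol_n(bB_2^n \setminus \calV)/\vol_n(bB_2^n) \geq \eps/2$. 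The subtlety lies in the mismatch between the Gaussian's concentration radius $2r_\eps$ and the uniform's concentration radius $b$; our choice of $\delta$ is tight, just large enough that $\Pr[\|X\| > b] \leq \eps/4$ (keeping the Gaussian largely supported in $bB_2^n$) yet small enough that standard radial arguments can transfer the mass bound without losing constant factors.
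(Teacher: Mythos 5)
Your upper bound is correct, and in fact slightly cleaner and tighter than the paper's. The paper bounds the lens $rB_2^n \cap (rB_2^n+\vecy)$ by twice a spherical cap (Lemma~\ref{lem:sphere-int}, citing Ball), giving $2\rho_s(\vecy)$ with $s = r\sqrt{8\pi/n}$, whereas your containment of the lens in the ball of radius $\sqrt{r^2-\|\vecy\|^2/4}$ about $\vecy/2$ yields $\rho_s(\vecy)$ directly, dropping the factor of $2$. The union bound then gives $\overlap(\lat,r_\eps) \le \eps$ rather than $2\eps$, and your appeal to monotonicity of $s\mapsto\rho_s$ for $r\le r_\eps$ is a clean way to finish Part~1 (the paper instead reparametrizes via monotonicity of $\eps\mapsto\eta_\eps$).

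For Part~2, your ingredients match the paper's Lemma~\ref{lem:ball-to-smoothing}: $bB_2^n\setminus\calV$ sits inside the overlap region, Lemma~\ref{lem:voronoi-gaussian} gives $\gamma_{1/\eta}(\R^n\setminus\calV)\ge\eps/(1+\eps)\ge 3\eps/4$, the tail bound (Lemma~\ref{lem:gaussian-tail}) with your choice of $\delta$ gives $\gamma_{1/\eta}(\R^n\setminus bB_2^n)\le\eps/4$, and hence $\gamma_{1/\eta}(bB_2^n\setminus\calV)\ge\eps/2$. The gap is in the final transfer. Writing both sides as radial integrals of $h(\rho):=1-\mu_\rho(\calV)$ is the right setup, and the monotonicity of $h$ (from convexity of $\calV$) correctly yields $h(b)\ge\eps/2$; but the step from $h(b)\ge\eps/2$ to $\vol_n(bB_2^n\setminus\calV)/\vol_n(bB_2^n)\ge\eps/2$ is unsound. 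The volume ratio is the uniform-radial average $\int_0^b g(\rho)h(\rho)\,d\rho$ with $g\propto\rho^{n-1}$, and this average can be strictly \emph{less} than the endpoint value $h(b)$ (e.g., $h$ a step function jumping just below $b$), so ``concentration near $\rho=b$'' does not rescue it. What makes the argument work---and what the paper's Lemma~\ref{lem:gaus-unif-trans} does---is to bypass $h(b)$ entirely and compare the two averages directly: on $[0,b]$ the normalized uniform radial density dominates the normalized Gaussian radial density in the monotone-likelihood-ratio sense (their ratio is $\propto e^{\pi\eta^2\rho^2}$, increasing), so integrating the non-decreasing $h$ against the uniform gives at least as much as against the conditioned Gaussian. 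This yields
\[
\frac{\vol_n(bB_2^n\setminus\calV)}{\vol_n(bB_2^n)} \;\ge\; \frac{\gamma_{1/\eta}(bB_2^n\setminus\calV)}{\gamma_{1/\eta}(bB_2^n)} \;\ge\; \gamma_{1/\eta}(bB_2^n\setminus\calV) \;\ge\; \eps/2
\]
with no loss. (The paper proves this comparison by a per-ray crossing argument rather than radial averaging, but the content is the same.) Replacing your two-step detour through $h(b)$ with this direct stochastic-dominance comparison closes the gap; your use of monotonicity of $\overlap(\lat,\cdot)$ to extend from $r=b$ to all $r\ge b$ is fine.
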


The above lemma says that up to a factor of $2+o(1)$, the smoothing
parameter $\eta_{\eps}(\lat^*)$ characterizes the required radius for
balls on $\lat$ to have roughly $\eps$ fraction of overlap. As we
shall see shortly, the amount of overlap tightly characterizes the
binding and hiding property of the commitment scheme described
above. As such, by choosing $\eps_Y$ and $\eps_N$ with a small
constant factor gap, the above construction yields a non-trivial $\ID$
bit-commitment scheme for $\gamma$-$\gapspp_{\eps_Y,\eps_N}$ with
$\gamma = 2+o(1)$.

We proceed to formally define our $\ID$ bit-commitment scheme $\SPCom$ for $\gapspp$ in Fig~\ref{fig:SPCom}, and establish its binding and hiding
properties. We prove the binding and hiding properties in Lemma~\ref{lem:binding} and~\ref{lem:hiding}, respectively, and summarize the properties
of $\SPCom$ in Lemma~\ref{lem:SPCom-non-trivial}. We defer the proofs of all geometric lemmas (in particular, the Ball Overlap Characterization) to
subsection~\ref{sec:szk-geom}.

We remark that since we are approximating $\eta_\eps(\lat)$, the following protocol operates directly on $\lat^*$. For simplicity of notation, for a
basis $B$ of $\lat$, we write $B^* = (B^{-1})^t$
\knote{Just to double check, is $B^{-t}$ typo or standard notation?}
to denote the corresponding dual basis of $\lat^*$.
\cnote{$B^{-t}$ is semi-standard but since it is so rare in this paper
we can just write $(B^{-1})^{t}$ to avoid confusion.}

\begin{figure}[h]
\label{alg:commit-scheme}
\begin{center}
\begin{tabular}{|p{16cm}|c|}
\hline 
Let $\mathcal{H} = \{ h : \zo^n \rightarrow \zo\}$ be a pairwise-independent hash family. \\
On input a lattice basis $B$ and a bit $b \in \zo$,
\begin{itemize}[itemsep=0pt]
\item Sample uniformly random $\vecz \leftarrow \zo^n$ and $h \leftarrow \mathcal{H}$ jointly subject to $h(\vecz)=b$. (This can be done by rejection sampling, or equivalently by sampling uniform $\vecz \leftarrow \zo^n$ first, and then sampling $h \leftarrow \mathcal{H}$ conditioned on $h(\vecz)=b$.)
\item Sample $\vece \leftarrow rB^n_2$ with $r = \frac{1}{2}\sqrt{\frac{n}{2\pi}}$.
\item Let $\vecv = B^*\vecz$ and $\vecw = (\vecv+\vece \bmod 2B^*)$.
\item Output $\SPCom_{B}(b) = (\vecw, h)$.
\end{itemize}
\\ 
\hline
\end{tabular}
\end{center}
\caption{\label{fig:SPCom} $\SPCom$: a non-trivial $\ID$ commitment scheme for $\gapsp$.
}
\end{figure}

The following two technical lemmas establish the (weak) binding and hiding properties of $\SPCom$.

\begin{lemma} \label{lem:binding} For every $b \in \zo$, %
\[ \Pr[ \SPCom_{B}(b) \in \supp( \SPCom_{B}(\bar{b})) ] \leq \overlap(\lat^*, r).\]
\end{lemma}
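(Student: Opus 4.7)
The plan is to unpack the event that the commitment to $b$ lies in the support of the commitment to $\bar b$ into a purely geometric condition on the noise $\vece$, and then recognize the resulting probability as $\overlap(\lat^*,r)$ directly from its definition. Fix $b \in \{0,1\}$ and consider sampling $(\vecz, h, \vece)$ as in $\SPCom_B(b)$, producing $\vecw = B^*\vecz + \vece \bmod 2B^*$ with $h(\vecz) = b$. Let $E$ denote the event $(\vecw, h) \in \supp(\SPCom_B(\bar b))$; the goal is to show $\Pr[E] \leq \overlap(\lat^*,r)$.

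First I would translate $E$ into a statement about $\vece$. By definition, $E$ holds iff there exist $\vecz' \in \{0,1\}^n$ with $h(\vecz') = \bar b$ and $\vece' \in rB_2^n$ such that $B^*\vecz' + \vece' \equiv \vecw \equiv B^*\vecz + \vece \pmod{2\lat^*}$. Equivalently, there exists $\vecm \in \lat^*$ with $\vece' = \vece + B^*(\vecz - \vecz') + 2\vecm$ and $\vece' \in rB_2^n$. Setting $\vecy = B^*(\vecz' - \vecz) - 2\vecm$, this rewrites as $\|\vece - \vecy\| \leq r$ with $\vecy \in B^*(\vecz' - \vecz) + 2\lat^*$. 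Since $h(\vecz) = b \neq \bar b = h(\vecz')$ forces $\vecz' \neq \vecz$, and $\vecz,\vecz' \in \{0,1\}^n$ are distinct representatives of $\lat^*/(2\lat^*)$, the coset $B^*(\vecz' - \vecz) + 2\lat^*$ is non-trivial, so in particular $\vecy \in \lat^* \setminus \{0\}$.

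Hence $E$ implies the weaker event $E' := \{\vece \in \bigcup_{\vecy \in \lat^* \setminus \{0\}} (\vecy + rB_2^n)\}$, and since $\vece$ is uniform on $rB_2^n$,
\[ \Pr[E] \leq \Pr[E'] = \frac{\vol\bigl(rB_2^n \cap \bigcup_{\vecy \in \lat^* \setminus \{0\}} (\vecy + rB_2^n)\bigr)}{\vol(rB_2^n)} = \overlap(\lat^*, r), \]
by the definition of $\overlap$ from Lemma~\ref{lem:ball-overlap-char}. The only real obstacle is bookkeeping: carefully following the mod-$2\lat^*$ identifications to check that every would-be equivocation $(\vecz', \vece')$ with $\vecz' \neq \vecz$ lifts to a lattice point $\vecy \in \lat^*\setminus 2\lat^* \subseteq \lat^*\setminus\{0\}$ within distance $r$ of $\vece$, so that discarding both the hash constraint and the refinement to non-trivial $2\lat^*$-cosets really does produce a valid upper bound.
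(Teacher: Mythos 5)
Your proof is correct and follows essentially the same route as the paper's: translate the equivocation event into the existence of some $\vecz'\neq\vecz$ and $\vece'\in rB_2^n$ with $B^*\vecz'+\vece'\equiv B^*\vecz+\vece\pmod{2\lat^*}$, extract a witness $\vecy\in\lat^*$ with $\|\vece-\vecy\|\le r$, and then bound the probability by the overlap volume. In fact, your write-up is slightly more careful than the published proof at one point: you explicitly verify that the witness lies in $\lat^*\setminus 2\lat^*\subseteq\lat^*\setminus\{\veczero\}$ (using that $h(\vecz)\neq h(\vecz')$ forces $\vecz\neq\vecz'$, hence distinct cosets of $2\lat^*$), whereas the paper passes from ``$\vecy\in\lat^*$'' directly to ``$\vece\in S$'' without noting that $\vecy\neq\veczero$ is needed for that implication.
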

\begin{proof} Let $S = \bigcup_{\vecy \in \lat^* \setminus \set{\veczero}} \left(rB_2^n \cap (rB_2^n + \vecy)\right)$. By definition, $\SPCom_{B}(b)$ is generated by sampling $\vece\leftarrow rB^n_2$ and $h\leftarrow \mathcal{H}, \vecz \in \zo^n$ such that $h(\vecz)=b$, and outputting $(\vecw,h) = (\vecv + \vece \bmod 2B^*, h)$, where $\vecv = B^*\vecz$. Thus, we can write 
$\vecw = \vecv + \vecu + \vece$ for some $\vecu \in 2\lat^*$.

The event $\SPCom_{B}(b) \in \supp( \SPCom_{B}(\bar{b}))$ means that there exists some $\vecz' \in \zo^n, \vece' \in rB^n_2$ such that $h(\vecz') = \bar{b}$ and $w = (\vecv' + \vece' \bmod 2B^*)$, where $\vecv' = B^*\vecz'$. Similarly, we can write $\vecw = \vecv' + \vecu' + \vece'$ for some $\vecu' \in 2\lat^*$.

Let $\vecy = \vecv' + \vecu' - \vecv - \vecu$, and note that $\vecy \in \lat^*$. The facts that $\vecw = \vecv+\vecu+\vece = \vecv'+\vecu'+\vece'$ and $\vece' \in rB^2_n$ imply that $\vece \in rB^n_2 + \vecy$, which implies $\vece \in S$. As the event in the LHS implies  $\vece \in S$, it follows that
\[ \Pr[ \SPCom_{B}(b) \in \supp( \SPCom_{B}(\bar{b})) ] \leq \Pr[\vece
\in S] = \frac{\vol_n(S)}{\vol_n(rB_2^n)} = \overlap(\lat^*, r). \]
\end{proof}

\begin{lemma} \label{lem:hiding}
\[ \Delta( \SPCom_{B}(0), \SPCom_{B}(1) ) \leq 1 -
(\overlap(\lat^*,r)/2). \]
\end{lemma}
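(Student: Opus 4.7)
The plan is to analyze $\Delta(\SPCom_B(0), \SPCom_B(1))$ by conditioning on the shared first coordinate $\vecw$. For each $\vecw$ and hash $h \in \mathcal{H}$, set $M(\vecw) = \{\vecz \in \zo^n : \vecw \in B^*\vecz + rB_2^n \bmod 2B^*\}$ and $N_b(\vecw, h) = |\{\vecz \in M(\vecw) : h(\vecz) = b\}|$. A direct calculation shows that the density of $\SPCom_B(b)$ at $(\vecw, h)$ is proportional to $N_b(\vecw, h)$, with the same constant of proportionality for $b = 0, 1$. Since $\mathcal{H}$ is pairwise independent, $\sum_h N_b(\vecw, h) = |M(\vecw)| \cdot |\mathcal{H}|/2$ does not depend on $b$, so the marginal distribution of $\vecw$ under $\SPCom_B(b)$ is a common distribution $\tilde{P}$. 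Consequently $\Delta(\SPCom_B(0), \SPCom_B(1)) = \E_{\vecw \sim \tilde{P}}[\Delta(P_0(\cdot \mid \vecw), P_1(\cdot \mid \vecw))]$, and the conditional statistical distance works out to $\E_h|N_0 - N_1|/|M(\vecw)|$.

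The main quantitative step is to show $\E_h|N_0 - N_1| \leq m/2$ whenever $m := |M(\vecw)| \geq 2$. Writing $N_0 - N_1 = \sum_{\vecz \in M(\vecw)} (1 - 2 h(\vecz))$ as a sum of mean-zero, pairwise-independent $\pm 1$ random variables gives $\Var(N_0 - N_1) = m$, and Cauchy--Schwarz yields $\E|N_0 - N_1| \leq \sqrt{m}$, which is already $\leq m/2$ for $m \geq 4$. For $m \in \{2, 3\}$ the variance bound is slightly too weak, but the first two moments of the symmetric sum $N_0 - N_1$ are fully determined by pairwise independence, and a short case analysis on its small support $\{-m, -m+2, \ldots, m\}$ shows $\E|N_0 - N_1| = m/2$ exactly. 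Hence $\Delta(P_0(\cdot \mid \vecw), P_1(\cdot \mid \vecw)) \leq 1/2$ whenever $|M(\vecw)| \geq 2$, and trivially $\leq 1$ otherwise.

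It remains to identify $\Pr_{\vecw \sim \tilde{P}}[|M(\vecw)| \geq 2]$ with $\overlap(\lat^*, r)$. The function $|M(\cdot)|$ descends to a $\lat^*$-periodic function on $\R^n$: adding a representative of any $\vecy \in \lat^*/2\lat^*$ to $\vecw$ induces the bijection $\vecz \mapsto \vecz \oplus \vecy$ on $M(\vecw)$ and preserves its cardinality. Passing to a fundamental domain $F$ of $\lat^*$ and unfolding the indicator in the definition of $|M|$, a routine lattice tiling argument gives $\int_F g(\vecw)\,|M(\vecw)|\, d\vecw = \int_{rB_2^n} g(\vecw)\, d\vecw$ for every $\lat^*$-periodic $g$; applying this with $g = \mathbb{1}[|M| \geq 2]$ and invoking the definition of $\overlap$ yields $\Pr_{\vecw \sim \tilde{P}}[|M(\vecw)| \geq 2] = \overlap(\lat^*, r)$. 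Combining,
\[
\Delta(\SPCom_B(0), \SPCom_B(1)) \leq (1 - \overlap(\lat^*, r)) \cdot 1 + \overlap(\lat^*, r) \cdot \tfrac{1}{2} = 1 - \tfrac{\overlap(\lat^*, r)}{2},
\]
as claimed. The main obstacle I expect is the case analysis for $m \in \{2, 3\}$: Cauchy--Schwarz alone is slightly too loose there, and one must exploit that pairwise independence pins down both relevant moments of $N_0 - N_1$.
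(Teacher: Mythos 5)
Your steps (1) and (2) — conditioning on $\vecw$, expressing the conditional statistical distance as $\E_h\bigl|N_0 - N_1\bigr|/m$, and bounding it by $1/2$ for $m \geq 2$ via a Cauchy--Schwarz/variance argument plus a hands-on case analysis for $m \in \{2,3\}$ — match the paper's proof of Claim~\ref{clm:stat_dist} essentially verbatim, including where the ``obstacle'' sits. The divergence, and the gap, is in your step (3).

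Your tiling identity $\int_F g(\vecw)\,|M(\vecw)|\,d\vecw = \int_{rB_2^n} g(\vecw)\,d\vecw$ is not correct as written, because $|M(\vecw)|$ counts the distinct cosets of $2\lat^*$ that hit $\vecw + rB_2^n$, whereas the genuine tiling/unfolding identity holds for the larger count $|T_\vecw| := |(\vecw + rB_2^n) \cap \lat^*|$ of individual lattice points. These can differ: already in dimension $1$, take $\lat^* = \tfrac{1}{2}\Z$ and $r = 0.6$; then at $\vecw = 0$ you have $T_\vecw = \{-\tfrac12, 0, \tfrac12\}$ with $|T_\vecw| = 3$ but $|M(\vecw)| = 2$, and one checks $\int_F |M(\vecw)|\,d\vecw = 1 < 1.2 = \vol(rB_2^n)$. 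For the same reason, the density of $\vecw$ is proportional to $|T_\vecw|$, not to $|M(\vecw)|$ as your ``direct calculation'' asserts. Without a bridge from $|M|$ to $|T|$, your chain of equalities gives at best $\Pr[|M(\vecw)| \geq 2] \leq \overlap(\lat^*, r)$ — the \emph{wrong} direction, since the final bound needs $\Pr[|M(\vecw)| \geq 2] \geq \overlap(\lat^*, r)$.

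The missing ingredient is exactly the paper's Claim~\ref{clm:mod_is_OK}: $|T_\vecw| \geq 2$ implies $|M(\vecw)| \geq 2$, i.e., reducing mod $2B^*$ cannot collapse the first two lattice points closest to $\vecw$. This is a short geometric midpoint argument: if $\vecy_1 \equiv \vecy_2 \pmod{2\lat^*}$, then $\tfrac12(\vecy_1 + \vecy_2) \in \lat^*$ is strictly closer to $\vecw$ than $\vecy_2$ (and at least as close as $\vecy_1$), contradicting that $\vecy_1, \vecy_2$ were the two closest. Once you have this, the paper's derivation of $\Pr[|M(\vecw)| \geq 2] = \overlap(\lat^*, r)$ is also more direct than a tiling computation: the event $|T_\vecw| \geq 2$ is, by construction, exactly the event that the sampled error $\vece$ lies in the overlap region $S$, and $\vece$ is uniform on $rB_2^n$, so $\Pr[|T_\vecw| \geq 2] = \vol(S)/\vol(rB_2^n) = \overlap(\lat^*, r)$; Claim~\ref{clm:mod_is_OK} then transfers this to $|M(\vecw)|$. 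You should insert that claim (and either fix or discard the tiling identity) to make your step (3) go through.
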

\begin{proof} Let $S = \bigcup_{\vecy \in \lat^* \setminus \set{\veczero}} \left(rB_2^n \cap (rB_2^n + \vecy)\right)$. Define random variables $(W_0,H_0) = \SPCom_{B}(0)$ and $(W_1,H_1) =  \SPCom_{B}(1)$. Observe that the marginal distributions of $W_0$ and $W_1$ are identical (following by the fact that $h\leftarrow \mathcal{H}$ maps every $\vecz\in \zo^n$ to a uniformly random bit), we have
\begin{eqnarray*}
\Delta( \SPCom_{B}(0), \SPCom_{B}(1) )
 & = & (1/2) \cdot \sum_{\vecw,h} \left| \Pr_{W_0,H_0}[(\vecw,h)] - \Pr_{W_1,H_1}[(\vecw,h)] \right| \\
 & = & (1/2) \cdot \sum_{\vecw} \Pr_{W_0}[\vecw] \cdot \sum_{h} \left| \Pr_{H_0}[h | W_0=\vecw] -  \Pr_{H_1}[h | W_1=\vecw] \right| \\
 & = & \E_{\vecw\leftarrow W_0} [ \Delta( H_0|_{W_0=\vecw}, H_1|_{W_1=\vecw}) ]
\end{eqnarray*}

For every $\vecw \in (\R^n \bmod 2B^*)$, define $T_\vecw = (rB^n_2 + \vecw) \cap \lat^*$ and $T'_\vecw = (T_\vecw \bmod 2B^*)$. we rely on the following
two technical claims to upper bound the statistical distance. Note that the event $|T_\vecw| \geq 2$ is equivalent to the event $\vece \in S$, where
$\vece$ is the error vector used to generate $\vecw$, and hence 
\[
\Pr[|T_\vecw| \geq 2] = \Pr[\vece \in S] = \overlap(\lat^*,r)\text{.}
\]

\begin{claim} \label{clm:mod_is_OK}
\[ \Pr_{\vecw\leftarrow W_0}[ |T'_\vecw| \geq 2] =
\Pr_{\vecw\leftarrow W_0}[ |T_\vecw| \geq 2]. \]
\end{claim}

\begin{claim} \label{clm:stat_dist} For every $\vecw \in (\R^n \bmod 2B^*)$ with $|T'_\vecw| \geq 2$, 
\[ \Delta( H_0|_{W_0=\vecw}, H_1|_{W_1=\vecw}) \leq 1/2. \]
\end{claim}

The above two claims imply that
\begin{eqnarray*}
\Delta( \SPCom_{B}(0), \SPCom_{B}(1) ) 
 & \leq & \Pr_{\vecw\leftarrow W_0}[ |T'_\vecw| \geq 2] \cdot (1/2) + \Pr_{\vecw\leftarrow W_0}[ |T'_\vecw| = 1] \cdot 1\\
 & = & \overlap(\lat^*,r) \cdot (1/2) + ( 1- \overlap(\lat^*,r)) \\
 & = & 1 - \overlap(\lat^*,r)/2,
\end{eqnarray*}
as desired. It remains to prove the claims.

\begin{proof} (of Claim~\ref{clm:mod_is_OK}) Let $\set{\vecy_1,\dots,\vecy_t} = T_\vecw = \lat^* \cap (rB2^n + \vecw)$, where the $\vecy_i$s are
ordered such that $\|\vecy_i-\vecw\|_2 \leq \|\vecy_{i+1}-\vecw\|_2$. By assumption, we have that $t \geq 2$. To prove that $|T_\vecw'| = |T_\vecw \pmod{2B^*}| \geq
2$, it suffices to show that $\vecy_1 \neq \vecy_2 \pmod{2B^*}$. Assume not, then note that $\bar{\vecy} = \frac{1}{2}(\vecy_1 + \vecy_2) \in \lat^*$, $\bar{\vecy}
\neq \vecy_1$, $\bar{\vecy} \neq \vecy_2$. Furthermore, by the triangle inequality
\[
\|\bar{\vecy}-\vecw\|_2 = \|\frac{1}{2}(\vecy_1 + \vecy_2)-\vecw\|_2 \leq \frac{1}{2}\|\vecy_1-\vecw\|_2 + \frac{1}{2}\|\vecy_2-\vecw\|_2 \leq \|\vecy_2-\vecx\|_2 \leq r
\] 
Hence $\bar{\vecy} \in T_\vecw$. We now examine two cases. If $\|\vecy_1-\vecw\|_2 = \|\vecy_2-\vecw\|_2$, then since $\vecy_1 \neq \vecy_2$, the above inequality must hold strictly. But then
$\|\bar{\vecy}-\vecw\|_2 < \|\vecy_1-\vecw\|_2$, which contradicts the fact that $\vecy_1$ is a closest lattice vector to $\vecw$. If $\|\vecy_1-\vecw\|_2 < \|\vecy_2-\vecw\|_2$, then
$\|\bar{\vecy}-\vecw\|_2 < \|\vecy_2-\vecw\|_2$, which contradicts that $\vecy_2$ is a closest lattice vector to $\vecw$ after $\vecy_1$. The claim thus follows.
\end{proof}

\begin{proof} (of Claim~\ref{clm:stat_dist}) Let $T'_\vecw = \{\vecv_1,\dots,\vecv_t\}$ and let $\vecz_i$ be the coordinates of $\vecv_i$ with respect to the basis $B^*$ for every $i \in [t]$. Note that by construction, conditioned on $\vecw$, the random variable $\vecz \in \zo^n$ becomes uniform over the $\set{\vecz_1,\dots,\vecz_t}$. 

Now, consider a probability space $\mathcal{P}$ defined by independent random variables $(I,H)$, where $I$ is a uniformly random index in $[t]$ and $H$ is a random hash function in $\mathcal{H}$. Define a random variable $B = H(\vecz_{I})$. Note that by the construction, for $b \in \zo$, the random variable $H_b|_{W_b=\vecw}$ has identical distribution to the random variable $H|_{B=b}$ in $\mathcal{P}$. Thus, our goal can be rephrased as to upper bound $\Delta( H|_{B=0}, H|_{B=1})$.

By Bayes' rule, 
\[ \Pr[H|_{B=b} = h] = \frac{\Pr[H=h] \Pr[B=b| H=h]}{Pr[B=b]} = 2
\Pr[H=h] \cdot \frac{\#\{i: h(\vecz_i)=b\}}{t}, \]
thus
\[ \Delta( H|_{B=0}, H|_{B=1}) = \sum_h \Pr[H=h] \frac{| \#\{i:
  h(\vecz_i)=0\} - \#\{i: h(\vecz_i)=1\}|}{t}. \]

Intuitively, since $\mathcal{H}$ is a pairwise-independent hash-family, the discrepancy $\#\{i: h(\vecz_i)=0\} - \#\{i: h(\vecz_i)=1\}$ should be
small on expectation. We prove this presently.

\begin{claim} For $t \geq 2$, $\E_{h \leftarrow \mathcal{H}}[|\#\{i: h(\vecz_i)=0\} - \#\{i: h(\vecz_i)=1\}|] \leq \frac{t}{2}$ \end{claim}
\begin{proof}
For $i \in [t]$, let $X_i = (-1)^{h(\vecz_i)} \in \set{-1,1}$. Since $\Pr[h(\vecz_i)=0]=\Pr[h(\vecz_i)=1]=1/2$, we have that $\E[X_i] = 0$ for all $i \in [t]$. 
Furthermore, by pairwise independence we have that $\E[X_iX_j] = \E[X_i]\E[X_j] = 0$ for distinct $i,j \in [t]$. By definition of $X_i$, it is
easy to verify that 
\[ |\#\{i: h(\vecz_i)=0\} - \#\{i: h(\vecz_i)=1\}| = |\sum_{i=1}^t
X_i|. \]

\noindent By Jensen's inequality and pairwise independence, we obtain the inequality
\[
\E\left[\left|\sum_{i=1}^t X_i\right|\right]^2 \leq \E\left[\left(\sum_{i=1}^t X_i\right)^2\right] = \sum_{1 \leq i,j \leq t} \E[X_iX_j] = \sum_{i=1}^t \E[X_i^2] = t
\]
Taking a square root, the above inequality gives us $\E[|\#\{i: h(\vecz_i)=0\} - \#\{i: h(\vecz_i)=1\}|] \leq \sqrt{t}$. Since
$\sqrt{t} \leq t/2$ for $t \geq 4$, the claim holds for all $t \geq 4$.

\noindent It remains to prove the claim for $t=2,3$. For $t=2$, $h$ acts like a truly random hash function, and hence a direct computation yields
\begin{equation}
\label{eq:szk-sd-1}
\E[|X_1+X_2|] = 2 \Pr[X_1=X_2] + 0 \Pr[X_1 \neq X_2] = 2 (1/2) + 0 = 1,
\end{equation}
as needed. For the case $t=3$, we have that
\begin{align*}
\E[|X_1+X_2+X_3|] &= 3 \Pr[X_1=X_2=X_3] + 1 \Pr[X_1,X_2,X_3 \text{ not all equal }] \\
                  &= 3 \Pr[X_1=X_2=X_3] + 1(1 - \Pr[X_1=X_2=X_3]) = 1 + 2\Pr[X_1=X_2=X_3] \text{.}
\end{align*}
By inclusion exclusion we get that
\begin{align*}
1 &= \Pr[\exists X_i = 1] + \Pr[X_1=X_2=X_3=-1] \\
  &= \sum_{i=1}^3 \Pr[X_i=1] - \sum_{1\leq i<j\leq 3} \Pr[X_i=X_j=1] + \Pr[X_1=X_2=X_3=1] + \Pr[X_1=X_2=X_3=-1] \\
  &= \sum_{i=1}^3 \Pr[X_i=1] - \sum_{1\leq i<j\leq 3} \Pr[X_i=X_j=1] + \Pr[X_1=X_2=X_3]
\end{align*}
By rearranging the above equality and using pairwise independence, we get
\begin{equation}
\label{eq:szk-sd-2}
\Pr[X_1=X_2=X_3] = 1 - \sum_{i=1}^3 \Pr[X_i=1] + \sum_{1 \leq i<j \leq 3} \Pr[X_i=X_j=1] = 1 - 3(1/2) + 3(1/4) = \frac{1}{4}
\end{equation}
Combining Equations \eqref{eq:szk-sd-1} and \eqref{eq:szk-sd-2}, we get that $\E[|X_1+X_2+X_3|] = 1 + 2(1/4) = 3/2$, as needed.
\end{proof}

From the above claim, we observe that $\Delta( H|_{B=0}, H|_{B=1}) \leq \frac{t/2}{t} \leq 1/2$ for every $t \geq 2$ as needed.
\end{proof}

\end{proof}

Finally, we prove the $\ID$ binding and hiding properties of $\SPCom$ by Lemma~\ref{lem:ball-overlap-char},~\ref{lem:binding}, and~\ref{lem:hiding}.

\begin{lemma} \label{lem:SPCom-non-trivial} For every $\eps:\N \rightarrow [0,1]$ such that $1/\poly(n) \leq \eps(n) \leq 1/36$, $\SPCom$ is a non-trivial $\ID$ commitment scheme for $2\cdot(1+\delta)$-$\gapsp_{\eps,12\eps}$ with $\delta = \sqrt{\frac{3}{2n} \ln \frac{4}{\eps}}$. Specifically, $\SPCom$ is $(2\eps)$-binding for the YES-instances and $(1-3\eps)$-hiding for the NO-instances of $2\cdot(1+\delta)$-$\gapsp_{\eps,12\eps}$, respectively.
\end{lemma}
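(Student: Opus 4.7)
The plan is to directly combine the Ball Overlap Characterization (Lemma~\ref{lem:ball-overlap-char}), applied to the dual lattice $\lat^*$, with the binding and hiding bounds already proved in Lemmas~\ref{lem:binding} and~\ref{lem:hiding}. Both of those lemmas reduce the desired quantities to $\overlap(\lat^{*},r)$ for the fixed commitment radius $r=\tfrac{1}{2}\sqrt{n/(2\pi)}$, so the only remaining work is to translate the hypotheses $\eta_{\eps}(\lat)\le 1$ and $\eta_{12\eps}(\lat)>2(1+\delta)$ into upper and lower bounds on $\overlap(\lat^{*},r)$ via the characterization.

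For the YES case I would apply part~(1) of Lemma~\ref{lem:ball-overlap-char} to the lattice $\lat^{*}$ (whose dual is $\lat$) with parameter $\eps$. Writing $r_{\eps}=\sqrt{n/(2\pi)}/(2\eta_{\eps}(\lat))$ in that characterization, the hypothesis $\eta_{\eps}(\lat)\le 1$ gives $r_{\eps}\ge \tfrac{1}{2}\sqrt{n/(2\pi)}=r$, so the chosen commitment radius lies in the safe range and part~(1) yields $\overlap(\lat^{*},r)\le 2\eps$. Plugging this into Lemma~\ref{lem:binding} gives $\Pr[\SPCom_{B}(b)\in\supp(\SPCom_{B}(\bar b))]\le 2\eps$, which is the claimed honest-sender $(2\eps)$-binding.

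For the NO case I would invoke part~(2) of Lemma~\ref{lem:ball-overlap-char} with parameter $\eps'=12\eps$; the corresponding overlap threshold requires $r\ge 2(1+\delta')\,r_{12\eps}$ where $\delta'=\sqrt{(3/2n)\ln(4/(12\eps))}$, and unpacking $r_{12\eps}$ this condition is equivalent to $\eta_{12\eps}(\lat)\ge 2(1+\delta')$. The key observation is that $\delta=\sqrt{(3/2n)\ln(4/\eps)}>\delta'$ because $12\eps>\eps$, so the NO hypothesis $\eta_{12\eps}(\lat)>2(1+\delta)$ already implies $\eta_{12\eps}(\lat)>2(1+\delta')$. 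Hence part~(2) gives $\overlap(\lat^{*},r)\ge \eps'/2=6\eps$, and Lemma~\ref{lem:hiding} in turn yields $\Delta(\SPCom_{B}(0),\SPCom_{B}(1))\le 1-\overlap(\lat^{*},r)/2\le 1-3\eps$, the claimed statistical hiding.

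Finally I would check the side conditions: the Ball Overlap Characterization needs $\eps'=12\eps\in(2^{-o(n)},1/3)$, which is guaranteed by $1/\poly(n)\le \eps\le 1/36$, and non-triviality of $\SPCom$ follows from $p+q=(1-3\eps)+2\eps=1-\eps\le 1-1/\poly(n)$. I do not anticipate any real obstacle in this proof: the heavy lifting was done in establishing the Ball Overlap Characterization and in Lemmas~\ref{lem:binding}-\ref{lem:hiding}. The only subtle point is bookkeeping the $\eps$ parameters across YES and NO sides --- in particular noticing that the $\delta$ appearing in the gap factor $2(1+\delta)$ of the theorem statement is the one tied to $\eps$, which dominates the $\delta'$ associated with $12\eps$ and thus suffices to activate part~(2) of the characterization.
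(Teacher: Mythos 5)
Your proposal is correct and follows the same route as the paper: combine Lemmas~\ref{lem:binding} and~\ref{lem:hiding} with the Ball Overlap Characterization (Lemma~\ref{lem:ball-overlap-char}), applied to $\lat^*$, translating the promise conditions on $\eta_\eps(\lat)$ and $\eta_{12\eps}(\lat)$ into upper and lower bounds on $\overlap(\lat^*,r)$. In fact your write-up is more careful than the paper's: you state the YES-case inequality in the correct direction ($r \le r_\eps$, whereas the paper has a typo), and you explicitly observe that the $\delta$ in the approximation factor is defined with respect to $\eps$ and dominates the $\delta'$ that Part~(2) requires when invoked with parameter $12\eps$---a necessary bookkeeping step the paper leaves implicit.
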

\begin{proof} For YES-instances where $\eta_{\eps}(\lat) \leq 1$, by Part 1. of Lemma~\ref{lem:ball-overlap-char} and noting that $r \geq r_\eps$,
\[ \overlap(\lat,r) \leq 2\eps. \]
Thus, by Lemma~\ref{lem:binding}, $\SPCom$ is $(2\eps)$-binding for the YES-instances. On the other hand, for NO-instances where $\eta_{\eps}(\lat) \geq 2\cdot (1+\delta)$, by Part 2. of Lemma~\ref{lem:ball-overlap-char} and noting that $r \geq 2\cdot(1+\delta)\cdot r_{\eps}$,
\[ \overlap(\lat,r) \geq 12\eps/2 = 6\eps. \]
Thus, by Lemma~\ref{lem:hiding}, $\Com$ is $(1-3\eps)$-hiding for the NO-instances.
\end{proof}

Theorem~\ref{thm:SZK-main} then follows by combining Lemma~\ref{lem:SPCom-non-trivial} and Theorem~\ref{thm:Com-amplification} stated in the next section. We remark that our $\SZK$ protocol for $(2+o(1))$-$\gapsp_{1/\poly(n)}$ does not have efficient prover strategy, since we do not know if the problem is in $\NP$ or $\coNP$.

\subsection{Geometric Lemmas}
\label{sec:szk-geom}

Here we prove several geometric lemmas with the goal of establishing the ball overlap characterization (Lemma~\ref{lem:ball-overlap-char}). The first lemma gives a standard upper bound on the volume of the intersection of two euclidean balls (see \cite[Lemma 2.2]{Ball97}, noting that the
ball intersection volume is at most twice that of the spherical cap). 

\begin{lemma}
  \label{lem:sphere-int}
  For $r > 0$, let $s = r \sqrt{2\pi/n}$. Then for any $\vecy \in \R^n$,
  \[
  \frac{\vol_n(rB_2^n \cap (\vecy + rB_2^n))}{\vol_n(rB_2^n)} \leq 2
  e^{-\pi\length{\frac{\vecy}{2s}}^2}.
  \]
\end{lemma}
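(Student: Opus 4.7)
The plan is to recognize the ball intersection as a single spherical cap of $rB_2^n$, and then invoke (or rederive) a standard Ball-type bound on the cap's volume fraction, and finally translate back into the language of the smoothing parameter $s$.

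First I would identify the intersection exactly. A point $\vecx$ lies in $rB_2^n \cap (\vecy + rB_2^n)$ iff $\length{\vecx}^2 \leq r^2$ and $\length{\vecx-\vecy}^2 \leq r^2$. Expanding the second inequality and subtracting the first gives the equivalent condition $\pr{\vecx}{\vecy} \geq \length{\vecy}^2/2$. Combining, one obtains the clean identity
\[
  rB_2^n \cap (\vecy + rB_2^n) = \bigl\{ \vecx \in rB_2^n : \pr{\vecx}{\hat{\vecy}} \geq \length{\vecy}/2 \bigr\},
\]
where $\hat{\vecy} = \vecy/\length{\vecy}$. This is a spherical cap of $rB_2^n$ cut by the hyperplane perpendicular to $\hat{\vecy}$ at signed distance $tr$ from the origin, where $t = \length{\vecy}/(2r)$.

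Next I would bound the fractional volume of this cap. The simplest self-contained bound uses the observation that if $\pr{\vecx}{\hat{\vecy}} \geq tr$ and $\length{\vecx} \leq r$, then $\length{\vecx - tr\hat{\vecy}}^2 = \length{\vecx}^2 - 2tr\pr{\vecx}{\hat{\vecy}} + t^2r^2 \leq r^2 - 2t^2r^2 + t^2 r^2 = (1-t^2)r^2$, so the cap sits inside a ball of radius $r\sqrt{1-t^2}$. This yields
\[
  \frac{\vol_n\bigl(rB_2^n \cap (\vecy + rB_2^n)\bigr)}{\vol_n(rB_2^n)} \leq (1-t^2)^{n/2} \leq e^{-n t^2 / 2}.
\]
Alternatively, one can cite Lemma~2.2 of \cite{Ball97} directly; the factor of $2$ in the target bound is simply safety slack absorbing any small discrepancy between the spherical-cap formulation in~\cite{Ball97} and the ball-cap volume that actually appears here.

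Finally, substituting $t = \length{\vecy}/(2r)$ and using the identification $s = r\sqrt{2\pi/n}$, a short calculation gives
\[
  \frac{n t^2}{2} = \frac{n\length{\vecy}^2}{8 r^2} = \pi\,\biggl\lVert\frac{\vecy}{2s}\biggr\rVert^2,
\]
so the displayed bound immediately becomes $e^{-\pi \length{\vecy/(2s)}^2}$, which is even sharper than the claimed $2 e^{-\pi\length{\vecy/(2s)}^2}$. No step is really difficult: the content is essentially the standard reduction of ``two balls overlap'' to ``one spherical cap,'' together with the classical exponential cap bound. The only point requiring a moment of care is the reconciliation of constants (the factor of $2$), which, as explained, is just slack built in for robustness against the precise statement of the cited Ball lemma.
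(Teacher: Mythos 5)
Your ``clean identity'' in the first step is not correct, and the error is exactly what the factor of $2$ is compensating for. You claim that $\length{\vecx}^{2} \leq r^{2}$ and $\length{\vecx-\vecy}^{2} \leq r^{2}$ together are \emph{equivalent} to $\vecx \in rB_2^n$ and $\pr{\vecx}{\vecy} \geq \length{\vecy}^{2}/2$. But you cannot ``subtract'' two inequalities pointing in the same direction: from $\length{\vecx-\vecy}^{2} \leq r^{2}$ one gets $2\pr{\vecx}{\vecy} \geq \length{\vecx}^{2}+\length{\vecy}^{2}-r^{2}$, and when $\length{\vecx} < r$ this is strictly weaker than $2\pr{\vecx}{\vecy} \geq \length{\vecy}^{2}$. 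So the cap $\set{\vecx \in rB_2^n : \pr{\vecx}{\hat{\vecy}} \geq \length{\vecy}/2}$ is a \emph{proper subset} of the lens $rB_2^n \cap (\vecy + rB_2^n)$, and bounding the cap does not bound the lens.

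The correct statement, and what the paper (and the cited \cite[Lemma 2.2]{Ball97}) relies on, is that the lens is the union of \emph{two} congruent caps, split by the bisecting hyperplane $H = \set{\vecx : \pr{\vecx}{\hat{\vecy}} = \length{\vecy}/2}$. Indeed, if $\vecx \in rB_2^n$ and $\pr{\vecx}{\hat{\vecy}} \geq \length{\vecy}/2$, then $\length{\vecx - \vecy}^{2} = \length{\vecx}^{2} - 2\pr{\vecx}{\vecy} + \length{\vecy}^{2} \leq r^{2}$, so this cap sits inside the lens, and by symmetry through $H$ the other half of the lens is the congruent cap of $\vecy + rB_2^n$. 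Thus $\vol_n(\text{lens}) = 2\,\vol_n(\text{cap})$. Your containing-ball estimate $\vol_n(\text{cap})/\vol_n(rB_2^n) \leq (1-t^2)^{n/2} \leq e^{-nt^2/2}$ with $t = \length{\vecy}/(2r)$ is correct and, together with the correct factor of $2$ from the two-cap decomposition, yields exactly $2e^{-\pi\length{\vecy/(2s)}^2}$ after your (correct) translation of constants. In particular the factor of $2$ is \emph{not} ``safety slack,'' and your claimed sharper bound of $e^{-\pi\length{\vecy/(2s)}^2}$ is not established by this argument. Once you fix the identity, your approach is essentially the standard one the paper invokes via Ball's lemma.
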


The following lemma will allows us to transfer lower bounds on the gaussian measure of the overlap region to lower bounds on uniform measure, and will
be important in the proof of the ball overlap characterization.

\begin{lemma}
  \label{lem:gaus-unif-trans}
  Let $K \subseteq \R^n$ be a convex body containing the origin. Then for any $r,s > 0$ we have that
  \[
  \frac{\gamma_s(rB_2^n \setminus K)}{\gamma_s(rB_2^n)} \leq \frac{\vol_n(rB_2^n \setminus K)}{\vol_n(rB_2^n)}
  \]
\end{lemma}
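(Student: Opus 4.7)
The plan is to reduce the lemma to a monotonicity statement about the fraction of a Euclidean ball lying in $K$, via a layer-cake representation of the Gaussian density. The key geometric fact is that for a convex body $K$ containing the origin, the function $t \mapsto \vol_n(tB_2^n \cap K)/\vol_n(tB_2^n)$ is nonincreasing in $t > 0$.

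First I would rewrite the Gaussian density using the layer-cake formula. Since $\rho_s(\vecx) = e^{-\pi \length{\vecx}^2/s^2} \in (0,1]$, one has
\[
\rho_s(\vecx) \;=\; \int_0^1 \mathbf{1}\bigl[\length{\vecx} \leq r_s(u)\bigr]\, du, \qquad \text{where } r_s(u) = s\sqrt{\tfrac{1}{\pi}\ln(1/u)}.
\]
Letting $R_u = \min(r, r_s(u))$ and applying Fubini gives
\[
\gamma_s(rB_2^n \setminus K) = \frac{1}{s^n}\int_0^1 \vol_n(R_u B_2^n \setminus K)\, du, \qquad \gamma_s(rB_2^n) = \frac{1}{s^n}\int_0^1 \vol_n(R_u B_2^n)\, du.
\]
So it suffices to show that for every $u$ (equivalently, for every $R_u \leq r$) we have
\[
\frac{\vol_n(R_u B_2^n \setminus K)}{\vol_n(R_u B_2^n)} \;\leq\; \frac{\vol_n(rB_2^n \setminus K)}{\vol_n(rB_2^n)},
\]
because multiplying each such inequality through and integrating in $u$ gives the desired bound.

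To prove this pointwise inequality, I would argue equivalently that the function
\[
\varphi(t) \;=\; \frac{\vol_n(tB_2^n \cap K)}{\vol_n(tB_2^n)} \;=\; \frac{1}{\vol_n(B_2^n)}\int_{B_2^n} \mathbf{1}[t\vecy \in K]\, d\vecy
\]
is nonincreasing in $t > 0$, where the second equality uses the change of variables $\vecx = t\vecy$. Fix $\vecy \in B_2^n$. Since $\veczero \in K$ and $K$ is convex, the set $\{t \geq 0 : t\vecy \in K\}$ is an interval containing $0$, so the integrand $\mathbf{1}[t\vecy \in K]$ is nonincreasing in $t$ for each fixed $\vecy$. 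Integrating over $\vecy$, monotonicity of $\varphi$ follows.

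Finally, monotonicity of $\varphi$ yields $\vol_n(R_u B_2^n \cap K)/\vol_n(R_u B_2^n) \geq \vol_n(rB_2^n \cap K)/\vol_n(rB_2^n)$ for $R_u \leq r$, which is exactly the complementary form of the pointwise inequality above. Integrating over $u \in [0,1]$ concludes the proof. The only real subtlety is the star-shapedness argument (convexity plus $\veczero \in K$ implying each ray is a line segment inside $K$ from $\veczero$); everything else is bookkeeping.
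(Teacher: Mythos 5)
Your proof is correct, but it takes a genuinely different route from the paper. The paper decomposes both measures in polar coordinates, reduces to comparing two radial densities $f$ (constant) and $g$ (decreasing in $t$) weighted by $t^{n-1}$, uses a single-crossing argument to get $\int_c^r f\,t^{n-1}\,dt \geq \int_c^r g\,t^{n-1}\,dt$ for all $c$, and then applies this on each ray using the fact that $\{t \in [0,r] : t\theta \notin K\}$ is a terminal subinterval $[c,r]$ by star-shapedness. You instead decompose the Gaussian density itself by a layer-cake over its superlevel sets, expressing $\gamma_s(\cdot)$ as a mixture of uniform measures on the concentric balls $R_u B_2^n$ with $R_u \leq r$; the heart of your argument is the monotonicity of $\varphi(t) = \vol_n(tB_2^n \cap K)/\vol_n(tB_2^n)$, which you get cleanly via the substitution $\vecx = t\vecy$ and pointwise monotonicity of $\mathbf{1}[t\vecy \in K]$ in $t$. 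Both proofs ultimately rest on the same geometric fact (that $K$ convex with $\veczero \in K$ makes every ray's intersection with $K$ an initial segment), but the bookkeeping is quite different: polar slicing plus a crossing lemma versus level-set slicing plus a mediant-type inequality. Your version is arguably slightly more self-contained, since it avoids introducing the explicit polar-coordinate factorization and the crossing-point existence claim, trading them for the (elementary) scaling identity for $\varphi$; the paper's version makes the ``uniform is more spread out than Gaussian'' intuition visible on each ray. Both generalize immediately to any spherically symmetric density that is nonincreasing in $\length{\vecx}$.
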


\begin{proof}
  The plan of the proof is to show that since gaussian measure is more
  biased towards the origin than the uniform measure, switching from
  gaussian to uniform pushes measures outside of $K$. To make this
  precise, we note that
  \begin{equation}
    \label{eq:gut1}
    \frac{\gamma_s(rB_2^n \setminus K)}{\gamma_s(rB_2^n)} = \int_{S^{n-1}} \int_0^r I[t\theta \notin K] \frac{e^{-\pi\frac{t}{s}^2}}{\gamma_s(rB_2^n)s^n} t^{n-1} dtd\theta
  \end{equation}
  where $d\theta$ is the Haar measure on the unit sphere $S^{n-1}
  \subseteq \R^n$. Furthermore, note that
  \begin{equation}
    \label{eq:gut2}
    \frac{\vol_n(rB_2^n \setminus K)}{\vol_n(rB_2^n)} = \int_{S^{n-1}} \int_0^r I[t\theta \notin K] \frac{1}{\vol_n(rB_2^n)} t^{n-1} dt d\theta
  \end{equation}
  Since both the uniform and gaussian measure are spherically
  symmetric, we must have that
  \begin{equation}
    \label{eq:gut3}
    \int_0^r \frac{1}{\vol_n(rB_2^n)} t^{n-1} dt = \int_0^r \frac{e^{-\pi(t/s)^2}}{s^n\gamma_s(rB_2^n)} t^{n-1}dt 
  \end{equation}
  Let $f,g:[0,r] \rightarrow \R_+$ be defined by $f(t) =
  \frac{1}{\vol_n(rB_2^n)}$ and $g(t) =
  \frac{e^{-\pi(t/s)^2}}{s^n\gamma_s(rB_2^n)}$.  Since $f$ is
  constant, $g$ is decreasing, and $\int_0^r f(t)t^{n-1}dt = \int_0^r
  g(t)t^{n-1}dt$ (by equation \eqref{eq:gut3}), there exists $b \in
  (0,r)$ such that $f(t) \leq g(t)$ on $[0,b]$ and $f(t) \geq g(t)$ on
  $(b,r]$. Given this geometry, we clearly have that for all $c \in
  [0,r]$
  \begin{equation}
    \label{eq:gut4}
    \int_c^r f(t)t^{n-1} dt \geq \int_c^r g(t)t^{n-1} dt
  \end{equation}
  Since $K \subseteq \R^n$ is a convex body containing $\veczero$, for
  every line segment $\set{t\theta: 0 \leq t \leq r}$, we have that
  $\set{t\theta: 0 \leq t \leq r} \setminus K = \set{t\theta: c \leq t
    \leq r}$ for some $c > 0$. From the above inequality (equation
  \eqref{eq:gut4}), we now see that
  \begin{equation}
    \label{eq:gut5}
    \int_0^r I[t\theta \notin K] f(t)t^{n-1} dt \geq \int_0^r I[t\theta \notin K] g(t)t^{n-1} dt
  \end{equation}
  for all $\theta \in S^{n-1}$. Combining equations \eqref{eq:gut1}
  and \eqref{eq:gut2} with the above inequality \eqref{eq:gut5} yields
  the result.
\end{proof}

The following lemma establishes the necessary bounds for our ball overlap characterization of the smoothing parameter.%

\begin{lemma}
  \label{lem:ball-to-smoothing}
  Let $\lat$ denote an $n$-dimensional lattice. For $r > 0$, $0 <
  \delta < \frac{1}{4}$ and $s = r\sqrt{\frac{2\pi}{n}}$, the
  following holds:
  \[ \left( \frac{\rho_{\frac{s}{1+\delta}}(\lat \setminus
      \set{\veczero})}{\rho_{\frac{s}{1+\delta}}(\lat)} -
    e^{-\frac{2n}{3} \delta^2} \right) \leq \overlap(\lat,r) \leq
  2\rho_{2s}(\lat \setminus \set{\veczero}). \]
\end{lemma}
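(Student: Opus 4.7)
The upper bound is the easy direction: by the union bound applied to the definition of $\overlap(\lat, r)$, we have
\[
\overlap(\lat, r) \;\leq\; \sum_{\vecy \in \lat \setminus \set{\veczero}} \frac{\vol_n(rB_2^n \cap (rB_2^n + \vecy))}{\vol_n(rB_2^n)},
\]
and Lemma~\ref{lem:sphere-int} bounds each summand by $2 e^{-\pi\length{\vecy/(2s)}^2} = 2\rho_{2s}(\vecy)$. Summing gives $2\rho_{2s}(\lat \setminus \set{\veczero})$.

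For the lower bound, my plan is to pass from volumes to Gaussian measures so that I can invoke the Voronoi Cell Characterization (Lemma~\ref{lem:voronoi-gaussian}). The key geometric inclusion is that $rB_2^n \setminus \calV(\lat) \subseteq \bigcup_{\vecy \in \lat \setminus \set{\veczero}}(rB_2^n \cap (rB_2^n + \vecy))$: indeed, if $\vecx \in rB_2^n$ and $\vecx \notin \calV$, then some nonzero $\vecy \in \lat$ satisfies $\length{\vecx - \vecy} \leq \length{\vecx} \leq r$, so $\vecx \in rB_2^n \cap (rB_2^n + \vecy)$. Taking volumes gives $\vol_n(rB_2^n \setminus \calV)/\vol_n(rB_2^n) \leq \overlap(\lat, r)$.

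Next, since $\calV$ is a convex body containing the origin, Lemma~\ref{lem:gaus-unif-trans} lets me replace uniform measure by Gaussian measure (at any parameter $s' > 0$) in the ratio, yielding
\[
\frac{\gamma_{s'}(rB_2^n \setminus \calV)}{\gamma_{s'}(rB_2^n)} \;\leq\; \overlap(\lat,r).
\]
I will choose $s' = s/(1+\delta)$. Bounding the denominator by $1$ and using $\gamma_{s'}(rB_2^n \setminus \calV) \geq \gamma_{s'}(\R^n \setminus \calV) - \gamma_{s'}(\R^n \setminus rB_2^n)$, Lemma~\ref{lem:voronoi-gaussian} gives $\gamma_{s'}(\R^n \setminus \calV) = 1 - \gamma_{s'}(\calV) \geq \rho_{s'}(\lat \setminus \set{\veczero})/\rho_{s'}(\lat)$, which is exactly the first term in the target bound.

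The remaining piece is to show $\gamma_{s'}(\R^n \setminus rB_2^n) \leq e^{-2n\delta^2/3}$. Since $(s')^2 n/(2\pi) = r^2/(1+\delta)^2$, applying the discrete/continuous Gaussian tail bound (Lemma~\ref{lem:gaussian-tail}) with $\eps = 2\delta$ gives $\Pr_{X \sim D_{s'}}[\length{X}^2 \geq (1+2\delta) r^2/(1+\delta)^2] \leq e^{-n(2\delta)^2/6} = e^{-2n\delta^2/3}$, where the exponential form of the bound is valid because $\delta < 1/4$ forces $2\delta < 1/2$, and the inequality $(1+2\delta)/(1+\delta)^2 \leq 1$ ensures the event covers $\R^n \setminus rB_2^n$. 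This is the one spot where the hypothesis $\delta < 1/4$ is genuinely used, and checking that $\eps = 2\delta$ (rather than the slightly larger $(1+\delta)^2 - 1 = 2\delta + \delta^2$) still satisfies the exponential-form tail hypothesis is the only subtlety; everything else is routine substitution.
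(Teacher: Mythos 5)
Your proposal is correct and follows essentially the same route as the paper's proof: the same union bound with Lemma~\ref{lem:sphere-int} for the upper bound, the same geometric inclusion $rB_2^n \setminus \calV \subseteq S$, the same passage from uniform to Gaussian measure via Lemma~\ref{lem:gaus-unif-trans}, the same choice $s' = s/(1+\delta)$ combined with Lemma~\ref{lem:voronoi-gaussian}, and the same relaxation from $(1+\delta)^2$ to $(1+2\delta)$ in the tail bound so that $\eps = 2\delta < 1/2$ satisfies the hypothesis of Lemma~\ref{lem:gaussian-tail}. The only differences are cosmetic (you assemble the chain of inequalities in the opposite order), and you in fact avoid a small typo in the paper's displayed inequality chain.
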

\begin{proof}
For simplicity of notation, we define $S = \bigcup_{\vecy \in \lat \setminus \set{\veczero}} \left(rB_2^n \cap (rB_2^n + \vecy)\right)$,
noting that $\overlap(\lat,r) = \vol_n(S)/\vol_n(rB_2^n)$. For the upper bound, by the union bound and Lemma \ref{lem:sphere-int}, we have
  \begin{align*}
    \frac{\vol_n(S)}{\vol_n(rB_2^n)} \leq
    \sum_{\vecy \in \lat \setminus \set{\veczero}} \frac{\vol_n(rB_2^n
      \cap rB_2^n + \vecy)}{\vol_n(rB_2^n)} \leq \sum_{\vecy \in \lat
      \setminus \set{\veczero}} 2e^{-\pi\length{\frac{\vecy}{2s}}^2} =
    2\rho_{2s}(\lat \setminus \set{\veczero})
  \end{align*}
  as needed.

  For the lower bound, let $\calV = \calV(\lat)$.  We claim that
  $rB_2^n \setminus \calV \subseteq S$. To see this
  take $\vecx \in rB_2^n \setminus \calV$. Since $\vecx \notin \calV$, there
  exists $\vecy \in \lat \setminus \set{\veczero}$ such that $\length{\vecy-\vecx}
  < \length{\vecx} \leq r$. Therefore $\vecx \in rB_2^n \cap \vecy + rB_2^n$ as
  needed.

  Let $s' = \frac{s}{1+\delta}$. By Lemma \ref{lem:voronoi-gaussian},
  we have that $\gamma_{s'}(\calV) \leq \frac{1}{\rho_{s'}(\lat)}$.
  Let $X \in \R^n$ denote the gaussian with distribution $\gamma$. By
  the standard gaussian tailbound (Lemma \ref{lem:gaussian-tail}), we
  have that
  \[
  \gamma_{s'}(\R^n \setminus rB_2^n) = \Pr[\length{s'X}^2 \geq r^2] =
  \Pr[\length{X}^2 \geq (1+\delta)^2 \frac{n}{2\pi}] \leq
  \Pr[\length{X}^2 \geq (1+2\delta) \frac{n}{2\pi}] \leq
  e^{-\frac{2n}{3} \delta^2}.
  \]
  Therefore we have that
  \[
  \gamma_{s'}(rB_2^n \setminus \calV) \geq 1 - \gamma_{s'}(\R^n
  \setminus \calV) - \gamma_{s'}(\R^n \setminus rB_2^n) \geq
  \frac{\rho_{s'}(\lat \setminus \set{\veczero})}{\rho_{s'}(\lat)} -
  e^{-\frac{2n}{3} \delta^2}
  \]
  By Lemma \ref{lem:gaus-unif-trans}, we have that
  \[
  \frac{\rho_{s'}(\lat \setminus \set{\veczero})}{\rho_{s'}(\lat)} -
  e^{-\frac{2n}{3} \delta^2} \leq \frac{\gamma_{s'}(rB_2^n \setminus
    \calV)}{\gamma_{s'}(rB_2^n)} \leq \frac{\vol_n(rB_2^n \setminus
    \calV)}{\vol_n(rB_2^n)} \leq \frac{\vol_n(S)}{\vol_n(rB_2^n)}
  \]
  as needed.
\end{proof}

\begin{proof}[Proof of Lemma \ref{lem:ball-overlap-char} (Ball Overlap Chacterization)]
Let $r_\eps = \frac{1}{2\eta_{\eps}(\lat^*)} \sqrt{\frac{n}{2\pi}}$ and $s = \frac{1}{\eta_{\eps}(\lat^*)}$. Note that by definition of $s$, we have that
$\rho_s(\lat) \leq \eps$. For Part $1$, by Lemma \ref{lem:ball-to-smoothing}, using the fact that $s = 2r_\eps \sqrt{\frac{2\pi}{n}}$, we get
\[
\overlap(\lat, r_\eps) \leq 2\rho_s(\lat) \leq 2\eps.
\]
Furthermore, for every $0<r \leq r_{\eps}$, since $\eta_\eps(\lat)$ is a monotonically decreasing function in $\eps$, $r = r_{\eps'}$ for some $\eps' \leq \eps$. Thus,
\[
\overlap(\lat, r) \leq 2\eps' \leq 2\eps.
\]
For Part 2, by \ref{lem:ball-to-smoothing} and the fact that $2^{-o(n)} \leq \eps \leq 1/3$ and $\delta = \sqrt{\frac{3}{2n} \ln \frac{4}{\eps}} < 1/4$, we get
\[
\overlap(\lat, 2\cdot (1+\delta)\cdot r_\eps) \geq \frac{\rho_{s}(\lat^* \setminus \set{\veczero})}{\rho_{s}(\lat^*)}- e^{-\frac{2n}{3}\delta^2} \geq \frac{\eps}{1+\eps} - \frac{\eps}{4} \geq \frac{\eps}{2}.
\]
Again, for every $r \geq 2\cdot (1+\delta) r_{\eps}$, by the monotonicity of $\eta_\eps(\lat)$ in $\eps$, $r = r_{\eps'}$ for some $\eps' \geq \eps$, and thus,
\[
\overlap(\lat, r) \geq \eps'/2 \geq \eps/2.
\]
\end{proof}

\subsection{Background and From $\ID$ Commitment Schemes to $\SZK$ Protocols} \label{subsec:overview-SZK}

An $\ID$ commitment scheme $\Com$ for a promise problem $\Pi$ is a commitment scheme that can depend on the instance $x$ and such that only one of the hiding and binding properties are required to hold, depending on whether $x$ is an YES or NO instance. Since only one of the hiding and binding properties needs to hold at a time, it is possible to achieve both the statistical hiding and statistical binding properties, and thus useful for constructing $\SZK$ protocols.

Typically, one requires the hiding property to hold for the YES instances and the binding property to hold for the NO instances, and such an $\ID$ commitment scheme readily gives a $\SZK$ protocol with soundness error $1/2$.  On the other hand, an $\ID$ commitment scheme with reverse guarantees, i.e., binding for YES instances and hiding for NO instances, also readily gives a honest verifier $\SZK$ protocol, where the verifier commits to a random bit $b$ and the prover's task is to guess the bit $b$ correctly. Furthermore, since the verifier (who is the sender of the $\ID$ commitment scheme) is honest, the binding property only needs to hold with respect to the honest sender (referred to as ``honest-sender biding property''). Since $\HVSZK = \SZK$~\cite{GoldreichSV98}, an $\ID$ commitment scheme that is honest-sender binding for YES instances and hiding for NO instance is also sufficient for showing that the promise problem is in $\SZK$. 
Note that since only honest-sender binding property is required, we can without loss of generality assume that a commitment scheme is non-interactive (by letting the sender emulate the receiver and send the emulated view to the receiver). Thus, such a commitment scheme is simply an algorithm. %

We observe that, the existing security amplification techniques for regular commitment schemes can be applied to the instance-dependent setting. As a consequence, any $\ID$ commitment scheme with ``\emph{non-trivial}'' honest-sender binding and hiding properties is sufficient to obtain $\SZK$ protocols. More precisely, as formally defined in Definition~\ref{def:ID-Com}, we consider $\ID$ commitment schemes $\Com$ with weak $p$-hiding and $q$-binding properties, where the hiding and binding properties can be broken with ``advantage'' at most $p$ and $q$, respectively, and we say $\Com$ is ``non-trivial'' if $p+q \leq 1 - 1/\poly(n)$. Known security amplification results for commitment schemes (for the case of statistical security)~\cite{DamgardKS99} state that any non-trivial commitment scheme can be amplified to one with full-fledge security (i.e., both $p$ and $q$ are negligible). The same conclusion holds for $\ID$ commitment schemes, and thus to construct a $\SZK$ protocol for a language $L$, it suffice to construct a non-trivial honest-sender binding $\ID$ commitment scheme for $L$.

\begin{theorem} \label{thm:Com-amplification} Let $\Pi$ be a promise problem. Suppose there exists a \emph{non-trivial} $\ID$ commitment scheme for $\Pi$, then $\Pi \in \SZK$.
\end{theorem}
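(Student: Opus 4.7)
The plan is to first use security amplification to turn the non-trivial $\ID$ commitment scheme $\Com$ into an $\ID$ commitment scheme $\Com^{*}$ whose honest-sender binding (on YES instances) and hiding (on NO instances) errors are both $\negl(n)$, and then plug $\Com^{*}$ into the standard template where the verifier commits to a random bit and the prover must guess it. Since the resulting protocol is honest-verifier statistical zero knowledge, we conclude via $\HVSZK = \SZK$~\cite{GoldreichSV98}.

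For the amplification step I would apply, in a black-box and instance-parameterized fashion, the classical framework of Damg{\aa}rd, Kilian, and Salvail~\cite{DamgardKS99}. Given $\Com$ with hiding parameter $p$ and honest-sender binding parameter $q$ satisfying $p + q \leq 1 - 1/\poly(n)$, one amplifies hiding by XOR-secret-sharing: to commit to $b$, sample uniform $b_{1},\ldots,b_{k} \in \zo$ with $\bigoplus_{i} b_{i} = b$ and commit to each $b_{i}$ under $\Com_{x}$ in parallel; hiding then drops to essentially $p^{k}$, while binding degrades only to at most $k q$. One separately amplifies binding by parallel repetition with majority decoding, driving $q$ down exponentially while only polynomially affecting $p$. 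Interleaving the two operations polynomially many times, as in \cite{DamgardKS99}, pushes both parameters to $\negl(n)$ whenever $p + q \leq 1 - 1/\poly(n)$; the analysis is entirely syntactic in the instance $x$, so it transfers to the instance-dependent setting.

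With $\Com^{*}$ in hand, the $\HVSZK$ protocol is the natural one. On common input $x$, the verifier samples a uniform $b \in \zo$ and sends $c \gets \Com^{*}_{x}(b)$ to the prover, who returns a bit $b'$; the verifier accepts iff $b' = b$. On a YES instance, honest-sender binding guarantees that with probability $1 - \negl(n)$, $c \notin \supp(\Com^{*}_{x}(\bar{b}))$, so an unbounded prover recovers the unique consistent $b$ and succeeds with overwhelming probability. On a NO instance, statistical hiding gives $\Delta(\Com^{*}_{x}(0), \Com^{*}_{x}(1)) \leq \negl(n)$, so no prover guesses $b$ with probability more than $\tfrac{1}{2} + \negl(n)$; $\poly(n)$-fold parallel repetition then drives the soundness error to $\negl(n)$ while preserving honest-verifier zero knowledge. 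The simulator picks $b$ and the randomness $r$ for $\Com^{*}_{x}$ itself and outputs the transcript $(\Com^{*}_{x}(b;r), b)$, which matches the real transcript up to $\negl(n)$ statistical distance on YES instances by completeness.

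The main obstacle is the amplification step, since one must simultaneously drive both $p$ and $q$ to negligible under only the mild assumption $p + q \leq 1 - 1/\poly(n)$, and the natural hiding and binding amplification operations pull in opposite directions. Fortunately this is precisely the combinatorial content of~\cite{DamgardKS99}, and its adaptation to the instance-dependent setting is syntactic: the instance $x$ is carried as a parameter throughout, and every step of the construction and its analysis applies unchanged at each fixed $x$.
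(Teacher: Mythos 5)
Your proposal is correct and follows essentially the same route as the paper's proof: amplify the non-trivial $\ID$ commitment scheme via the Damg{\aa}rd--Kilian--Salvail framework (XOR-sharing for hiding, parallel concatenation for binding) to obtain negligible parameters, then plug the amplified scheme into the standard two-message commit-and-guess protocol and conclude via $\HVSZK = \SZK$. One small nit: the phrase ``parallel repetition with majority decoding'' is a misnomer here --- since binding is measured by support membership ($\Com_x(b) \in \supp(\Com_x(\bar b))$), the $k$-fold concatenation automatically drives binding to $q^k$ regardless of any decoding rule, and DKS (and the paper) use the all-must-agree concatenation rather than majority vote --- but this does not affect the validity of the argument.
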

\begin{proof}(sketch)
The theorem can be proved by applying known technique/results for regular commitment schemes to the instance-dependent setting. Briefly, security amplification of commitment schemes can be done using the following two operations~\cite{DamgardKS99}.
\begin{itemize}
\item {\bf Repetition.} Given $\Com$ and $k\in \mathbb{N}$, define $\Com'_x(b) = (\Com_x(b;r_1),\dots,\Com_x(b;r_k))$, i.e., concatenation of $k$ commitments of $\Com$ using independent randomness. This amplifies the binding property but degrades the hiding property. Specifically, if $\Com$ is $p$-hiding and $q$-binding, then $\Com'$ is $(1-(1-p)^k)$-hiding and $q^k$-binding.
\item {\bf Sharing.} Given $\Com$ and $k\in \mathbb{N}$, define $\Com'_x(b) = (\Com_x(b_1;r_1),\dots,\Com_x(b_k;r_k))$, where $b_1,\dots,b_k$ are chosen randomly subject to $b_1 \oplus \cdots \oplus b_k = b$, and $r_1,\dots,r_k$ are independent randomness. This amplifies the hiding property but degrades the binding property. Specifically, if $\Com$ is $p$-hiding and $q$-binding, then $\Com'$ is $p^k$-hiding and $1- (1-q)^k$-binding.
\end{itemize}
It can be shown (as in~\cite{DamgardKS99}) that as long as $p + q \leq
1 - 1/\poly(n)$, one can amplify a $p$-hiding and $q$-binding
commitment scheme $\Com$ to a secure $\Com'$ by alternately applying
repetition and sharing operations with carefully chosen parameters
$k$'s, and the resulting $\Com'$ calls $\Com$ in a black-box way
$\poly(n)$ times.

Once we have a secure non-interactive instance-dependent bit-commitment scheme for $\Pi$, we can readily construct a two-message honest-verifier $\SZK$ protocol for $L$ as follows: On input $x \in \zo^n$,
\begin{itemize}
\item $V$ samples random $b \leftarrow \zo$, computes and sends $\Com_x(b)$ to $P$.
\item $P$ sends $b'$ to $V$ as his guess of $b$.
\item $V$ accepts iff $b' = b$.
\end{itemize}
It is not hard to see that the binding and hiding properties translate to the completeness and $1/2$-soundness for the protocol, and a simulator can
generate the view by emulating $V$ and outputting $(\Com_x(b), b)$. Since $\HVSZK = \SZK$, we have $\Pi \in \SZK$.
\end{proof}

\begin{remark} Interestingly, as a by-product, an $\SZK$-complete problem called ``Image Intersection Density'' (IID) (defined by ~\cite{BG03} and proved to be $\SZK$-complete by~\cite{CCKV08}) can naturally be interpreted as a weak $\ID$ bit-commitment scheme as defined in Definition~\ref{def:ID-Com}, which allows us to (immediately) obtain an optimal ``polarization'' result to the problem.
 
Specifically, the input to the IID problem is two distributions $(X,Y)$ specified by circuits, where the YES instance satisfying $\Delta(X,Y) \leq a$ and the NO instance satisfying $\Pr[ X \notin \supp(Y)] \geq b$ and $\Pr[ Y \notin \supp(X)] \geq b$, where $a,b \in (0,1)$ are parameters of the problem. By defining $X$ and $Y$ as commitment to $0$ and $1$ respectively, the condition to YES instance corresponds to statistical $a$-hiding and the condition to NO instance corresponds to statistical honest-sender $(1-b)$-binding.\footnote{The binding and hiding properties hold for reverse instances, but one can instead consider the complement of the IID problem to obtain a consistent definition since $\SZK$ is close under complement.}
Interpreting the IID problem as a weak $\ID$ bit-commitment scheme makes it natural to apply the security amplification result of commitment
schemes~\cite{DamgardKS99}, which gives an optimal polarization result of the problem, stating that the IID problem with parameters $a(n) - b(n) \geq 1/\poly(n)$ is complete for $\SZK$. This improves the previous known result in~\cite{CCKV08}, which holds for constants $a>b$. In fact, the security amplification and polarization techniques exploit identical operations. The stronger result from the security amplification literature is obtained by applying the repetition and sharing operations more carefully.
\end{remark}

\section{Applications to Worst-case to Average-case Reductions} \label{sec:wstavg}

Our study of $\gapsp$ has natural applications to the context of worst-case to average case reductions. In particular, we show that we can relate the hardness of average-case hard learning with error (\LWE) problems and worst-case hard $\gapsp$ problems with a tighter connection factor. Our result directly implies the worst-case to average-case result from $\gapsvp$ to $\LWE$ obtained by Regev~\cite{DBLP:journals/jacm/Regev09} and Peikert~\cite{DBLP:conf/stoc/Peikert09}.  First we review the $\LWE$ problem.

\begin{definition}[Learning with Error Problem~\cite{DBLP:journals/jacm/Regev09}] Let $q=q(n)\in \N$, $\alpha= \alpha(n) \in (0,1)$.
Let $\Phi_{\alpha}$ be the distribution on $[0,1)$ obtained by drawing a sample from the Gaussian distribution with standard deviation $\alpha$ and reducing it modulo $1$.
Define $A_{\vecs,\Phi_{\alpha}}$ to be the distribution on $\Z^{n}_{q} \times [0,1)$ obtained by choosing a vector $\veca  \in \Z^{n}_{q}$ uniformly at random, choosing an error term $e \leftarrow \Phi_{\alpha}$, and outputting $(\veca, \langle \veca , \vecs \rangle / q + e )$ where the addition is performed in modulo $1$.

The goal of the learning with errors problem $\LWE_{q,\alpha}$ in $n$ dimensions is, given access to any desired $\poly(n)$ numbers of samples from $A_{\vecs, \Phi_{\alpha}}$ for a random $\vecs \leftarrow \Z_{q}^{n}$, to find $\vecs$ (with overwhelming probability).

\end{definition}

Following~\cite{DBLP:journals/jacm/Regev09,DBLP:conf/stoc/Peikert09}, we use the bounded decoding  $\BDD$ problem as an intermediate step in our reduction. Here we instead parameterize the $\alpha$-$\BDD$ problem with $\alpha$ relative to the smoothing parameter (as opposed to the shortest vector used in literature); this is essential for us to obtain tighter reduction for $\gapspp$.

\begin{definition}[Bounded Distance Decoding Problem ($\alpha$-$\BDD_{\eps}$)] Given a lattice basis $B$ and a vector $\vect$ such that $\dist (\vect, \lat(B)) < \alpha /\eta_{\eps}(\lat(B)^{*})$, find the lattice vector $\vecv \in \lat(B)$ such that  $\dist (\vect, \vecv) \leq \alpha /\eta_{\eps}(\lat(B)^{*})$.

\end{definition}

We recall the following Lemma from Regev~\cite{DBLP:journals/jacm/Regev09} and Peikert~\cite{DBLP:conf/stoc/Peikert09} that reduce solving worst-case $\BDD$ problem to solving $\LWE$ through quantum and classic reductions, respectively. %

\begin{lemma}[\cite{DBLP:journals/jacm/Regev09,DBLP:conf/stoc/Peikert09}]
Let $q(n) \in \N$, $\alpha(n) \in (0,1)$, $\eps(n)$ be a negligible function such that $\alpha \cdot q > 2\sqrt{n}$. There exists a $\PPT$ quantum reduction from solving  $\alpha/2$-$\BDD_{\eps}$ in the worst case (with overwhelming probability) to solving $\LWE_{q,\alpha}$ using $\poly(n)$ samples. 

If in addition $q\geq 2^{n/2}$, then there exists a classical reduction  from solving  $\alpha/2$-$\BDD_{\eps}$ in the worst case (with overwhelming probability) to solving $\LWE_{q,\alpha}$ using $\poly(n)$ samples.
\end{lemma}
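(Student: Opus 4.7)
The plan is to re-examine the quantum reduction of~\cite{DBLP:journals/jacm/Regev09} and the classical variant of~\cite{DBLP:conf/stoc/Peikert09}, and to show that both reductions \emph{implicitly} use the smoothing parameter of the dual lattice rather than the shortest-vector length $\lambda_1(\lat)$. The only change required is thus a reparameterization of the ``BDD step'' in which the $\LWE$ oracle is used as a bounded-distance decoder on the primal lattice; the outer iterative structure of the reductions is preserved verbatim.

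First I would recall the structure of the existing reductions. Both proceed iteratively: starting from discrete Gaussian samples on $\lat$ at some wide parameter, they use an $\LWE$-derived decoder to produce Gaussian samples of progressively narrower width (the quantum step of~\cite{DBLP:journals/jacm/Regev09}, or the classical Gaussian-sampling-via-basis trick of~\cite{DBLP:conf/stoc/Peikert09}). In every iteration the $\LWE$ oracle is invoked as a BDD solver on $\lat$, and the target handed to it is a Gaussian perturbation of a lattice vector. Inspection of the analyses shows that what is actually needed of the BDD solver is that it succeed when $\dist(\vect,\lat) \leq \alpha / (2\eta_{\eps}(\lat^{*}))$, not the form $\dist(\vect,\lat) \leq \alpha \lambda_{1}(\lat)/2$ that is usually stated. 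The $\lambda_{1}$ form appears only after one applies the Micciancio--Regev estimate $\eta_{\eps}(\lat^{*}) \leq \sqrt{n}/\lambda_{1}(\lat)$ together with the assumption $\alpha q > 2\sqrt{n}$; if one refrains from that bounding step, the smoothing parameter is the quantity that drops out of the analysis directly.

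I would then verify the substitution carefully in two places. First, the target $\vect = \vecv + \vece$ fed to BDD in each iteration has $\vece$ Gaussian with parameter at most $\alpha / (2\eta_{\eps}(\lat^{*}))$; by the standard Banaszczyk-type tail bound (Lemma~\ref{lem:gaussian-tail}) it lies within the permitted radius with overwhelming probability, so the new BDD oracle is called on admissible inputs. Second, the correctness of the $\LWE$-based decoder rests on an induced distribution over cosets of $q\lat^{*}$ being statistically close to uniform, and this closeness is precisely what the smoothing parameter measures: one obtains statistical distance at most $\eps$ exactly when the effective width on the dual exceeds $\eta_{\eps}(\lat^{*})$. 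The hypothesis $\alpha q > 2\sqrt{n}$ then enters only to guarantee that the $\LWE$ noise is large enough for the rounding step to be well-posed, independently of the lattice. For the classical version, the extra assumption $q \geq 2^{n/2}$ enters exactly as in~\cite{DBLP:conf/stoc/Peikert09}, where it is used to classically simulate the Gaussian-sampling step via a reduction-supplied basis; this part of the argument is untouched by the reparameterization.

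The main obstacle is purely one of bookkeeping: confirming that in every invocation of the BDD oracle within the iterative reduction, the input distance bound is controlled by the smoothing parameter of the same dual lattice $\lat^{*}$, and not of some intermediate lattice appearing during the iteration. I would resolve this by observing that the inner loop of the Regev/Peikert reduction operates on the fixed primal lattice $\lat$ throughout, so the relevant smoothing parameter $\eta_{\eps}(\lat^{*})$ is constant across iterations. Once this is verified, both parts of the lemma follow by substituting the smoothing-parameterized BDD hypothesis into the quantum and classical complexity statements of~\cite{DBLP:journals/jacm/Regev09,DBLP:conf/stoc/Peikert09}.
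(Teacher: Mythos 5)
The paper does not actually prove this lemma: it imports it from Regev and Peikert, adding only the remark that $\eps = \negl(n)$ is what keeps the LWE samples manufactured inside the reduction statistically close to genuine ones. Your overall route --- rerun those reductions and observe that the decoding radius they support is governed by the smoothing parameter of the dual of the decoded lattice, the $\lambda_1$ form arising only from the bound $\eta_{\eps}(\lat^*) \leq \sqrt{n}/\lambda_1(\lat)$ --- is exactly the reading the paper intends, so the approach is the right one.

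Two details in your sketch are off and would need repair in a written-out proof. First, the primal/dual bookkeeping: in the Regev/Peikert iteration the discrete Gaussian samples live on the \emph{dual} of the lattice being decoded (samples from $D_{\lat^*,r}$ plus the $\LWE_{q,\alpha}$ oracle yield a decoder on $\lat$ to distance roughly $\alpha q/(\sqrt{2}\,r)$), not on the same lattice as you state. This matters because the smoothing condition $r \geq \sqrt{2}\, q\, \eta_{\eps}(\lat^*)$ is a condition on the \emph{sample} lattice; it is precisely this floor on $r$ that makes the final radius come out as $\alpha/(2\eta_{\eps}(\lat^*))$, matching the paper's $\BDD$ definition, and it is also exactly where $\eps = \negl(n)$ is needed (your ``coset uniformity'' point). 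Second, $\alpha q > 2\sqrt{n}$ is not merely about a rounding step being well-posed: it is the progress condition of the iteration. The quantum step converts a distance-$d$ decoder into samples of width about $\sqrt{n}/(\sqrt{2}\,d)$, so composing with the decoder at distance $\alpha q/(\sqrt{2}\,r)$ maps $r$ to roughly $r\sqrt{n}/(\alpha q)$; one needs $\alpha q > 2\sqrt{n}$ so that the width shrinks geometrically and reaches $\approx \sqrt{2}\,q\,\eta_{\eps}(\lat^*)$ within polynomially many iterations, at which point the core step applied to the given worst-case target solves $\alpha/2$-$\BDD_{\eps}$. If you carried your stated belief about the role of $\alpha q > 2\sqrt{n}$ into the proof, you would have no argument that the iteration terminates at the required width, which is the heart of the quantitative claim; with that corrected (and the same remark applying verbatim to Peikert's classical replacement of the quantum step under $q \geq 2^{n/2}$), your argument matches the intended derivation.
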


We note that the reason $\eps = \negl(n)$ in the above Lemma is to guarantee that the LWE samples generated during the reduction are within neglible
statistical distance from ``true'' LWE samples.

We now establish a new result that relates $\BDD$ and $\gapsp$. Our new observation is that the prover in the GGG protocol
(Algorithm~\ref{am-protocol}) can be implemented by a  $\BDD$ oracle.  Thus, if one has a $\BDD$ solver, one can solve the $\gapsp$ problem. We note
that we only need the $\BDD$ oracle to work for YES instances, and hence we require $\eps_Y = \negl(n)$ while leaving $\eps_N=\frac{1}{\poly(n)}$.
More precisely, we have the following lemma.

\begin{lemma} 
Let $\alpha(n) \in (0,1)$, $\eps_{Y}(n) \in \negl(n)$ and $\eps_{N} \in 1/ \poly(n)$. There exists a $\PPT$ Turing reduction from solving $\sqrt{n}/ \alpha$- $\gapsp_{\eps_{Y},\eps_{N}}$ to solving  $\alpha$-$\BDD_{\eps_{Y}}$.
\end{lemma}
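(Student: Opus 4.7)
The plan is to simulate the optimal prover of a rescaled Gaussian Goldreich-Goldwasser protocol (Algorithm~\ref{am-protocol}) using the $\alpha$-$\BDD_{\eps_Y}$ oracle on the dual lattice. Given a basis $B$ of $\lat$, the reduction computes the dual basis $(B^{-1})^t$ of $\lat^*$, fixes the scale $s = \alpha/\sqrt{n}$ (which lies in $(0,1/2]$ since $\alpha \in (0,1)$ and $n \geq 4$), and runs $T = \poly(n)$ independent trials of the following subroutine: sample $\vecx \gets D_s$, compute $\bar{\vecx} = \vecx \bmod \lat^*$, query the $\BDD$ oracle on $\lat^*$ at target $\bar{\vecx}$ to obtain some $\vecy \in \lat^*$, and mark the trial as accepting iff $\bar{\vecx} - \vecy = \vecx$. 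The reduction outputs YES iff the empirical fraction of accepting trials exceeds a threshold chosen between the YES- and NO-side expected rates identified below.

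For completeness, in a YES instance $\eta_{\eps_Y}(\lat) \leq 1$, so the promised BDD radius on $\lat^*$ is $\alpha/\eta_{\eps_Y}(\lat) \geq \alpha$. By Lemma~\ref{lem:gaussian-tail}, $\length{\vecx} \leq s\sqrt{3n/(4\pi)} < \alpha$ except with probability $e^{-\Omega(n)}$, hence $\dist(\bar{\vecx},\lat^*) \leq \length{\vecx}$ falls within the BDD radius with overwhelming probability. Whenever this occurs the oracle correctly returns the unique lattice vector $\vecy \in \lat^*$ closest to $\bar{\vecx}$, and $\bar{\vecx}-\vecy$ equals $\vecx$ iff $\vecx$ is the shortest representative of its coset, i.e., iff $\vecx \in \calV(\lat^*)$. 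The Voronoi-cell characterization (Lemma~\ref{lem:voronoi-gaussian}) applied to $\lat^*$ gives $\gamma_s(\calV(\lat^*)) \geq 1 - \rho_{2s}(\lat^* \setminus \set{\veczero})$; and since $\eta_{\eps_Y}(\lat) \leq 1 \leq 1/(2s)$, the definition of the smoothing parameter yields $\rho_{2s}(\lat^* \setminus \set{\veczero}) \leq \eps_Y$. So each trial accepts with probability at least $1 - \eps_Y - e^{-\Omega(n)} = 1 - \negl(n)$.

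For soundness, in a NO instance $\eta_{\eps_N}(\lat) > \sqrt{n}/\alpha = 1/s$, so by definition $\rho_s(\lat^* \setminus \set{\veczero}) > \eps_N$. Even if the BDD oracle behaves arbitrarily when queried outside its promised radius, its response is still a (possibly randomized) function of $\bar{\vecx}$ alone, so the maximum-likelihood argument of Theorem~\ref{thm:GGG} still bounds the acceptance probability of any such strategy by $\gamma_s(\calV(\lat^*))$. By the lower-bound side of Lemma~\ref{lem:voronoi-gaussian}, $\gamma_s(\calV(\lat^*)) \leq 1/(1+\rho_s(\lat^* \setminus \set{\veczero})) < 1/(1+\eps_N)$, so each trial accepts with probability at most $1 - \eps_N/(1+\eps_N)$. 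The per-trial YES/NO acceptance gap is therefore $\Omega(\eps_N) = \Omega(1/\poly(n))$, which a Chernoff bound over $T = \poly(n)/\eps_N^2$ independent trials amplifies to overwhelming decision confidence. The only real obstacle is verifying that the single choice $s = \alpha/\sqrt{n}$ simultaneously satisfies the BDD-radius requirement ($s\sqrt{n} = O(\alpha)$), the completeness side of the smoothing condition ($s \leq 1/2$), and the soundness side ($1/s \leq \sqrt{n}/\alpha$); the fact that all three constraints align for the same $s$ is precisely what produces the $\sqrt{n}/\alpha$ approximation factor and reflects the key saving of passing through $\gapspp$ rather than $\gapsvp$.
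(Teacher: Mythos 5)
Your proposal is correct and follows essentially the same approach as the paper's proof. Both arguments simulate the prover in the Gaussian Goldreich--Goldwasser protocol using the $\alpha$-$\BDD$ oracle on $\lat^*$, use the Gaussian tail bound to show the BDD promise holds except with exponentially small probability on YES instances, invoke the Voronoi-cell characterization (Lemma~\ref{lem:voronoi-gaussian}) for the $1-\negl(n)$ versus $1-\eps_N/(1+\eps_N)$ acceptance gap, and appeal to the maximum-likelihood bound for soundness. The only cosmetic difference is that the paper rescales the \emph{lattice} so that YES instances satisfy $\eta_{\eps_Y}(\lat)\leq\alpha/\sqrt{n}$ and runs the protocol with $D_1$, whereas you keep the original instance and rescale the \emph{Gaussian} to $D_s$ with $s=\alpha/\sqrt{n}$ --- these are the same reduction in different coordinates. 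You also spell out the Chernoff amplification step that the paper leaves implicit, which is a useful completion rather than a divergence. One small point worth making explicit in both your writeup and the paper's: you assert that the oracle returns \emph{the} lattice vector closest to $\bar{\vecx}$, but the $\BDD$ definition only promises a lattice vector within the decoding radius; this is harmless because for $\eps_Y=\negl(n)$ and $\alpha<1$ the decoding radius $\alpha/\eta_{\eps_Y}(\lat)$ is below $\lambda_1(\lat^*)/2$ (by the Micciancio--Regev relation $\lambda_1(\lat^*)\geq\sqrt{\log(1/\eps_Y)/\pi}/\eta_{\eps_Y}(\lat)$), so the in-radius lattice vector is unique and therefore the closest.
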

\begin{proof} For convenience, we
scale the $\sqrt{n}/ \alpha$-$\gapsp$ problem so that YES instances have $\eta_{\epsilon_{Y}}(\lat) \leq \alpha/\sqrt{n}$, and  NO instances have
$\eta_{\epsilon_{N}}(\lat) > 1$. 
Let $B$ be an input of the problem $\sqrt{n}/ \alpha$-$\gapsp$. We run the GGG protocol as Algorithm~\ref{am-protocol} on input $B$, where the prover's strategy is implemented using the  $\alpha$-$\BDD_{\eps_{Y}}$ solver. Then we output the verifier's decision. 

Now we describe the  analysis. For
 NO instances, by an identical analysis to Theorem~\ref{thm:GGG}, 
 the above algorithm rejects with probability at least $\eps_{N}/(1+\eps_{N}) > 1/\poly(n)$. For YES instances, we observe that the optimal prover's strategy can be  emulated if $\|\vecx\| $ is less than the BDD decoding distance $\alpha/\eta_{\eps_{Y}}(\lat) \geq \sqrt{n}$. By the Gaussian tail bound as Lemma~\ref{lem:gaussian-tail}, we have $\Pr[\|\vecx\| \geq \sqrt{n}] < e^{-\Omega(n)}$. Recall that by  Lemma~\ref{lem:voronoi-gaussian}, in GGG protocol the verifier rejects the optimal prover with probability $1-\gamma_{1}(\calV(\lat^{*})) \leq \rho_{2}(\lat^{*} \setminus \veczero) \leq \eps_{Y}$.
 Thus, by a union bound  the algorithm rejects with probability at most 
  $\eps_{Y} +  e^{-\Omega(n)} \leq \negl(n) $.

\end{proof}

Putting together the above lemmas, we obtain a tighter worse-case to average-case reduction from $\gapspp$ to $\LWE$.
\begin{theorem}
Let $q(n) \in \N$, $\alpha(n) \in (0,1)$, $\eps_{Y}(n) \in \negl(n)$ and $\eps_{N} \in 1/ \poly(n)$ such that $\alpha \cdot q > 2\sqrt{n}$.
 There exists a $\PPT$ quantum reduction from solving  $2\sqrt{n}/ \alpha$- $\gapsp_{\eps_{Y},\eps_{N}}$ in the worst case (with overwhelming probability)  to solving $\LWE_{q,\alpha}$ using $\poly(n)$ samples.
 
If in addition $q\geq 2^{n/2}$, then there exists a classical reduction  from solving $2\sqrt{n}/ \alpha$- $\gapsp_{\eps_{Y},\eps_{N}}$ in the worst case (with overwhelming probability)  to solving $\LWE_{q,\alpha}$ using $\poly(n)$ samples.
 
\label{thm:gapsp-to-lwe}
\end{theorem}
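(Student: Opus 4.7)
The plan is to prove Theorem~\ref{thm:gapsp-to-lwe} by straightforward composition of the two reductions established earlier in this section: the new reduction from $\gapspp$ to $\BDD$, and the known Regev/Peikert reduction from $\BDD$ to $\LWE$. Concretely, I would fix an $\LWE_{q,\alpha}$ oracle and chain the two implications together so that the $\LWE$ oracle ultimately provides the prover strategy inside the GGG protocol.

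First, I would instantiate the Regev/Peikert lemma with BDD parameter $\alpha' = \alpha/2$, obtaining (quantumly, or classically when $q \geq 2^{n/2}$) a $\PPT$ algorithm that solves $(\alpha/2)$-$\BDD_{\eps_Y}$ in the worst case with overwhelming probability, using $\poly(n)$ samples from $A_{\vecs,\Phi_\alpha}$. The hypothesis $\alpha \cdot q > 2\sqrt{n}$ and $\eps_Y = \negl(n)$ exactly match the prerequisites of that lemma, so nothing extra is needed here.

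Second, I would feed this $\BDD$ solver into the preceding lemma of this section, which reduces $\sqrt{n}/\alpha'$-$\gapspp_{\eps_Y,\eps_N}$ to $\alpha'$-$\BDD_{\eps_Y}$ by running the GGG protocol (Algorithm~\ref{am-protocol}) and simulating the optimal prover using the BDD oracle. Plugging in $\alpha' = \alpha/2$, the resulting approximation factor becomes $\sqrt{n}/(\alpha/2) = 2\sqrt{n}/\alpha$, which is exactly the factor claimed in the theorem. The parameter choice $\eps_N \in 1/\poly(n)$ is inherited from that lemma and ensures that the verifier's rejection gap $\eps_N/(1+\eps_N)$ on NO instances remains noticeably bounded away from the $\negl(n)$ error on YES instances, so a standard polynomial number of repetitions amplifies correctness to overwhelming probability.

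There is essentially no new obstacle: both constituent reductions are $\PPT$, both call their oracles a polynomial number of times, and the sample complexity of the $\LWE$ oracle therefore remains $\poly(n)$ throughout. The only point worth double-checking during write-up is that the error parameters line up consistently in the composition — in particular that the $\negl(n)$ bound required by Regev/Peikert on $\eps$ matches the $\eps_Y$ used in the $\gapspp$-to-$\BDD$ reduction, which it does by hypothesis. The quantum versus classical dichotomy in the conclusion simply tracks the same dichotomy in the Regev/Peikert lemma, so both statements of the theorem follow uniformly from the composition.
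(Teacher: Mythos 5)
Your proposal is correct and is exactly the paper's own argument: the paper proves this theorem simply by composing the two preceding lemmas (Regev/Peikert's $\BDD_{\eps_Y}$-to-$\LWE$ reduction with parameter $\alpha/2$, and the new $\gapsp$-to-$\BDD$ reduction via the GGG protocol), with the same parameter bookkeeping you give. No further comments are needed.
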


\begin{remark} By using the following relation of shortest vectors and smoothing parameters by Micciancio and Regev~\cite{DBLP:journals/siamcomp/MicciancioR07}:
 $$\frac{\sqrt{\log(1/\eps)}}{\sqrt{\pi}\lambda_{1}(\lat^{*})} \leq \eta_{\eps}(\lat) \leq \frac{\sqrt{n}}{\lambda_{1}(\lat^{*})} \mbox{ for } \eps \in [2^{-n},1], $$
 the above theorem implies that  there exists a  corresponding $\PPT$ quantum/classical reduction from $(c\cdot \frac{n}{\alpha \sqrt{\log n}})$-$\gapsvp$ to $\LWE_{q,\alpha}$ for any constant $c > 0$.

\end{remark}

\section{Co-AM Protocol for \gapsp}
\label{sec:coAM}

In this section, we describe an co-AM protocol for $\gapsp$.
Formally, we establish the following:

\begin{theorem}
  \label{thm:coam}
  For any $\alpha \geq 1/\poly(n)$ and $\eps_{Y},\eps_{N}$ such that
  $\eps_{N} \geq (1+1/\poly(n)) \cdot \eps_{Y}$,
  we have
  $(1+\alpha)$-$\gapsp_{\eps_{Y},\eps_{N}} \in \coAM$.
\end{theorem}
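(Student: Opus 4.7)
The plan is to show that the complement lies in $\AM$, by exhibiting a protocol whose verifier accepts NO instances (those with $\rho_{1/(1+\alpha)}(\lat^* \setminus \set{\veczero}) > \eps_N$) and rejects YES instances (those with $\rho_1(\lat^* \setminus \set{\veczero}) \leq \eps_Y$). Following the strategy outlined in Section~\ref{sec:introduction}, I would apply the Goldwasser--Sipser set-size lower bound protocol to a partition of $\R^n$ into thin spherical shells of multiplicative width $(1+\alpha)$. Within each such shell the two Gaussian weights of interest vary by only a $(1+\alpha)^2$ factor, so certified lower bounds on the number of dual-lattice points in each shell, combined with a single common weight per shell, yield a tight two-sided handle on the discrete Gaussian sum.

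Concretely, define shells $S_j = \set{\vecv \in \R^n : (1+\alpha)^{j-1} < \length{\vecv} \leq (1+\alpha)^j}$ for $j$ ranging over an integer interval $\set{-M, \ldots, T}$ of polynomial size, where $T = O(\log n / \log(1+\alpha)) = \poly(n)$ is chosen so that $(1+\alpha)^T \geq \sqrt{n}$ (covering the bulk of the Gaussian via Banaszczyk's tail bound, Lemma~\ref{lem:gaussian-tail}), and $M = \poly(n)$ is chosen so that on a YES instance $\lat^*$ has no nonzero vector of length below $(1+\alpha)^{-M}$, which follows from the known bound $\lambda_1(\lat^*) \geq \sqrt{\log(1/\eps_Y)/\pi}$ implied by $\eta_{\eps_Y}(\lat) \leq 1$. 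The protocol has two rounds:
\begin{enumerate}[itemsep=0pt]
\item Merlin sends $K_j \in \N$ claimed to lower bound $|S_j \cap \lat^*|$ for each $j \in \set{-M, \ldots, T}$.
\item Merlin and Arthur run the Goldwasser--Sipser $(1-\gamma)$-approximate set-size lower bound protocol in parallel on each pair $(S_j \cap \lat^*, K_j)$, with $\gamma = \Theta(\beta/\eps_N)$ where $\beta = \eps_N - \eps_Y \geq \eps_Y/\poly(n)$, so that $\beta/\eps_N = \Omega(1/\poly(n))$. Arthur accepts iff every subprotocol accepts \emph{and} $\sum_j K_j \cdot e^{-\pi(1+\alpha)^{2j}} \geq (\eps_Y + \eps_N)/2$.
\end{enumerate}

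For completeness on a NO instance, the honest Merlin sends $K_j = |S_j \cap \lat^*|$; since $\length{\vecv} > (1+\alpha)^{j-1}$ on $S_j$ gives $e^{-\pi(1+\alpha)^2 \length{\vecv}^2} < e^{-\pi(1+\alpha)^{2j}}$, the verifier's sum upper-bounds $\rho_{1/(1+\alpha)}(\lat^* \cap \sqrt{n} B_2^n \setminus \set{\veczero})$, which is within $2^{-\Omega(n)}$ of $\rho_{1/(1+\alpha)}(\lat^* \setminus \set{\veczero}) > \eps_N$ by Banaszczyk's tail bound. For soundness on a YES instance, the symmetric bound $\length{\vecv} \leq (1+\alpha)^j$ on $S_j$ gives $e^{-\pi \length{\vecv}^2} \geq e^{-\pi(1+\alpha)^{2j}}$, so $\sum_j |S_j \cap \lat^*| \cdot e^{-\pi(1+\alpha)^{2j}} \leq \rho_1(\lat^* \setminus \set{\veczero}) \leq \eps_Y$; if a cheating Merlin's weighted sum nevertheless reaches the threshold $(\eps_Y+\eps_N)/2$, an averaging argument across shells produces some $j^*$ with $K_{j^*}/|S_{j^*} \cap \lat^*| \geq (\eps_Y + \eps_N)/(2\eps_Y) > 1/(1-\gamma)$, which the corresponding set-size subprotocol catches with overwhelming probability after the standard parallel amplification to negligible combined soundness error.

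The hard part of the argument is the weight calibration: the common weight $e^{-\pi(1+\alpha)^{2j}}$ must simultaneously serve as an upper bound on the $\rho_{1/(1+\alpha)}$-weight (for completeness) and as a lower bound on the $\rho_1$-weight (for soundness) of each shell, which forces the per-shell multiplicative slack to be only $(1+\alpha)^2 = 1 + O(\alpha)$ and therefore forces the hypothesis $\eps_N \geq (1+1/\poly(n)) \eps_Y$ on the aggregate gap between YES and NO instances. The remaining details -- efficient certification of membership in $S_j$ (two norm inequalities and a lattice membership test from the basis), the choice of $M$ using the YES-case lower bound on $\lambda_1(\lat^*)$, and the bound on the tail outside $\sqrt{n} B_2^n$ (which is handled by a case split: either the total Gaussian mass is at most $2^n$, making the tail $2^{-\Omega(n)}$, or it is so large that the honest sum trivially exceeds the threshold) -- are routine.
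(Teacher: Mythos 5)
Your proposal follows essentially the same Goldwasser--Sipser-on-thin-shells strategy as the paper: the same multiplicative-width-$(1+\alpha)$ shells, the same per-shell weight $e^{-\pi(1+\alpha)^{2j}}$, the same acceptance threshold $(\eps_Y+\eps_N)/2$, the same $\Theta\bigl((\eps_N-\eps_Y)/(\eps_Y+\eps_N)\bigr)$ slack handed to the set-size lower bound protocol, and the same averaging argument for soundness. You diverge in two technical spots, both defensible. First, for the innermost region the paper collapses all dual vectors of length at most $1$ into a single bucket $S_0$ weighted by $e^{-\pi}$, whereas you extend the geometric shells to negative indices $j \geq -M$. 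Your version avoids the fact that on the completeness side $e^{-\pi}$ is not a valid per-point upper bound on $e^{-\pi(1+\alpha)^2\length{\vecv}^2}$ for $\length{\vecv} < 1/(1+\alpha)$; however, your stated justification for $M$ (the YES-case lower bound $\lambda_1(\lat^*) \geq \sqrt{\log(1/\eps_Y)/\pi}$) is beside the point, since what completeness actually requires is that no dual vector on a \emph{NO} instance falls strictly inside radius $(1+\alpha)^{-M}$. The correct fix is to take $M$ polynomial in the input bit length, which gives $\lambda_1(\lat^*) \geq 2^{-\poly(n)}$ unconditionally for any rational basis. Second, your case-split on the tail beyond radius $\sqrt{n}$ (either $\rho_{1/(1+\alpha)}(\lat^*) \leq 2^{O(n)}$ so Banaszczyk's concentration gives an exponentially small tail, or the total mass is so large the shell sum automatically clears the threshold) supplies a step that the paper attributes to Lemma~\ref{lem:gaussian-tail} without noting that the lemma controls only the \emph{ratio} $\rho_s(\lat^*\setminus rB_2^n)/\rho_s(\lat^*)$, not the numerator directly.
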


By applying Corollary~\ref{cor:eps-epsY-epsN}, we obtain the following  upper bound on the complexity of $\gamma$-$\gapspp_{\eps}$. 

\begin{corollary}
  For every $\eps:\N \rightarrow (0,1)$  such that  $\eps(n) < 1- 1/\poly(n)$, we have $(1+o(1))$-$\gapspp_{\eps} \in \coAM$. 

\end{corollary}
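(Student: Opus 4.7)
The plan is to exhibit an AM protocol for the complement, in which Merlin convinces Arthur that $\rho_{1/(1+\alpha)}(\lat^* \setminus \set{\veczero}) > \eps_N$ on NO instances, while no strategy can fool Arthur on a YES instance (where the analogous quantity is at most $\rho_{1}(\lat^* \setminus \set{\veczero}) \leq \eps_Y$). The main tool is the classical set size lower bound protocol of Goldwasser and Sipser: for any efficiently verifiable set, Merlin can convince Arthur of a lower bound $K$ on its cardinality up to a $(1-\gamma)$ multiplicative slack, and $\poly(n)$ such subprotocols can be run in parallel with perfect completeness and negligible soundness error.

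The main conceptual step is to reduce a lower bound on the \emph{weighted} Gaussian sum over $\lat^*$ to lower bounds on set sizes. I would partition $\lat^* \setminus \set{\veczero}$ into thin geometric annuli $S_i$, with radii growing in a geometric progression of ratio $1+\alpha$, truncated outside a ball of radius $\Theta(\sqrt{n})$. Because $\alpha \geq 1/\poly(n)$, the number of annuli is $T = O((\log n)/\log(1+\alpha)) = \poly(n)$, and within each annulus every point contributes a weight within a $(1+\alpha)^{O(1)}$ factor of a uniform ``representative weight'' $w_i$. Banaszczyk's discrete Gaussian tail bound (Lemma~\ref{lem:gaussian-tail}) guarantees that the truncated tail contributes at most $2^{-\Omega(n)}$ to the sum, which is negligible compared to the gap $\eps_N - \eps_Y \geq 1/\poly(n)$ furnished by the hypothesis.

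The protocol then proceeds as follows. Merlin first announces claimed sizes $K_0, K_1, \ldots, K_T$ for the annuli; the parties run the $T+1$ Goldwasser--Sipser subprotocols in parallel with slack $\gamma = \Theta((\eps_N - \eps_Y)/\eps_N) = 1/\poly(n)$ (membership in $S_i$ given a candidate coordinate vector is trivially verifiable). Arthur accepts iff every subprotocol accepts \emph{and} the weighted sum $\sum_{i=0}^{T} K_i \cdot w_i$ exceeds a threshold chosen strictly between $\eps_Y$ and $\eps_N$, say $\eps_Y + \tfrac12(\eps_N - \eps_Y)$. For completeness on a NO instance, the honest Merlin reports $K_i = |S_i|$, and the bound $\rho_{1/(1+\alpha)}(\lat^* \setminus \set{\veczero}) > \eps_N$ minus the negligible tail exceeds the threshold. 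For soundness on a YES instance, the true weighted sum is bounded by $\rho_{1}(\lat^* \setminus \set{\veczero}) \leq \eps_Y$; so if Merlin reports $K_i$'s whose weighted sum crosses the threshold, a pigeonhole argument forces some $K_{i^*}$ to exceed $|S_{i^*}|/(1-\gamma)$, and the corresponding Goldwasser--Sipser subprotocol rejects except with negligible probability.

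The main bookkeeping obstacle I anticipate is calibrating $\gamma$ so that the pigeonhole inflation in the soundness analysis is strictly larger than $1/(1-\gamma)$, while keeping $\gamma \geq 1/\poly(n)$ so that each subprotocol remains polynomial-time; the hypothesis $\eps_N \geq (1+1/\poly(n)) \cdot \eps_Y$ is precisely what makes such a choice possible. A secondary subtlety is handling the innermost annulus, whose points may be arbitrarily close to the origin and thus carry near-unit weight, which forces a slightly more conservative choice of $w_0$ (or of the innermost radius) in the completeness direction. Finally, to derive the stated single-parameter corollary, I would apply Corollary~\ref{cor:eps-epsY-epsN}, which amplifies the $\eps$-gap at the cost of only a $(1+o(1))$ blowup in the approximation factor.
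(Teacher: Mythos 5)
Your proposal matches the paper's proof of Theorem~\ref{thm:coam} essentially step for step: a Goldwasser--Sipser set size lower bound applied to thin geometric annuli of $\lat^{*}$, a Banaszczyk tail bound to truncate outside a $\Theta(\sqrt{n})$ ball, a pigeonhole argument for soundness, and a final application of Corollary~\ref{cor:eps-epsY-epsN} to pass from the two-parameter $(\eps_Y,\eps_N)$ version to the stated single-$\eps$ corollary with a $(1+o(1))$ factor. Your flag about the innermost annulus is in fact sharper than the paper's write-up: since points of $S_0$ satisfy only $0<\length{\vecv}\le 1$, the paper's bound of $K_0\, e^{-\pi(1+\alpha)^{2\cdot 0}}$ on the $S_0$ contribution to $\rho_{1/(1+\alpha)}$ is not justified as written (the representative weight there must be taken as $1$, not $e^{-\pi}$), a slip your more conservative choice of $w_0$ correctly avoids while the overall argument goes through unchanged.
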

\fnote{need to check Corollary~\ref{cor:eps-epsY-epsN} to see whether we can apply $\eps >1$}

\iffalse
As an immediate implication of Corollary~\ref{cor:eps-epsY-epsN}, we obtain the 
following corollary.
\begin{corollary} For any $\eps(n) < 1 - 1/\poly(n)$, we have $O\left(\sqrt{\frac{\log(1/\eps)}{\log n}} \right)$-$\gapsp_{\eps} \in \coAM$. Moreover, if $\eps \in 1/\poly(n)$, then we have 
$(1+o(1))$-$\gapsp_{\eps} \in \coAM$. 
\end{corollary}
\fi

Our main tool is the classic set size lower bound protocol by
Goldwasser and Sipser~\cite{GoldwasserS86}.  We use this protocol to
show that the smoothing parameter should be at least as large as some
quantity. To show that $\eta(\lat)$ is large, equivalently we are showing 
that the discrete Gaussian weights 
are large for the points in $\lat^{*}$ inside the $\sqrt{n}$ ball\footnote{Actually the radius needs to depend on the parameter $\eps_{Y}$. Here for simplicity we think $\eps_{Y}$ as a constant.}. (The Gaussian weights outside 
the ball becomes exponentially small.)

The set size 
lower bound protocol gives a very accurate approximation of lattice points inside the
$\sqrt{n}$ ball, but   its set size is not sufficient to approximate the Gaussian weights. 
The two points inside the ball could have lengths that differ a lot, and thus 
their Gaussian weights differ even more. Our new observation is that we 
can partition the $\sqrt{n}$ ball into different shells (con-centered at $\veczero$),
and then use the set size protocol to approximate the number of lattice points 
lying in each shell.
Since every point in the same shell has roughly the same length and thus 
Gaussian weight,
we can approximate the total Gaussian weights in a shell according to 
the size. Thus, summing up the Gaussian weight of each shell, 
we are able to approximate the Gaussian weights inside the $\sqrt{n}$ ball.
Thus, we are able to show that the Gaussian weights inside the ball are large, and thus 
$\eta$ is large.

 First we describe the set size lower bound protocol:

\begin{definition}[Set size lower bound protocol~\cite{GoldwasserS86}]
  Let $\Ver$ be a probabilistic polynomial time verifier, and $\Prov$
  be a (computationally unbounded) prover. Let $S \subseteq \zo^{n}$
  be a set whose membership can be efficiently certified. The two
  parties hold common inputs $1^{n}$ and $K \in \N$.

  We say $\langle \Prov, \Ver \rangle $ is a
  $(1-\gamma)$-approximation protocol of the set size $|S|$ if the
  following conditions hold:

  \begin{itemize}
  \item (Completeness) If $|S| \geq K$, then $\Ver$ will always
    accept. %
  \item (Soundness) If $|S| < (1-\gamma) \cdot K$, then $\Ver$ will
    accept with probability at most $\negl(n)$ for some negligible
    function $\negl(\cdot)$.
  \end{itemize}
\end{definition}

\noindent Now we recall the classic construction of the set size lower
bound protocol:

\begin{theorem}[\cite{GoldwasserS86}]
  For any set $S \in \zo^{n}$ whose membership can be efficiently
  certified, and any $\gamma = 1/\poly(n)$, there exists a
  public-coin, 2-round $(1-\gamma)$-approximation protocol of the set
  size $|S|$.

  Moreover, for any $k=\poly(n)$, we can run the protocol $k$-times in
  parallel for $k$ set-number pairs $\{(S_{i},K_{i})\}_{i\in [k]}$,
  and the resulting protocol has perfect completeness and negligible
  soundness error. Here soundness error means the probability that
  there exists some $i^{*} \in [k]$ such that $|S_{i} |\leq (1-\gamma)
  \cdot K_{i} $ but $\Ver$ accepts.
\end{theorem}

\begin{proof}[Proof of Theorem~\ref{thm:coam}]
  To show the theorem, we first describe a $\coAM$ protocol $\langle
  \Prov, \Ver \rangle$ in the following. Note that the verifier in a
  $\coAM$ protocol must accept the NO instances and reject the YES
  instances of $(1+\alpha)$-$\gapsp_{\eps_{Y},\eps_{N}}$.  For
  convenience, the YES or NO instances here are with respect to the
  $\gapsp$ problem, so the completeness means the verifier accepts any
  NO instance, and the soundness means he rejects any YES instance.

  Let $B$ be an $n$-dimensional basis of a lattice $\lat$ as input to
  the prover and verifier, satisfying either $\eta_{\eps_{N}}(\lat)
  \geq (1+ \alpha)$ (NO instance) or $\eta_{\eps_{Y}} (\lat) \leq 1$
  (YES instance), where $\alpha \geq 1/\poly(n)$, $\eps_{N} \geq (1+1/\poly(n)) \cdot \eps_{Y}$. 
   The prover and the verifier agree on the
  following parameters:

  \paragraph{Parameters.} Let $R= n \cdot (1+\log(1/\eps_{Y}))$,
 $1-\beta =  \frac{2\eps_{Y}}{\eps_{Y}+\eps_{N}}$,
  and let $T= \lceil \frac{ \log \sqrt R}{\log(1+\alpha)}
  \rceil$. We know for $\alpha \geq 1/\poly(n)$ being noticeable, we
  have $T$ bounded by some polynomial, i.e. $T\leq \poly(n)$.  Then we
  define spaces $S_{0}\eqdef \left\{ \vecv\in \lat^{*}: 0<
    \length{\vecv} \leq 1 \right\}$, and $S_{i} \eqdef \left\{ \vecv
    \in \lat^{*}: (1+\alpha)^{i-1}< \length{\vecv} \leq (1+\alpha)^{i}
  \right\}$, for $i \in [T]$.  Pictorially, these $S_{i}$'s form a
  partition of space inside the region of $\sqrt{R} B^{n}_{2}$. Each
  $S_{i}$ is a shell that contains lattice points from length
  $(1+\alpha)^{i-1}$ to $(1+\alpha)^{i}$.
 
  Then $\langle \Prov, \Ver \rangle $ does the following:

  \begin{itemize}

  \item $\Prov$ sends $K_{0},K_{1},K_{2}, \dots K_{T} \in \N$ as
    claims of the sizes of $S_{0},S_{1}, S_{2}, \dots, S_{T} $.
 
  \item Then for each pair $(S_{i},K_{i})$, $\Prov$ and $\Ver$ run the
    $(1-\beta)$-approximation protocol as its subroutine.  These $T$
    approximation protocols are run in parallel. %
   Note that $\frac{2\eps_{Y}}{\eps_{Y}+\eps_{N}} \leq 1-1/\poly(n)$ since $\eps_{N} \geq (1+1/\poly(n)) \cdot \eps_{Y}$. Thus, $1-\beta \leq 1-1/\poly(n)$, which is within the range of parameters of the set size lower bound. 
  
  \item In the end, $\Ver$ accepts if and only if all the
    approximation subprotocols are accepted, and $\sum_{0\leq i \leq
      T} K_{i} \cdot e^{-\pi (1+\alpha)^{2i} } \geq (\eps_{Y} + \eps_{N})
    /2$. 
  \end{itemize}

  It is easy to see that the verifier can be implemented in
  probabilistic polynomial time. It remains to show the completeness
  and soundness. We show them by the following two claims:

\begin{claim}
  If $B$ is a NO instance, $\Ver$ will always accept the honest
  prover's strategy.
\end{claim}

\begin{proof}
  Let $K_{0},K_{1},\dots, K_{T}$ be the values of the set sizes
  $S_{0},S_{1},\dots, S_{T}$, as the honest prover will always send
  the correct values. From the promise of NO instances, we know
  $\eta_{ \eps_{N}}(\lat) \geq (1+ \alpha) $, which implies
  \[ q \eqdef \sum_{\vecv\in \lat^{*} \setminus \{\veczero\}} e^{-\pi
    (1+\alpha)^{2} \length{\vecv}^{2}} \geq \eps_{N}.\]

  By rearranging the order of summation, we have
  \begin{eqnarray*}
    \lefteqn{ q } 
    & = & \sum_{0\leq i \leq T} \sum_{\vecv\in S_{i}} e^{-\pi (1+\alpha)^{2}  \length{\vecv}^{2}}  + \sum_{v\in \lat^{*} \setminus (\sqrt{R} \cdot B^{n}_{2})} e^{-\pi (1+\alpha)^{2}  \length{\vecv}^{2}}  \\
    &\leq& \sum_{0\leq i \leq T} \sum_{\vecv\in S_{i}} e^{-\pi (1+\alpha)^{2}  \length{\vecv}^{2}}   +  2^{-n}\cdot \eps_{Y}  \\%2^{-n}\left(1+\sum_{0\leq i \leq T} \sum_{v\in S_{i}} e^{-\pi (1+\alpha)^{2} \length{v}^{2}} \right) \\
    &\leq &  \sum_{0\leq i \leq T} K_{i} \cdot e^{-\pi  (1+\alpha)^{2i} } + 2^{-n} \cdot \eps_{Y}.
  \end{eqnarray*}
  The first equality comes from the rearrangement; the second line is
  a tail bound inequality by Lemma \ref{lem:gaussian-tail} by plugging
  suitable parameters; the last inequality is by the fact that $\vecv \in
  S_{i}$ implies $\length{\vecv}\geq (1+\alpha)^{i-1}$ for $i\in [T]$.
 
  Then we have
 \[ \sum_{0\leq i \leq T} K_{i} \cdot e^{-\pi  (1+\alpha)^{2i} }  \geq
 \eps_{N} -2^{-n} \cdot \eps_{Y} \geq (\eps_{Y} + \eps_{N})/2, \]
 for all sufficiently large $n$'s.  This follows by the fact that
 $\eps_{N}\geq (1+1/\poly(n)) \cdot \eps_{Y}$,
 and a straightforward examination.  
 Thus, the verifier will always accept.
\end{proof}

\begin{claim}
  If $B$ is a YES instance, then no prover can convince the verifier
  with probability better than a negligible quantity.
\end{claim}

\begin{proof}
  From the promise of YES instances, we know $\eta_{ \eps_{Y}}(\lat)
  \leq 1$, which implies
  \[ q \eqdef \sum_{\vecv\in \lat^{*} \setminus \{ \veczero\}} e^{-\pi
    \length{v}^{2}} \leq \eps_{Y}.
  \]
  Similarly, we rearrange the order of summation and get
  \begin{eqnarray*}
    \lefteqn{ q } 
    & = & \sum_{0\leq i \leq T} \sum_{\vecv\in S_{i}} e^{-\pi \length{\vecv}^{2}}  + \sum_{\vecv\in \lat^{*} \setminus (\sqrt{R} \cdot B^{n}_{2})} e^{-\pi   \length{\vecv}^{2}}  \\
    &\geq& \sum_{0\leq i \leq T} \sum_{\vecv\in S_{i}} e^{-\pi   \length{\vecv}^{2}}  \\
    &\geq & \sum_{0\leq i \leq T} |S_{i}| \cdot e^{-\pi (1+\alpha)^{2i} }.
  \end{eqnarray*}

  Suppose the prover sends some $K_{0},K_{1} \dots K_{T}$ such that
  $\sum_{0\leq i \leq T} K_{i} \cdot e^{-\pi (1+\alpha)^{2i} } \geq
  (\eps_{Y} + \eps_{N})/2$, it must be the case that $\exists i^{*} \in
  [T]$ such that $\frac{\eps_{Y}}{(\eps_{Y}+ \eps_{N})/2} K_{i}  = (1-\beta) \cdot K_{i}\geq
  |S_{i}|$ from a simple counting argument.
   By the soundness of the
  $(1-\beta)$-approximation protocol, the verifier will catch this
  with probability $(1-\negl(n))$. Hence the verifier accepts a YES instance with only negligible probability.
\end{proof}

Together with the two claims, the proof of the theorem is complete.
\end{proof}

\section{Deterministic Algorithm for Smoothing Parameter}

In this section we show that $(1+o(1))$-$\gapsp$ can be solved deterministically
in time $2^{O(n)}$. In particular we are able to show the following theorem.

To show the theorem, use are going to establish the following lemma.

\begin{theorem}
\label{thm:det}
For any $\eps_{Y},\eps_{N}:\N \rightarrow [0,1]$ such that  $\eps_{N}(n)-\eps_{Y}(n) \geq 1/2^{-2n}$, $1$-$\gapsp_{\eps_{Y},\eps_{N}} \in \mathit{DTIME}(2^{O(n)} )$.
\end{theorem}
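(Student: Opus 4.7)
The plan is to reduce the decision problem to approximating $S := \rho_{1}(\lat^{*} \setminus \set{\veczero})$ to additive accuracy better than $\delta := (\eps_{N} - \eps_{Y})/2$, which by hypothesis is at least $2^{-O(n)}$.  The final decision is then made by comparing the estimate $\tilde S$ to the midpoint $(\eps_{Y} + \eps_{N})/2$ and outputting YES iff $\tilde S$ lies below it.

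First I would truncate the sum: fix a radius $R = c\sqrt{n}$ with $c$ a sufficiently large constant so that Banaszczyk's concentration inequality (Lemma~\ref{lem:gaussian-tail}) gives $\rho_{1}(\lat^{*} \setminus R \cdot B_{2}^{n}) \leq \delta/2$.  Since the lemma provides a tail bound \emph{relative} to $\rho_{1}(\lat^{*})$, making the truncation bound absolute requires a separate upper bound on $\rho_{1}(\lat^{*})$.  This follows from a preliminary lower bound on $\lambda_{1}(\lat^{*})$ obtained via the deterministic $2^{O(n)}$-time SVP algorithm of~\cite{DBLP:conf/stoc/MicciancioV10} applied to the dual basis $(B^{-1})^{t}$: if the returned shortest vector $\vecv \in \lat^{*}$ satisfies $e^{-\pi\length{\vecv}^{2}} > \eps_{N}$, then already $S > \eps_{N}$ and I output NO; otherwise $\lambda_{1}(\lat^{*}) \geq r_{0} := \sqrt{\ln(1/\eps_{N})/\pi}$, and since $\eps_{N} \geq 2^{-2n}$ gives $r_{0} \leq O(\sqrt{n})$, a standard ball-packing estimate bounds both $\rho_{1}(\lat^{*})$ and $\abs{\lat^{*} \cap R \cdot B_{2}^{n}}$ by $2^{O(n)}$ (after possibly enlarging $c$ to $\polylog(n)$ to absorb the packing factor in the absolute tail bound).

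Next I would enumerate $\lat^{*} \cap R \cdot B_{2}^{n}$ deterministically in time $2^{O(n)}$ using the lattice-point enumeration algorithms of~\cite{DBLP:conf/stoc/MicciancioV10,DBLP:conf/focs/DadushPV11}, and compute $\tilde S := \sum_{\veczero \neq \vecw \in \lat^{*} \cap R \cdot B_{2}^{n}} e^{-\pi\length{\vecw}^{2}}$.  Each exponential is evaluated to $\polylog(1/\delta) = \poly(n)$ bits of precision so that the cumulative rounding error over the $2^{O(n)}$ terms is at most $\delta/2$.  Combining truncation and rounding gives $\abs{\tilde S - S} \leq \delta$, which suffices for the decision, and the total running time is $2^{O(n)}$.

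The main obstacle is the interplay between the enumeration radius $R$ and the lower bound on $\lambda_{1}(\lat^{*})$: $R$ must be large enough that Banaszczyk's tail estimate -- which is naturally a relative bound -- yields an \emph{absolute} bound on the discarded mass, yet small enough that $\abs{\lat^{*} \cap R \cdot B_{2}^{n}}$ remains $2^{O(n)}$.  The preliminary SVP step compresses this into a clean case split, but the subtlety is that the crude single-vector check does not by itself certify an upper bound on $S$ when many short dual vectors could jointly overshoot $\eps_{N}$; closing the argument in every parameter regime (in particular when $\eps_{N}$ is a large constant and $r_{0}$ is correspondingly small) may require iteratively refining $R$ using successively applied packing and tail estimates, or invoking the DPV11 enumeration in an adaptive shell-by-shell manner that halts once the next shell's contribution provably drops below $\delta$.
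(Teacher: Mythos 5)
Your high-level plan (truncate the Gaussian sum to a ball, enumerate, and compare the partial sum to the midpoint $(\eps_{Y}+\eps_{N})/2$) matches the paper's approach, and you correctly identify the central difficulty: Banaszczyk's tail bound is relative to $\rho_{1}(\lat^{*})$, so turning it into an absolute bound on the discarded mass requires control of $\rho_{1}(\lat^{*})$, while simultaneously the enumeration time requires control of $\abs{\lat^{*}\cap R B_{2}^{n}}$. However, your proposed fix via a preliminary SVP call plus ball-packing has a real gap, which you yourself flag but do not close. When $\eps_{N}$ is a large constant, the lower bound $r_{0}=\sqrt{\ln(1/\eps_{N})/\pi}$ on $\lambda_{1}(\lat^{*})$ degenerates to an arbitrarily small constant and the packing estimate $(2R/\lambda_{1}(\lat^{*})+1)^{n}$ is no longer $2^{O(n)}$. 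Moreover, your suggestion to enlarge $R$ to $\polylog(n)\sqrt{n}$ makes things worse, not better: even with $r_{0}=\Theta(\sqrt{n})$ this inflates the point count to $(\polylog(n))^{n}=2^{\omega(n)}$. The shell-by-shell iterative refinement you allude to is not spelled out, so as written the argument does not go through uniformly over all admissible $\eps_{Y},\eps_{N}$.

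The paper sidesteps all of this with an early-termination trick that makes the SVP call and the packing estimate unnecessary. Fix $r=\sqrt{n}$ and lazily enumerate $\lat^{*}\cap\sqrt{n}B_{2}^{n}$ via Ball-Enum; if the enumeration has already produced $e^{\pi n}\eps_{N}$ nonzero points, halt and output NO, since each such point contributes at least $e^{-\pi n}$ to $\rho_{1}(\lat^{*}\setminus\set{\veczero})$, so that sum already exceeds $\eps_{N}$. This caps the running time at $2^{O(n)}\cdot(e^{\pi n}\eps_{N}+1)=2^{O(n)}$ with no geometric assumption on $\lambda_{1}(\lat^{*})$. The tail bound then only needs to hold on the branch where the algorithm might accept, namely when the computed partial sum $u=\sum_{\veczero\neq\vecv\in\lat^{*}\cap\sqrt{n}B_{2}^{n}}e^{-\pi\length{\vecv}^{2}}$ is at most $(\eps_{Y}+\eps_{N})/2\leq 1$. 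In that case $\rho_{1}(\lat^{*}\cap\sqrt{n}B_{2}^{n})=1+u\leq 2$, and the relative Banaszczyk factor $\bigl(2\pi e^{1-2\pi}\bigr)^{n/2}\approx 2^{-2.4n}$ yields an absolute tail bound $O(2^{-2.4n})<(\eps_{N}-\eps_{Y})/2$ under the hypothesis $\eps_{N}-\eps_{Y}\geq 2^{-2n}$. In short, the paper's self-normalizing argument (bound the tail only in the regime where $u$ is small, in which case $\rho_{1}(\lat^{*})$ is automatically controlled) is exactly the device that closes the gap you identified, and it replaces your SVP-and-packing preprocessing entirely.
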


Together with Corollary \ref{cor:eps-epsY-epsN}, we are able to obtain the following corollary.
\begin{corollary}
\label{col:det}
For any $\eps: \N\rightarrow [0,1]$ and $ \eps(n) \geq  2^{-n}$,
the problem $(1+o(1))$-$\gapsp_{\eps} \in \mathit{DTIME} (2^{O(n)})$.
\end{corollary}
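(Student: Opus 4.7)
The plan is to deduce Corollary~\ref{col:det} by black-box combination of Theorem~\ref{thm:det} with the trivial parameter-gap reduction of Corollary~\ref{cor:eps-epsY-epsN}.  Given input $B$ and a parameter function $\eps(n) \geq 2^{-n}$, I would set
\[
\eps_Y = \eps \quad \text{and} \quad \eps_N = \eps + 2^{-2n}.
\]
By construction the gap $\eps_N - \eps_Y = 2^{-2n}$ meets the hypothesis of Theorem~\ref{thm:det}, so there is a deterministic $2^{O(n)}$-time decision procedure for $1$-$\gapsp_{\eps_Y,\eps_N}$.  Corollary~\ref{cor:eps-epsY-epsN} then provides a trivial reduction from $\gamma'$-$\gapsp_{\eps}$ to $1$-$\gapsp_{\eps_Y,\eps_N}$ with loss
\[
\gamma' \;=\; \sqrt{\log(\eps_Y^{-1})/\log(\eps_N^{-1})}.
\]
Chaining the reduction with the algorithm of Theorem~\ref{thm:det} yields the desired deterministic $2^{O(n)}$-time algorithm; what remains is the purely algebraic check that $\gamma' = 1 + o(1)$.

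For that check, write $a = \log(\eps_Y^{-1})$ and $b = \log(\eps_N^{-1})$ (both positive, with $a \geq b$).  Using $\log(1+x) \leq x$ and the hypotheses $\eps_Y \geq 2^{-n}$ and $\eps_N - \eps_Y = 2^{-2n}$,
\[
a - b \;=\; \log\!\parensfit{1 + \tfrac{\eps_N - \eps_Y}{\eps_Y}} \;\leq\; \tfrac{\eps_N - \eps_Y}{\eps_Y} \;\leq\; \tfrac{2^{-2n}}{2^{-n}} \;=\; 2^{-n}.
\]
So long as $b = \Omega(1)$ (equivalently, $\eps_N$ is bounded away from $1$), this gives $(\gamma')^2 = 1 + (a-b)/b \leq 1 + O(2^{-n})$, hence $\gamma' = 1 + o(1)$, as required.

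The only subtlety is the edge regime in which $\eps(n)$ is very close to $1$, where $b$ could become inverse-exponentially small.  I expect this to be the main (and essentially only) wrinkle in the proof, but it is handled routinely: if $\eps(n) \geq 1/2$, cap it by taking $\eps_Y = 1/2$ (note $\eta_{1/2}(\lat) \geq \eta_{\eps}(\lat)$ by monotonicity, and a cheap appeal to Lemma~\ref{lem:rho_small_eps} lets one compare the $\eps$- and the $1/2$-smoothing parameters within a $1+o(1)$ multiplicative factor), thereby reducing to the regime $\eps_Y \in [2^{-n}, 1/2]$ where the bound $b \geq \log 2 - O(2^{-2n}) = \Omega(1)$ holds automatically and the analysis above goes through.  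The combinatorial/algorithmic content lies entirely in Theorem~\ref{thm:det}; this corollary is just a parameter-rescaling observation on top of it.
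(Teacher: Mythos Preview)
Your main argument—instantiate Theorem~\ref{thm:det} with $\eps_Y=\eps$, $\eps_N=\eps+2^{-2n}$ and then invoke Corollary~\ref{cor:eps-epsY-epsN}—is exactly the paper's intended derivation (the paper supplies no detail beyond citing those two results), and your verification that $\gamma'=1+o(1)$ whenever $\eps$ stays bounded away from $1$ is correct and more explicit than what the paper writes.

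The edge-case patch, however, does not work as stated. When $\eps(n)\to 1$, the comparison factor $t=\sqrt{1+\log(2\eps)/\log(\eps^{-1})}$ that Lemma~\ref{lem:rho_small_eps} produces tends to infinity (because $\log(\eps^{-1})\to 0$), so $\eta_{1/2}$ and $\eta_\eps$ are \emph{not} within a $1+o(1)$ factor in general; capping at $\eps_Y=1/2$ therefore incurs an unbounded loss in the approximation. The same obstruction hits the direct route: if $\eps=1-\mu$ with $\mu$ only a constant multiple of $2^{-2n}$, then $(\gamma')^2\approx \mu/(\mu-2^{-2n})$ is bounded away from $1$. The paper's one-line justification does not confront this regime either, so the corollary as literally stated (with $\eps$ allowed all the way up to $1$) seems to need a mild extra hypothesis such as $\eps(n)\leq 1-2^{-o(n)}$; under any such cap your argument is complete.
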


We will crucially use the following lattice point enumeration
algorithm. The algorithm is a slight tweak of closest vector problem
algorithm of Micciancio and Voulgaris
\cite{DBLP:conf/stoc/MicciancioV10}, which was first used in
\cite{DBLP:conf/focs/DadushPV11} to solve the shortest vector problem
in general norms.

\begin{proposition}[\cite{DBLP:conf/stoc/MicciancioV10,DBLP:conf/focs/DadushPV11}, Algorithm Ball-Enum]
There is an algorithm Ball-Enum that given a radius $r > 0$, a basis
$B$ of an $n$-dimensional lattice $\lat$, and $t \in \R^n$, lazily
enumerates the set $\lat \cap (rB_2^n +t)$ in deterministic time
$2^{O(n)} \cdot (|L \cap (t+rB_2^n)|+1)$ using at most $2^{O(n)}$ space.
\end{proposition}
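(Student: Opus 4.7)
The plan is to adapt the Voronoi-cell based CVP algorithm of Micciancio--Voulgaris (MV) into a lattice point enumeration procedure, in the spirit of Dadush--Peikert--Vempala. The algorithm has three phases. First, as one-time preprocessing, run the MV algorithm on $B$ to compute the set $R$ of at most $2(2^n - 1)$ Voronoi-relevant vectors of $\lat$; this takes $2^{O(n)}$ time and space and gives a complete description of $\calV(\lat)$. Second, use MV's Voronoi-cell based CVP solver (which runs in $2^{O(n)}$ time once $R$ is in hand) to find a lattice vector $v^* \in \lat$ closest to $t$. If $\|v^* - t\| > r$ then $\lat \cap (t + rB_2^n) = \emptyset$ and nothing is output; otherwise $v^*$ serves as a seed lattice point in the ball.

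The enumeration phase performs a tree traversal of $\lat \cap (t + rB_2^n)$. The key structural property, which I would derive directly from the defining characterization of Voronoi-relevant vectors that underlies the MV CVP algorithm, is that for every $v \in \lat \cap (t+rB_2^n)$ with $v \neq v^*$ there exists $u \in R$ such that $\|v - u - t\| < \|v - t\|$, and consequently $v - u$ is also in the ball. Fixing a lexicographic tie-breaking order on $R$ and always selecting the smallest such distance-reducing $u$ defines a canonical parent function $\pi(v)$; since $\pi$ strictly decreases $\|{\cdot} - t\|$ with ties broken deterministically, $\pi$ is acyclic and induces a spanning tree $T$ of $\lat \cap (t + rB_2^n)$ rooted at $v^*$.

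Lazy enumeration is then a depth-first traversal of $T$: at each visited node $v$ we output $v$ and iterate over all $u \in R$ to identify the children, namely those $v + u \in t + rB_2^n$ with $\pi(v + u) = v$. Computing $\pi(v+u)$ costs $|R| \cdot \poly(n) = 2^{O(n)}$ (we compare distance reductions along each candidate direction), so processing a single node costs $2^{O(n)}$. Since every lattice point in the ball is visited exactly once and every edge of $T$ is traversed at most twice, the total enumeration time is $2^{O(n)} \cdot N$ where $N = |\lat \cap (t + rB_2^n)|$; combined with preprocessing and the one CVP call this gives the claimed $2^{O(n)} \cdot (N + 1)$.

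The main obstacle I expect is establishing the $2^{O(n)}$ space bound \emph{independently} of $N$, since a naive recursive DFS could maintain a stack of depth as large as $N$. The remedy is to implement the traversal with an \emph{implicit} stack: because $\pi(v)$ is computable from $v$ alone in $2^{O(n)}$ time and $\poly(n)$ space, after completing the subtree rooted at $v$ we can recover $v$'s parent directly via $\pi$ and then recompute ``which child of the parent we just returned from'' by consulting the fixed tie-breaking order on $R$. With this trick the working memory beyond the $2^{O(n)}$-sized Voronoi-cell representation is only $\poly(n)$, yielding the overall $2^{O(n)}$ space bound. A secondary subtlety is rigorously verifying that $\pi$ is well-defined and produces a genuinely connected tree; this I would prove by induction on $\|v - t\|$ using the standard fact that $v \in \lat$ is a closest vector to $t$ iff $t - v \in \calV(\lat)$, which in turn holds iff no Voronoi-relevant step from $v$ reduces the distance to $t$.
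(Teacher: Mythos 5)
You should first note that the paper does not actually prove this proposition: it is imported, citation-only, from Micciancio--Voulgaris and Dadush--Peikert--Vempala, with only the remark that Ball-Enum is ``a slight tweak'' of the MV closest-vector algorithm. So there is no in-paper argument to compare against; measured against the cited works, your reconstruction is essentially the intended one. Computing the $\le 2(2^n-1)$ Voronoi-relevant vectors with MV, seeding with a CVP call, defining a distance-decreasing parent function $\pi$ via relevant-vector steps, and doing a DFS in which children of $v$ are the points $v+u$ ($u$ relevant) lying in the ball with $\pi(v+u)=v$, with the stack made implicit by recomputing $\pi$ so that working memory beyond the Voronoi data is $\poly(n)$ --- this is exactly the mechanism that gives time $2^{O(n)}\cdot(N+1)$ and space $2^{O(n)}$ independent of $N$, and your identification of the space bound as the delicate point is on target.

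The one genuine soft spot is your key structural claim as stated: ``for every $v\in\lat\cap(t+rB_2^n)$ with $v\ne v^*$ there exists a relevant $u$ with $\length{v-u-t}<\length{v-t}$.'' This is equivalent to $t-v\notin\calV(\lat)$, so it fails precisely when $t$ has several closest lattice vectors (e.g.\ $t$ a deep hole or a midpoint of two lattice points): for a tied closest vector $v\ne v^*$ no relevant step strictly decreases the distance, $\pi(v)$ is undefined, and your traversal would silently omit $v$ together with every point whose parent chain terminates at $v$. Since the algorithm must be deterministic and worst-case correct, this degenerate case has to be handled explicitly --- for instance by replacing ``strictly smaller distance'' with a total order that refines distance and proving that lattice points equidistant from $t$ are themselves connected by relevant-vector steps (the cells of the Voronoi tiling containing $t$ form a complex whose facet-adjacency graph is connected), or by a symbolic/lexicographic perturbation of $t$ that removes ties without changing $\lat\cap(t+rB_2^n)$. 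Your lexicographic tie-breaking among several distance-reducing $u$'s does not address this, because the problematic case is that no distance-reducing $u$ exists at all. With that repair, the argument matches the cited construction.
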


Now we are ready to prove Theorem \ref{thm:det} using the above theorem.
\begin{proof} Let $B$ be an $n$-dimensional basis of a lattice $\lat$ satisfying either $\eta_{\eps_{Y}} \leq 1$ or $\eta_{\eps_{N}} \geq 1$, where $\eps_{Y},\eps_{N}$ are parameters that  the conditions in the theorem hold. Now we are going to describe an algorithm $A$ on input $B$ that distinguishes the two cases. 

 $A$ runs the enumeration algorithm with the parameters $t=0$, $r=\sqrt{n}$ to enumerate all points in $\lat^{*} \cap \sqrt{n}\cdot B^{n}_{2}$. If $A$ has already found $e^{\pi \cdot n} \cdot \eps_{N}$ points from the enumeration algorithm, $A$ terminates and rejects immediately. This is because 
 
 $$
 \sum_{\vecv \in \lat^{*} \setminus \{0\}}  e^{-\pi \|\vecv\|^{2}} \geq \sum_{\vecv \in ( \lat^{*} \setminus \{0\})  \cap \sqrt{n} \cdot B^{n}_{2}}  e^{-\pi \|\vecv\|^{2}} \geq ( e^{\pi n} \cdot  \eps_{N} ) \cdot e^{-\pi   n}  =\eps_{N},
  $$
 which already implies the case of no instances.
 
 Otherwise $A$ computes $u = \sum_{\vecv \in (\lat^{*} \setminus \{0\}) \cap \sqrt{n}\cdot B^{n}_{2}  } e^{-\pi  \|\vecv\|^{2}}$. $A$ accepts if $u \leq (\eps_{Y} + \eps_{N})/2$, and otherwise rejects. The analysis of its completeness and soundness is very similar to  that of Theorem \ref{thm:coam}, so we do not restate it here. It is not hard to see that this can be done in time $ 2^{O(n)} \cdot  e^{\pi  n} \cdot  \eps_{N} = 2^{O(n)} $.

\end{proof}

\iffalse
Now we are ready to prove  Theorem \ref{thm:det}. 
\begin{proof}
By Lemma \ref{lem:det}, we know that we can solve $1$-$\gapsp_{2^{-n}, 2^{-n+1 }} \in \mathit{DTIME}(2^{O(n)})$. By  Corollary \ref{cor:eps-epsY-epsN}, we know that $(1+O(1/\sqrt{n}))$-$\gapsp_{2^{-n}} \in \mathit{DTIME}(2^{O(n)}) $. This follows from a straightforward  calculation. Thus the theorem holds.

\end{proof}
 
\fi

%
%
%
%

%

\section*{Acknowledgments}
We thank Oded Regev for helpful discussions. We also thank the anonymous reviewers for helpful suggestions.

\bibliographystyle{alphaabbrvprelim}

\end{document}

